\crefname{theorem}{Theorem}{Theorems}
\Crefname{lemma}{Lemma}{Lemmas}
\Crefname{claim}{Claim}{Claims}
\Crefname{fact}{Fact}{Facts}
\Crefname{remark}{Remark}{Remarks}
\Crefname{observation}{Observation}{Observations}
\Crefname{figure}{Figure}{Figures}
\Crefname{line}{Line}{Lines}
\Crefname{algocf}{Algorithm}{Algorithms}
\Crefname{stepsromani}{Step}{Steps}
\Crefname{stepsarabici}{Step}{Steps}
\newtheorem{theorem}{Theorem}
\newtheorem{lemma}[theorem]{Lemma}
\newtheorem{definition}[theorem]{Definition}
\newtheorem{remark}[theorem]{Remark}
\newtheorem{corollary}[theorem]{Corollary}
\newtheorem{assumption}[theorem]{Assumption}
\newtheorem{claim}[theorem]{Claim}
\setlist[enumerate,1]{label=(\roman*), leftmargin=2.2em}
\setlist[enumerate,2]{label=(\alph*)}
\setlist{nosep,topsep=0.1em}
\setlist[itemize,1]{label={\bfseries--}}
\newlist{stepsroman}{enumerate}{1}
\setlist[stepsroman]{label=(\roman*), leftmargin=2.2em}
\newlist{stepsarabic}{enumerate}{1}
\setlist[stepsarabic]{rightmargin=0.2em, label=\arabic*., ref=\arabic*}
\definecolor{darkblue}{rgb}{0,0,0.38}
\definecolor{darkred}{rgb}{0.6,0,0}
\definecolor{darkgreen}{rgb}{0.1,0.35,0}
\patchcmd\blx@bblinput{\blx@blxinit}
                      {\blx@blxinit
                      }{}{\fail}
\newcommand{\footremember}[2]{\footnote{#2}
    \newcounter{#1}
    \setcounter{#1}{\value{footnote}}}
\newcommand{\footrecall}[1]{\footnotemark[\value{#1}]}
 \newcommand{\linkdest}[1]{\Hy@raisedlink{\hypertarget{#1}{}}}
\newcommand{\TSP}{\protect\hyperlink{prb:TSP}{TSP}}
\newcommand{\PCTSP}{\protect\hyperlink{prb:PCTSP}{PCTSP}}
\newcommand{\PCST}{\protect\hyperlink{prb:PCST}{PCST}}
\newcommand{\Exp}{\mathbb{E}}
\newcommand{\Prob}{\mathbb{P}}
\newcommand{\odd}{\operatorname{odd}}
\addspace\mkbibbrackets{\thefield{eprintclass}}}}}
\addspace\mkbibbrackets{\thefield{eprintclass}}}}}
\newcommand{\labeltarget}[1]{\Hy@raisedlink{\hypertarget{#1}{}}}
\newcommand\appendtographicspath[1]{\g@addto@macro\Ginput@path{#1}}
\patchcmd{\@algocf@start}{\begin{lrbox}{\algocf@algobox}}{\rule{0.025\textwidth}{\z@}\begin{lrbox}{\algocf@algobox}\begin{minipage}{0.95\textwidth}}{}{}
\patchcmd{\@algocf@finish}{\end{lrbox}}{\end{minipage}\end{lrbox}}{}{}
\newcommand{\bb}[2][V_1]{\operatorname{backbone}_{#1}(#2)}
\newcommand{\lb}[2][V_1]{\operatorname{limbs}_{#1}(#2)}
\newcommand{\dd}{\operatorname{d}\!}
\newcommand{\PHK}{P_{\text{HK}}}
\newcommand{\PST}{P_{\text{ST}}}
\newcommand{\walks}{{\mathcal{W}_0}}
\newcommand{\spare}[1]{\text{spare}_{#1}}
\tikzstyle{Vertex}=[circle,fill=black,minimum size=6,inner sep=0pt]
\tikzstyle{Vertex2}=[draw=black,circle,fill=white,minimum size=4,inner sep=0pt]
\title{An improved approximation guarantee for Prize-Collecting TSP}
\author{Jannis Blauth\footremember{UBonnHCM}{Research Institute for Discrete Mathematics and Hausdorff Center for Mathematics, University of Bonn, Bonn, Germany.
Email:
\href{mailto:blauth@or.uni-bonn.de}{blauth@or.uni-bonn.de},
\href{mailto:mnaegele@uni-bonn.de}{mnaegele@uni-bonn.de}. The second author is supported by the Swiss National Science Foundation (grant no.\ P500PT\_206742) and the Deutsche Forschungsgemeinschaft (DFG, German Research Foundation) under Germany's Excellence Strategy~--~EXZ-2047/1~--~390685813.
}\and
Martin Nägele\footrecall{UBonnHCM}
}\date{}
\begin{document}

\maketitle

\begin{abstract}
We present a new approximation algorithm for the (metric) prize-collecting traveling salesperson problem (PCTSP).
In PCTSP, opposed to the classical traveling salesperson problem (TSP), one may	choose to not include a vertex of the input graph in the returned tour at the cost of a given vertex-dependent penalty, and the objective is to balance the length of the tour and the incurred penalties for omitted vertices by minimizing the sum of the two.
We present an algorithm that achieves an approximation guarantee of $1.774$ with respect to the natural linear programming relaxation of the problem.
This significantly reduces the gap between the approximability of classical TSP and PCTSP, beating the previously best known approximation factor of $1.915$.
As a key ingredient of our improvement, we present a refined decomposition technique for solutions of the LP relaxation, and show how to leverage components of that decomposition as building blocks for our tours. \end{abstract}

\bigskip

\thispagestyle{empty}
\addtocounter{page}{-1}

\begin{tikzpicture}[overlay, remember picture, shift = {(current page.south east)}]
\node[anchor=south east, outer sep=5mm] {
\begin{tikzpicture}[outer sep=0] \node (dfg) {\includegraphics[height=12mm]{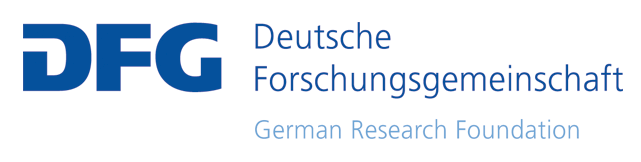}};
\node[left=5mm of dfg, yshift=1mm] (snf) {\includegraphics[height=8mm]{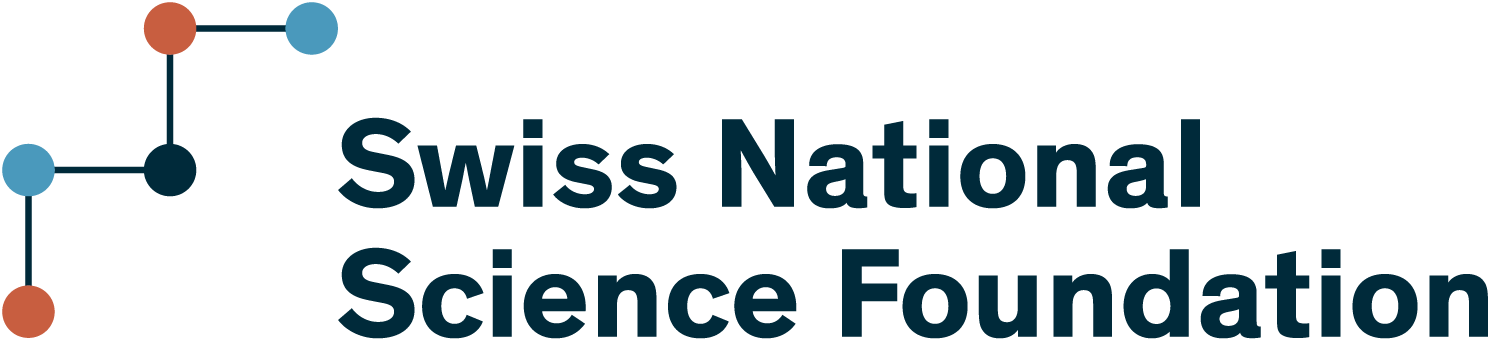}};
\end{tikzpicture}
};
\end{tikzpicture}

\newpage

\section{Introduction}

The classical (metric) traveling salesperson problem (\TSP{}\linkdest{prb:TSP}) is one of the most well-known problems in Combinatorial Optimization.
Given a complete undirected graph $G=(V,E)$ with metric edge lengths $c_e\in\mathbb{R}_{\geq 0}$ for all $e\in E$, the task is to find a shortest Hamiltonian cycle in $G$.
\TSP{} arises canonically in many applied settings, and it is thus not surprising that a vast number of variations of the classical problem are studied both from a theoretical and an applied perspective.
One very natural such variation that we focus on in this paper is \emph{prize-collecting TSP}.
\begin{mdframed}[userdefinedwidth=0.95\linewidth]
{\textbf{Prize-Collecting TSP (\PCTSP{}\linkdest{prb:PCTSP}):}}
Given a complete undirected graph $G=(V,E)$ with metric edge lengths $c_e\geq 0$ for all $e\in E$, a root $r \in V$, and penalties $\pi_v\geq 0$ for all $v\in V \setminus \{r\}$, the task is to find a cycle $C=(V_C, E_C)$ in $G$ that contains the root $r$ and minimizes
$$\sum_{e\in E_C} c_{e} + \sum_{v\in V\setminus V_C} \pi_v\enspace.$$
\end{mdframed}
In other words, for each vertex other than the root, we may decide between covering the vertex by the output cycle or dropping it at cost $\pi_v$, and the goal is to find a balance between the tour cost and incurred penalties so as to minimize the sum of the two.
We remark that this may lead to degenerate cycles $C$ that may contain only two vertices (or even the root only).
Many real-world applications of \TSP{} variants (e.g., in logistics) share the property that it is allowed to not visit some of the vertices at certain extra costs, and \PCTSP{} is one very natural way of modeling such degrees of freedom.

Clearly, \PCTSP{} is at least as hard as \TSP{}: Large enough penalties enforce visiting all vertices.
As a consequence, \PCTSP{} is \APX-hard, and one research focus lies on approximation algorithms.
For \TSP{}, despite extensive research, the $\sfrac32$-approximation algorithm of \textcite{christofides_1976_worst,serdyukov_1978_onekotorykh} (also see \cite{christofides_1976_worst_new,vanBevern_2020_historical}) persisted to be the best known for more than four decades, until a very recent breakthrough of \textcite{karlin_2021_slightly, karlin_derand}, who improved the factor to $\sfrac32-\varepsilon$ for some $\varepsilon>10^{-36}$.
For \PCTSP{}, the first constant factor approximation was given by \textcite{bienstock_1993_note} through a threshold rounding approach based on the following linear programming relaxation, where we use $\pi_r \coloneqq 0$ for convenience:\footnote{For $S\subseteq V$, we let $\delta(S)\coloneqq\{e\in E\colon |e\cap S|=|e\setminus S|=1\}$. Moreover, for $v\in V$, we use $\delta(v)\coloneqq \delta(\{v\})$.}
\begin{equation}\tag{PCTSP LP relaxation}\label{eq:rel_PCTSP}
\begin{array}{rrcll}
\min & \displaystyle \sum_{e\in E}c_ex_e + \sum_{v\in V}\pi_v (1-y_v) \\
     & x(\delta(v)) & = & 2y_v & \forall v\in V\setminus\{r\} \\
     & x(\delta(r)) & \le & 2 \\
     & x(\delta(S)) & \geq & 2y_v & \forall S\subseteq V\setminus\{r\}, v\in S\\
     & y_r & = & 1 \\
     & x_e & \geq & 0 & \forall e\in E\\
     & y_v & \in & [0,1] & \forall v\in V\enspace.
\end{array}
\end{equation}
Here, the variables $y_v$ can be interpreted to indicate the extent to which $v$ is covered by a fractional solution.
Note that we are using a version of the linear relaxation that includes the constraint $x(\delta(r))\leq 2$.
This does not change the optimal value, but, if $|V|\geq 2$, it guarantees that this \ref{eq:rel_PCTSP} generalizes the classical Held-Karp relaxation for \TSP{}, which is recovered by setting $y_v=1$ for all $v\in V$ (corresponding to the requirement that all vertices need to be fully connected to the root).

The threshold rounding algorithm of \cite{bienstock_1993_note} considers an optimal solution of the \ref{eq:rel_PCTSP}, and exploits the Christofides-Serdyukov algorithm to construct a tour on precisely those $v\in V$ with $y_v\geq\sfrac35$.
This tour is shown to have cost within a factor $\sfrac52$ of the value of the optimal LP solution, and thus also of the optimal tour.
Relying on an equivalent LP, \textcite{goemans_1995_general} later gave a $2$-approximation through a primal-dual algorithm.
By exploiting that the latter algorithm in fact loses different factors on the two parts of the objective, \textcite{archer_2011_improved} could break the barrier of $2$ and give a $\sfrac{97}{49}$-approximation algorithm.
Finally, combining a randomized analysis of the threshold rounding approach with the improved analysis of the primal-dual method, \textcite{goemans_2009_combining} obtained the currently best known guarantee of $\sfrac1{\left(1-\frac23e^{-\sfrac13}\right)}\approx 1.915$.
It is worth noting that the latter guarantee is still relative to the lower bound given by the \ref{eq:rel_PCTSP}, and it allows using LP-relative \TSP{} algorithms in a black-box way:
Given an algorithm for \TSP{} that outputs a $\beta$-approximate solution with respect to the standard Held-Karp relaxation, the approach of \textcite{goemans_2009_combining} returns a solution of cost at most $\sfrac1{(1-\frac1\beta e^{1-\sfrac2\beta})}$ times the cost of an optimum solution to the \ref{eq:rel_PCTSP}.

\subsection{Our results and contributions}

Our main result is to improve the approximation guarantee for \PCTSP{}, getting significantly closer to the best known approximation guarantee for \TSP{}.

\begin{theorem}\label{thm:improvedApproximation}
There is an LP-relative $1.774$-approximation algorithm for \PCTSP{}.
\end{theorem}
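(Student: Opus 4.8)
The plan is to improve on the best-of-many / threshold-rounding combination of \textcite{goemans_2009_combining}. Rather than handing an arbitrary vertex set---obtained from one global threshold---to a black-box LP-relative \TSP{} algorithm, I would first reshape an optimal LP solution into a rigid combinatorial skeleton, namely a \emph{backbone} spanning the vertices we commit to visiting together with a family of cheap \emph{limbs} hanging off it, and then build the tour directly around that skeleton, rounding the limbs more carefully than a single threshold would allow. Concretely, take an optimal vertex $(x^*,y^*)$ of the \ref{eq:rel_PCTSP}, fix a threshold parameter $\gamma$, round $y_v$ up to $1$ for every $v$ with $y^*_v\ge\gamma$ (which only lowers the penalty term), and set $V_1\coloneqq\{v:y^*_v=1\}$. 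On $V_1$ the restriction of $x^*$ dominates a point of $\PHK$; the fractional mass still attached to $V_1$ I would peel apart by iterated weighted splitting-off (\textcite{mader_1978_reduction,lovasz_1976_connectivity}, in the algorithmic forms of \textcite{complete_splitting,nagamochi_2006_fast}; cf.\ also \textcite{frank_1992_on,bang-jensen_1995_preserving}) into a backbone $\bb{x^*}$ carrying the connectivity among $V_1$ and a collection of limbs $\lb{x^*}$, each limb being a low-cost fractional structure---a point of a Steiner-tree-type polytope $\PST$---that connects a bundle of not-fully-covered vertices back to $V_1$ while respecting the LP degree constraints. The point is that the limbs can then be rounded almost independently, and that whatever edge budget the backbone leaves unused---in particular the slack in $x(\delta(r))\le 2$, the ``$\spare{r}$'' capacity---remains available for later parity corrections.

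To build the tour I would first obtain a Hamiltonian cycle on $V_1$ by sampling a spanning tree of $V_1$ with marginals given (after the usual scaling) by $\bb{x^*}$---a best-of-many-Christofides distribution---or simply by invoking the LP-relative \TSP{} algorithm of \textcite{karlin_2021_slightly}; this costs, in expectation, at most $\beta\cdot c(\bb{x^*})$ plus a \TSP{}-style parity correction, where $\beta$ is the \TSP{} ratio one plugs in. Next, for each limb I would decide by a dependent / pipage-type rounding (\textcite{ageev_2004_pipage,srinivasan_2011_distributions,chekuri_2010_dependent,calinescu_2011_maximizing,harvey_2014_pipage}) which of its fractional vertices to actually visit, keeping the marginal probability of visiting $v$ equal to $y^*_v$, and cover the selected vertices of a single limb together by one short walk from $\walks$ assembled out of that limb's fractional structure. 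Splicing these walks into the $V_1$-cycle produces a connected multigraph, which I make Eulerian with a cheap $Q$-join on its realized odd-degree set $Q$; here the dominant description $\PQJ$ is exactly what lets me certify the $Q$-join cost by exhibiting an explicit fractional $Q$-join built from a constant fraction of $x^*$ together with the leftover spare capacity. Shortcutting the resulting Eulerian walk gives the output tour, and vertices dropped along the way pay their penalty.

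For the analysis, the edge cost of the output is bounded by a weighted combination of $c(\bb{x^*})$, $c(\lb{x^*})$ and the $Q$-join cost, each of which is a fraction of $c(x^*)$ depending on $\gamma$, on $\beta$, and on a further parameter $b$ that controls how the spare capacity is divided between reconnecting limbs and the $Q$-join; the expected penalty, after randomizing $\gamma$ from a suitable distribution, is at most an appropriate $\gamma$-dependent factor (naively $\tfrac{1}{1-\gamma}$, by the ``drop everything below $\gamma$'' bound) times $\sum_v\pi_v(1-y^*_v)$. Collecting all terms gives a guarantee of the shape $\g$ relative to the optimal LP value, and \Cref{thm:improvedApproximation} then follows by choosing $\gamma$, $\beta$ (via \cite{karlin_2021_slightly}) and $b$ so as to minimize $\g$; the content of the theorem is that this minimum is at most $1.774$.

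The hard part will be making the decomposition and the parity correction compatible. After the splitting-off one needs the backbone and the limbs to hold, simultaneously: (i) enough connectivity that the grafted object is a single connected multigraph; (ii) $V_1$-degrees that are compatible with a Hamiltonian cycle plus its limb attachments; and (iii) enough residual fractional mass and spare capacity to pay for a $Q$-join on whatever odd set the dependent rounding happens to realize---all while the combined cost of backbone, limbs and correction stays a small enough fraction of $c(x^*)$ to drive $\g$ below the old $1.915$ barrier. Establishing that a decomposition with these properties exists (via uncrossing and splitting-off arguments) and then pushing the parity analysis through on the randomly realized graph is where essentially all the difficulty lies; the threshold randomization and the final numerical optimization of $\g$ are, by comparison, routine.
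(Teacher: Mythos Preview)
Your plan has the right high-level shape---LP solution, threshold, splitting-off based decomposition, dependent rounding, parity correction, randomized threshold---but it misses the structural idea that actually drives the bound below $1.915$, and with only the ingredients you list you will not reach $1.774$.

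First, the backbone/limb decomposition is not a global split of $x^*$ into one ``backbone object'' $\bb{x^*}$ on $V_1$ and a separate pile of limbs. In the paper the decomposition (\cref{lem:solution_decomposition}) produces a weighted family $\mathcal{T}$ of \emph{trees}, each tree $T$ having exactly two anchors in $V_1$; the backbone and limbs are defined \emph{per tree}, and the anchor edges $\{e_T\}$ together with weights $\mu_T$ form a point $z$ of $\PHK$ over $V_1$. The rounding does not first build a tour on $V_1$ and then graft limbs onto it. Instead, one pipage-rounds $z-\chi^{e_0}$ to a spanning tree of $V_1$ and replaces every sampled edge $e_T$ by a \emph{walk} between its anchors: always the backbone of $T$, and with probability $\tfrac34$ two additional copies of $\lb{T}$. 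Because each replacement is a walk with endpoints in $V_1$, every vertex of $V\setminus V_1$ automatically has even degree; parity correction is therefore a $Q$-join problem entirely inside $V_1$, and its cost is at most $\tfrac12 c^\top z$. Your plan of building the $V_1$-cycle first and then splicing separately rounded limbs gives no such parity guarantee, and your proposed $Q$-join certificate (``a constant fraction of $x^*$ plus spare capacity'') has to cope with odd degrees all over $V$, which is exactly what the paper avoids.

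Second, your penalty bound is the old one. You write that the expected penalty is ``naively $\tfrac{1}{1-\gamma}$'' times the LP penalty, and your limb rounding only aims to preserve the marginal $\Pr[v\text{ visited}]=y^*_v$. That alone reproduces the classical threshold-rounding analysis and cannot beat $1.915$. The improvement in the paper comes from the negative correlation of pipage rounding over the tree family: for $v\notin V_1$ one gets
\[
\Pr[v\notin H]\ \le\ \prod_{T\ni v}\Bigl(1-\tfrac34\mu_T\Bigr)\ \le\ \exp\!\Bigl(-\tfrac34\sum_{T\ni v}\mu_T\Bigr)\ =\ \exp\!\Bigl(-\tfrac{3y_v}{4}\Bigr),
\]
which after randomizing $\gamma$ yields the $1.774$ factor (\cref{thm:treeSamplingAnalysis,thm:randomizedAnalysis}). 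This requires that coverage of $v$ be decided by \emph{which trees are sampled}, not by a second, independent rounding of limbs; your two-stage scheme decouples these decisions and loses the product structure. Finally, plugging in the Karlin--Klein--Oveis~Gharan $\beta$ as a black box is not what buys the improvement here; the paper explicitly does not use \cite{karlin_2021_slightly} (beyond a remark that it would add only $10^{-36}$), and the gain comes entirely from the refined decomposition and the penalty bound above.
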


Our algorithm is similar in spirit to the classical threshold rounding approach that was leveraged in \cite{bienstock_1993_note} and \cite{goemans_2009_combining}, but overcomes one of its critical weaknesses:
In threshold rounding, the two decisions of
\begin{enumerate*}
\item which vertices to visit, and
\item how to build the tour
\end{enumerate*}
are taken independently.
Concretely, based on an optimal solution $(x^*,y^*)$ of the \ref{eq:rel_PCTSP} and a threshold $\gamma\in(0,1]$, the set $V_{\gamma}\coloneqq \{v\in V\colon y_v^*\geq \gamma\}$ of vertices that will be visited by the returned tour is fixed, and any LP-relative \TSP{} algorithm can be used to blindly build a tour on $V_\gamma$.
In our algorithm, to the contrary, choosing which vertices to visit and the actual construction of the tour are carefully intertwined.

Similarly to the threshold rounding algorithm, we also maintain a core set $V_\gamma$ of vertices that are guaranteed to be on the output tour.
Instead of building a tour from edges in $\binom{V_\gamma}{2}$, though, the building blocks of our tour are given by a set $\mathcal{W}$ of walks on $V$ having precisely their endpoints in $V_\gamma$ and that are pre-computed based on the solution $(x^*, y^*)$ of the \ref{eq:rel_PCTSP}.
Through a procedure based on randomized pipage rounding \cite{ageev_2004_pipage,srinivasan_2011_distributions, calinescu_2011_maximizing}, we sample a subset of these walks from $\mathcal{W}$ such that the resulting multigraph $H$ spans all vertices in $V_\gamma$, and its cost can be bounded with respect to the LP solution.
Crucially, through our construction, the graph $H$ will have odd degrees in $V_\gamma$ only, hence the cost of constructing an actual tour based on $H$ (typically called \emph{parity correction}) is not more than in the classical threshold rounding approach.
In comparison, the primal-dual approach by \textcite{goemans_1995_general} constructs a tree without taking care of the vertex parities and ends up with doubling the computed tree for parity correction.
Finally, our randomized tour construction allows us to analyze probabilities for vertices in $V\setminus V_\gamma$ to appear on the tour, and thus get bounds on expected penalties incurred for such vertices.
We also note that derandomizing our construction is possible using a  deterministic variant of pipage rounding.

Due to the refined tour construction, our approach as sketched above does not allow to use any LP-relative approximation algorithm for \TSP{} as a black box. We remark, though, that even if an LP-relative $\sfrac43$-approximation for \TSP{} (matching the current best lower bound on the integrality gap of the Held-Karp relaxation) was available, our approximation guarantee remains superior to the current best black-box algorithm by Goemans~\cite{goemans_2009_combining}. Furthermore, as mentioned in \cref{thm:improvedApproximation}, our analysis is still with respect to the optimal value of the \ref{eq:rel_PCTSP}, thereby immediately giving the subsequent corollary.

\begin{corollary}\label{cor:intGap}
The integrality gap of the \ref{eq:rel_PCTSP} is less than $1.774$.
\end{corollary}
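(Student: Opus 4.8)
The plan is to derive \cref{cor:intGap} directly from \cref{thm:improvedApproximation}. Recall that the integrality gap of the \ref{eq:rel_PCTSP} is, by definition, the supremum over all \PCTSP{} instances of the ratio $\mathrm{OPT}/\mathrm{OPT}_{\mathrm{LP}}$, where $\mathrm{OPT}$ is the cost of an optimal integral \PCTSP{} tour and $\mathrm{OPT}_{\mathrm{LP}}$ is the optimal value of the \ref{eq:rel_PCTSP}. The first step is to check that the \ref{eq:rel_PCTSP} is genuinely a relaxation, i.e. $\mathrm{OPT}_{\mathrm{LP}}\le\mathrm{OPT}$ on every instance: given any feasible \PCTSP{} tour $C=(V_C,E_C)$ through $r$, set $x_e$ to the multiplicity of $e$ in $C$, and $y_v\coloneqq 1$ for $v\in V_C$ and $y_v\coloneqq 0$ otherwise. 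Then the degree constraints $x(\delta(v))=2y_v$, the root constraint $x(\delta(r))\le 2$, and the cut constraints $x(\delta(S))\ge 2y_v$ all follow immediately from $C$ being a cycle containing $r$ (including the degenerate single- or two-vertex cases), and the objective value of this point equals $\sum_{e\in E_C}c_e+\sum_{v\notin V_C}\pi_v$, the cost of $C$. Hence $\mathrm{OPT}_{\mathrm{LP}}$ lower-bounds $\mathrm{OPT}$.

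I would then invoke \cref{thm:improvedApproximation}: on any instance, the algorithm produces a \PCTSP{} tour whose cost is at most the proven approximation factor times $\mathrm{OPT}_{\mathrm{LP}}$ (in the randomized formulation, in expectation, which already guarantees the existence of such a tour; after derandomization, with certainty). Since the factor actually established in the proof of \cref{thm:improvedApproximation} is a fixed constant that is strictly below $1.774$ — the number $1.774$ being a deliberate rounding-up of the exact, messier bound — we get $\mathrm{OPT}\le(\text{factor})\cdot\mathrm{OPT}_{\mathrm{LP}}<1.774\cdot\mathrm{OPT}_{\mathrm{LP}}$ on every instance (the degenerate case $\mathrm{OPT}_{\mathrm{LP}}=0$ forces $\mathrm{OPT}=0$ and causes no trouble). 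Taking the supremum over all instances, the integrality gap is bounded by that constant factor, hence is strictly less than $1.774$.

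There is essentially no obstacle here: the only point that needs a line of justification is that the \ref{eq:rel_PCTSP} lower-bounds the integral optimum, which is the routine verification in the first paragraph; the rest is an immediate consequence of the LP-relative nature of the guarantee in \cref{thm:improvedApproximation}. One could additionally remark that the bounding tour is efficiently computable, so the statement is constructive, but this is not needed for the purely existential claim about the integrality gap.
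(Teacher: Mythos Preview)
Your proposal is correct and matches the paper's approach: the paper simply states that the corollary follows immediately from \cref{thm:improvedApproximation} being LP-relative, and you have spelled out exactly that argument, including the routine check that the \ref{eq:rel_PCTSP} lower-bounds the integral optimum and the observation that the proved factor $\alpha$ is strictly below $1.774$.
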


\pagebreak

On the technical side, the main ingredient of our algorithm is the construction and composition of the set $\mathcal{W}$ of walks mentioned earlier.
We obtain appropriate walks through a decomposition of the $x$-part of a solution $(x,y)$ of the \ref{eq:rel_PCTSP}.
Our decomposition of $x\in\mathbb{R}^E$ can equivalently be seen as a packing result on the multigraph obtained through scaling up $x$ to an integral vector, and interpreting $x_e$ as an edge multiplicity for every $e\in E$.
Undoubtedly, the most prominent such packing theorem is Edmond's result on packing disjoint spanning arborescences \cite{edmonds_1973_edge}:
A digraph $D=(V,A)$ admits a packing of $k$ spanning arborescences rooted at $r$ if and only if it is rooted $k$-edge connected, i.e., all $r$-$v$ cuts have size at least $k$ for all $v\in V$.
Due to the typically uniform connectivity requirement, such results may well be used to decompose solutions of the Held-Karp relaxation (see \cite{genova_2017_experimental} for such an application).
In order to decompose a solution of the \ref{eq:rel_PCTSP}, where the variables $y_v$ indicate the non-uniform connectivities to the root $r$, we generalize ideas behind one of few results with non-uniform connectivity requirements due to \textcite{bang-jensen_1995_preserving}.
Concretely, we revise their splitting-off based construction to step away from decompositions into components that span the full subset of vertices with maximum connectivity (i.e., in our setting, those vertices with $y_v=1$), and instead only anchor each of the components at exactly two vertices in this subset, which we call the \emph{anchors} of the component.
This anchoring has the advantage that we can split each component of the decomposition into a path connecting its anchors (its \emph{backbone}) and trees incident to this path (its \emph{limbs}). By deleting or doubling the limbs we get even degree on all vertices except the anchors, which will prove useful to bound the cost of parity correction as mentioned earlier.

Besides being key to our improved approximation factor for \PCTSP{}, we remark that a common special case of our result and the one of \textcite{bang-jensen_1995_preserving} gives rise to simple algorithms with approximation guarantees matching those of the primal-dual approach of \textcite{goemans_1995_general}, both for \PCTSP{} and the related \emph{prize-collecting Steiner tree} problem, which is the following.

\begin{mdframed}[userdefinedwidth=0.95\linewidth]
{\textbf{Prize-Collecting Steiner Tree (\PCST{}\linkdest{prb:PCST}):}}
Given a graph $G=(V,E)$ with edge lengths $c_e\geq 0$ for all $e\in E$, a root $r \in V$, and penalties $\pi_v\geq 0$ for all $v\in V \setminus \{r\}$, the task is to find a tree $T=(V_T, E_T)$ in $G$ that contains the root $r$ and minimizes
$$\sum_{e\in E_T} c_{e} + \sum_{v\in V\setminus V_T} \pi_v\enspace.$$
\end{mdframed}
The classical LP formulation for \PCST{} is almost identical to the \ref{eq:rel_PCTSP}, except that the degree constraints are dropped, and the coefficients $2$ on the right hand side of the cut constraints are replaced by a~$1$.
For \PCTSP{} and \PCST{}, the algorithms of \citeauthor{goemans_1995_general} based on their primal-dual framework return cycles and trees, respectively, with length at most twice the $x$-cost of an optimum LP solution, and total penalty no more than once the $y$-cost of an optimum LP solution.
Matching the resulting guarantee on the sum of both, we present simple $2$-approximations for either of the problems.
We note that the currently best known approximation factor for \PCST{} is $1.968$ \cite{archer_2011_improved}, hence the proposed algorithm is inferior.
One particular reason why our new techniques for \PCTSP{} do not lead to an improved approximation guarantee for \PCST{} is that the LP formulation for \PCST{} has a rather weak integrality gap of $2$.
Still, we believe that our techniques enrich the study and may prove useful again also for \PCST{}, potentially in the context of provably stronger LP relaxations for \PCST{}.
One natural option is to adapt the bidirected cut relaxation for the Steiner tree problem to \PCST{}. However, its a big open question whether the integrality gap of the bidirected cut relaxation is strictly less than $2$ (see, e.g., \cite{rajagopalan_1999_bidirected, chakrabarty_2008_new}).

\subsection{Further related work}

The prize collection traveling salesperson problem discussed in this paper can be considered the main variant among several problems of similar nature that emerged from work of \citeauthor{balas_1989_prize} on a model for scheduling the daily operation of a steel rolling mill \cite{balas_1989_prize}.
In his original problem formulation, on top of the \PCTSP{} definition used here, there was a prize money $w_v \geq 0$ for every $v\in V \setminus \{r\}$ that the traveling salesperson collects upon visiting $v$, and a constraint that a prescribed amount $w_0$ of total prize money should be collected.
To distinguish from the \PCTSP{} formulation used here, we refer to Balas' formulation as the \emph{Quota PCTSP}, also in view of the special case of \emph{Quota TSP} that is obtained by setting $\pi_v=0$ for all $v\in V \setminus \{r\}$.
Quota TSP admits a $5$-approximation \cite{ausiello_2000_salesmen}, and by concatenating solutions of the Quota TSP and the \PCTSP{} subproblem, a constant factor approximation for Quota PCTSP is readily obtained \cite{ausiello_2007_prize}.
The special case of Quota TSP where $w_v = 1$ for each $v \in V \setminus \{r\}$ and $w_0=k$ for an integer $k$ is typically referred to as \emph{$k$-TSP} in the literature, and admits a $2$-approximation \cite{garg_2005_saving}.
Another budget-driven version of \PCTSP{} is \emph{budgeted prize-collecting TSP}.
In this problem, an upper bound on the allowed traveling distance is given, and the goal is to maximize the sum of all collected prizes.
\textcite{paul_2020_budgeted} provide a $2$-approximation algorithm for this problem, again assuming metric edge lengths as in all these variants.
Searching for a path instead of a cycle, and again imposing an upper bound on the traveling distance, results in the \emph{Orienteering} problem, which is as well widely studied in the literature in terms of approximation algorithms (e.g., \cite{blum_2003_approx,chekuri_2012_improved,dezfuli_2022_combinatorial}).

\PCTSP{} was also studied in special metric spaces.
For graph metrics in planar graphs, \textcite{bateni_2011_planar} show the existence of a PTAS. For Euclidean distances, a PTAS is known as well \cite{chan_2020_unified}.
For the asymmetric version of \PCTSP{}, \textcite{nguyen_2013_asymmetric} provides a $\lceil \log(n) \rceil$-approximation algorithm, where $n$ denotes the number of vertices.

Prize-collecting variations are also popular with other network design problems, in particular the Steiner tree problem mentioned above.
A further generalization of this problem is the \emph{prize-collecting Steiner forest problem} (also called prize-collecting generalized Steiner tree problem), in which we are given penalties for vertex pairs that need to be paid in case the vertices are not connected in the output graph.
By using a straightforward adaption of the threshold rounding approach from \cite{goemans_2009_combining}, one can get a $\sfrac{1}{ (1-e^{-\sfrac12})}\approx 2.541$-approximation \cite{hajiaghayi}.
This also upper bounds the integrality ratio of the natural LP relaxation for this problem.
\textcite{koenemann_2017_intgap} showed a lower bound of $\sfrac94$ on the integrality gap, which exceeds the integrality gap of $2$ for the corresponding LP relaxation of the \emph{Steiner forest problem}.

\subsection{Organization of the paper}

In \cref{sec:our_approach}, we give a detailed overview of a randomized version of our new algorithm, including in particular our decomposition lemma, the key technical tool that allows for our improved approximation guarantee.
\cref{sec:21} showcases the simple $2$-approximation algorithms for \PCTSP{} and \PCST{}.
In \cref{sec:splitting}, we give a proof of the decomposition lemma.
\cref{sec:randomization}, besides providing details on a randomized step of our algorithm, shows how to transform our approach to a deterministic one at no qualitative loss.
Finally, we conclude the paper with a deferred technical proof in \cref{sec:randomized_analysis}.

\section{Our approach}\label{sec:our_approach}

On a high level, we obtain our improved approximation guarantee for \PCTSP{} by following a common recipe for constructing tours:
\begin{enumerate*}
\item Construct a connected subgraph $H$ of the input graph (typically a spanning tree),
\item add a shortest $\odd(H)$-join $J$ to $H$,\footnote{We denote by $\odd(H)$ the set of all vertices of odd degree in $H$. Moreover, we remind the reader that for $Q\subseteq V$ with $|Q|$ even, a \emph{$Q$-join} is a set of edges with odd degrees precisely at vertices in $Q$.
}
and
\item return the cycle $C$ obtained by shortcutting an Eulerian tour in $H\cup J$.
\end{enumerate*}
In the final step, \emph{shortcutting} is to skip vertices that have been visited already when traversing the tour, i.e., resulting in an actual cycle.
To get bounds on the length of the thereby constructed cycle, it is enough to bound the length of $H$ and $J$.

A classical incarnation of the above recipe is the Christofides-Serdyukov algorithm for \TSP{} on a complete graph $G=(V,E)$ with metric edge lengths $c\colon E\to\mathbb{R}_{\geq 0}$.
\textcite{wolsey_1980_heuristic} proposed the following LP-based analysis:
For any point $x$ feasible for the \emph{Held-Karp relaxation}, whose feasible region is given by
\begin{equation*}
\PHK(G) \coloneqq \left\{x\in\mathbb{R}^E_{\geq 0}\colon
\begin{array}{rl}
	x(\delta(v)) =2\ & \forall v \in V \\
	x(\delta(S))\geq 2\ & \forall S\subsetneq V,\, S\neq\emptyset
\end{array}\right\}\enspace,
\end{equation*}
one can prove that a shortest spanning tree on $V$ has cost at most $c^\top x$, and a shortest $\odd(T)$-join has cost at most $\frac12c^\top x$, therefore proving a $\sfrac32$-approximation guarantee with respect to a shortest solution $x\in\PHK$.

\subsection{Review of the classical threshold rounding algorithm}\label{sec:classical_threshold_rounding}

Wolsey's analysis can be leveraged for an analysis of the threshold rounding algorithm by \textcite{bienstock_1993_note} as follows.
Starting from a solution $(x,y)$ of the \ref{eq:rel_PCTSP}, let $V_1\coloneqq \{v\in V\colon y_v=1\}$.
Through applying \emph{splitting off} techniques to $x$ (see \cref{sec:splitting} for details), we can obtain a point $z\in\PHK(G_1)$ of total length at most $c^\top x$, where $G_1=(V_1,E_1)$ is the complete graph on the vertex subset $V_1$.
Hence, Wolsey's analysis shows that we can get a cycle on $V_1$ of length at most $\frac32 c^\top z\leq \frac32 c^\top x$.
Because the loss in the penalty term incurred by this cycle may be arbitrarily large compared to the LP penalty $\sum_{v\in V}(1-y_v)\pi_v$, the idea of threshold rounding is to not directly apply the above reasoning to an optimal solution of the \ref{eq:rel_PCTSP}, but a scaled version thereof that can be obtained through the following lemma.

\begin{lemma}\label{lem:PCTSP_sol_properties}
	Let $(x, y)$ be a feasible solution of the \ref{eq:rel_PCTSP}.
	For every $\lambda\geq 0$, we can efficiently obtain a feasible solution $(x_\lambda, y_\lambda)$ of the \ref{eq:rel_PCTSP} such that $c^\top  x_{\lambda}\leq \lambda\cdot c^\top x$ and
	$$ y_{\lambda,v} = \min\{1, \lambda y_v\}\quad\text{for all } v\in V \enspace. $$
\end{lemma}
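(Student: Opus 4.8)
The plan is to realise $(x_\lambda,y_\lambda)$ as an optimal solution of a \emph{cut-only} relaxation and then use the parsimonious property to reinstate the degree constraints for free. Put $y_{\lambda,v}\coloneqq\min\{1,\lambda y_v\}$ for all $v\in V$; for $\lambda\ge 1$ this is consistent with $y_{\lambda,r}=1$, which is the regime of interest, and for $\lambda\le 1$ the claim is immediate by taking $x_\lambda\coloneqq\lambda x$ (with $y_{\lambda,r}=1$), so we henceforth assume $\lambda\ge 1$. Consider the relaxation obtained from the \ref{eq:rel_PCTSP} by substituting $y_\lambda$ for $y$ and dropping every degree constraint; its cut constraints read $x'(\delta(S))\ge 2\max_{v\in S}y_{\lambda,v}$ for $\emptyset\ne S\subseteq V\setminus\{r\}$, together with $x'\ge 0$. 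This is exactly the LP relaxation of a survivable-network problem with connectivity requirements $r_{uv}\coloneqq 2\min\{y_{\lambda,u},y_{\lambda,v}\}$, adopting the convention $y_{\lambda,r}=1$: for $S\subseteq V\setminus\{r\}$ the requirement $\max_{u\in S,\,w\notin S}r_{uw}$ equals $2\max_{v\in S}y_{\lambda,v}$ (attained by pairing with $r$), matching the cut constraint inherited from the \ref{eq:rel_PCTSP}, and the cuts with $r\in S$ are the same constraints by complementation.

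First I would check that $\lambda x$ is feasible for this cut-only relaxation. For $\emptyset\ne S\subseteq V\setminus\{r\}$, the cut constraints of the \ref{eq:rel_PCTSP} for $(x,y)$ give $x(\delta(S))\ge 2\max_{v\in S}y_v$, whence $(\lambda x)(\delta(S))=\lambda\,x(\delta(S))\ge 2\max_{v\in S}\lambda y_v\ge 2\max_{v\in S}\min\{1,\lambda y_v\}$, using $\min\{1,t\}\le t$. Thus the optimum value of the cut-only relaxation is at most $c^\top(\lambda x)=\lambda\,c^\top x$.

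Next I would invoke the parsimonious property of \textcite{parsi}: since $S\mapsto\max_{u\in S,\,v\notin S}r_{uv}$ is a maximum-requirement (hence proper) function and $c$ is metric, imposing the degree equalities $x'(\delta(v))=\max_u r_{uv}$ for all $v\in V$ does not change the optimum value. Here $\max_u r_{uv}=2y_{\lambda,v}$ for $v\ne r$, and $\max_u r_{ru}=2\max_u y_{\lambda,u}\le 2$. Let $x_\lambda$ be an optimal solution of this degree-constrained LP. Then $x_\lambda$ has degree exactly $2y_{\lambda,v}$ at each $v\ne r$ and degree $2\max_u y_{\lambda,u}\le 2$ at $r$; it satisfies all cut constraints of the \ref{eq:rel_PCTSP} (which are among the requirements above); and $c^\top x_\lambda\le\lambda\,c^\top x$ by the previous step. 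Since also $y_{\lambda,v}\in[0,1]$ and $y_{\lambda,r}=1$, the pair $(x_\lambda,y_\lambda)$ is a feasible solution of the \ref{eq:rel_PCTSP} with the asserted properties. For efficiency, the LPs above admit a polynomial-time minimum-cut separation oracle, hence are solvable via the ellipsoid method; alternatively, the constructive splitting-off proof of the parsimonious property produces $x_\lambda$ combinatorially in polynomial time.

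The point that needs care is the asymmetric role of the root: the \ref{eq:rel_PCTSP} bounds $x(\delta(r))$ only from above (by $2$) and imposes cut constraints only for $S\subseteq V\setminus\{r\}$, whereas the parsimonious property is naturally stated for a symmetric requirement function and pins down a specific root degree. The resolution is to (i)~pick $r_{uv}=2\min\{y_{\lambda,u},y_{\lambda,v}\}$ with $y_{\lambda,r}=1$, so that the induced cut constraints coincide with those of the \ref{eq:rel_PCTSP}; (ii)~observe that the root degree the property forces equals $2\max_u y_{\lambda,u}\le 2$, respecting the upper bound at $r$; and (iii)~exploit that $\lambda x$ — which violates degree constraints, in particular possibly the one at $r$, but satisfies every cut constraint — still certifies the cost bound once parsimony allows us to add the degree constraints back at no cost. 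A minor thing worth double-checking is that the cited form of the parsimonious property applies to these fractional, root-anchored maximum requirements, which it does since they define a proper function.
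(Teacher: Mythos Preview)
Your argument is correct. You invoke the parsimonious property of \textcite{parsi} as a black box: after observing that $\lambda x$ satisfies the cut-only relaxation for the requirements $r_{uv}=2\min\{y_{\lambda,u},y_{\lambda,v}\}$, parsimony lets you reinstate degree equalities at no increase in optimum value, and the resulting optimal solution $x_\lambda$ has the right degrees and cost bound. The paper takes a more hands-on route: it starts from $(\lambda x,y_\lambda)$ and, for every vertex $s$ at which the degree constraint is violated, applies feasible splitting-off operations at $s$ (\cref{thm:complete_splitting}) until the degree drops to $2y_{\lambda,s}$, checking along the way that no degenerate splittings occur and hence all other degrees and cut values are preserved.

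The two arguments are close in spirit---the parsimonious property is itself proved via splitting off---but they differ in packaging. Your proof is shorter and cleaner once the cited theorem is granted, and the paper explicitly acknowledges (just after the lemma statement) that this is how \cite{bienstock_1993_note} originally derived the result for optimal LP solutions. The paper's direct construction, on the other hand, is self-contained and makes the algorithm explicit without appealing to an auxiliary LP optimum; it also avoids the small side discussion you needed about the root's degree in the symmetric-requirement formulation. Either way the dependence on the triangle inequality is the same, entering through the cost monotonicity of splitting off.
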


Simply speaking, \cref{lem:PCTSP_sol_properties} states that by investing a factor of $\lambda$ more on the $x$-part of the solution, we can also boost the connectivities $y$ by the same factor.
We remark that \cref{lem:PCTSP_sol_properties} was already proved in \cite{bienstock_1993_note} for optimal solutions of the \eqref{eq:PCST_relaxation} using the parsimonious property introduced by \textcite{parsi}; we repeat it here for general feasible solutions $(x,y)$, and provide a short proof in \cref{sec:splitting} for completeness.
To complete the argument for threshold rounding started above, let $(x^*, y^*)$ be an optimal solution of the \ref{eq:rel_PCTSP}, and for a threshold $\gamma\in(0,1]$, let $(x,y)$ be the solution obtained from $(x^*, y^*)$ through \cref{lem:PCTSP_sol_properties} with $\lambda=\sfrac1\gamma$.
We have seen that we can obtain a cycle $C=(V_C, E_C)$ of cost
\begin{equation}\label{eq:thresholding_tourCost}
c(E_C)\leq \frac32c^\top x \leq \frac3{2\gamma}\cdot c^\top x^*
\end{equation}
on $V_1=\{v\in V\colon y_v = 1\}=\{v\in V\colon y_v^*\geq \gamma\}$.
Additionally, we can now bound the penalty by
$
\pi(V\setminus V_C) = \sum_{v\notin V_1}\pi_v \leq \frac{1}{1-\gamma}\cdot \sum_{v\in V}(1-y_v^*)\pi_v
$.
Choosing $\gamma=\sfrac35$ balances the coefficients $\sfrac3{2\gamma}$ and $\sfrac1{(1-\gamma)}$, and gives the upper bound of $\sfrac52$ on the approximation factor, as showed in \cite{bienstock_1993_note}.

\subsection{A refined tour construction}\label{sec:decomposition}

As already mentioned in the introduction, our refined tour construction allows us to include some vertices in the output cycle that are not contained in our core set $V_1$ while preserving the upper bound on the length of the cycle.
To do so, we exploit a weak spot in the analysis of threshold rounding, namely the slack between the cost $c^\top z$ of the point $z\in\PHK(G_1)$ defined in \cref{sec:classical_threshold_rounding} and the cost $c^\top x$.
Transforming $x$ to $z\in\PHK(G_1)$ is done by splitting off operations.
In a single such operation, for some $\delta>0$, we decrease the value of $x$ on two well-chosen edges $\{v,u\}$ and $\{v,w\}$ by $\delta$, and increase the value on $\{u,w\}$ by $\delta$.
Thereby, the total cost decreases by $\delta c_{\{v,u\}} + \delta c_{\{v,w\}} - \delta c_{\{u,w\}}$, which is non-negative because $c$ satisfies the triangle inequality.
Viewed differently, we may exploit a budget of $c_{\{v,u\}} + c_{\{v,w\}}$ instead of $c_{\{u,w\}}$ for a $\delta$-fraction of the edge $\{u,w\}$ in $z$, i.e., if $\{u,w\}$ is chosen to be part of a tour, we can afford to replace it by the path $u$-$v$-$w$ with a certain probability, thereby gaining extra coverage of the vertex $v$ while maintaining the total cost bound $c^\top x$ in expectation.
This idea motivates backtracking the splitting off operations done to reach $z$, so that we can assign a certain budget to every component of $z$.
Doing so in a careful way results in our main technical lemma, which is \cref{lem:solution_decomposition} below (a formal proof is deferred to \cref{sec:splitting}).
Before actually stating the lemma, let us discuss an assumption on feasible solutions of the \ref{eq:rel_PCTSP} that we enforce repeatedly throughout the paper and which will prove useful later on.

\begin{assumption}\label{ass:e_0}
For the feasible solution $(x,y)$ of the \ref{eq:rel_PCTSP} and $V_1\coloneqq \{v\in V\colon y_v=1\}$, there is an edge $e_0 \in \binom{V_1}{2}$ with $x_{e_0}\geq1$.
\end{assumption}

We remark that \cref{ass:e_0} can be made without loss of generality.
Indeed, we can leverage a trick used previously, for example by \textcite{karlin_2021_slightly}, namely dividing the root node $r$ and all incident edges into two (with penalty zero and equal edge costs, respectively), and adding an edge $e_0$ of length $c(e_0)=0$ between the root and its copy.
Note that an LP solution for the original instance can be transformed to an LP solution of the same value for the thereby obtained auxiliary instance by setting $y_{v}=1$ for the copy of the root, $x_{e_0}=2-\frac{x(\delta(r))}{2}$, and distributing the $x$-weight on every edge incident to the root equally to its two copies in the auxiliary graph.
(Note that $x_{e_0} > 1$ if $x(\delta(r)) < 2$.)
Conversely, by shortcutting, we can transform a \PCTSP{} solution in the auxiliary instance to a \PCTSP{} solution of at most the same value in the original instance.

\begin{lemma}[Decomposition lemma]\label{lem:solution_decomposition}
	Let $(x, y)$ be a feasible solution of the \ref{eq:rel_PCTSP} satisfying \cref{ass:e_0} with edge $e_0$.
	We can in polynomial time construct a set $\mathcal{T}$ of trees and weights $\mu \in [0,1]^\mathcal{T}$ with the following properties:
	\begin{enumerate}

		\item\label{item:solution_partitioning} The solution $x$ is a conic combination of the trees in $\mathcal{T}$ with weights $\mu$ and the edge $e_0$, i.e.,
		$$
		x = \sum_{T\in \mathcal{T}}\mu_T\chi^{E[T]} + \chi^{e_0} \enspace.
		$$

		\item\label{item:incident_trees} For every $v\in V \setminus V_1$,
		$$
		\sum_{T\in \mathcal{T}\colon v\in V[T]} \mu_T = y_v \enspace.
		$$

		\item\label{item:anchor_points} For every $T\in\mathcal{T}$, we have $|V[T]\cap V_1|=2$, and we call the vertices in $V[T]\cap V_1$ the \emph{anchors} of $T$.

		\item\label{item:subtour_property} For $T \in \mathcal{T}$, let $e_T\coloneqq V[T]\cap V_1$ denote the edge joining the anchors of $T$, and let $G_1\coloneqq (V_1, E_1)$ be the multigraph with edge set $E_1\coloneqq \{e_0\} \cup \{e_T\colon T\in \mathcal{T}\}$.
		Then
		$$
		z \coloneqq \sum_{T\in \mathcal{T}}\mu_T\chi^{e_T} + \chi^{e_0} \in \PHK(G_1)\enspace.
		$$
	\end{enumerate}
\end{lemma}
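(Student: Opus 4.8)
The plan is to prove \cref{lem:solution_decomposition} by \emph{backtracking} the splitting-off operations that reduce $x$ to a point of $\PHK(G_1)$, recording the expansion of each edge obtained along the way as a tree. This follows the splitting-off based strategy of \textcite{bang-jensen_1995_preserving}, but instead of merging the pieces into components that span all of $V_1$, the bookkeeping keeps each piece anchored at only two vertices of $V_1$.

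First I would fix the combinatorial model. As discussed below \cref{ass:e_0}, we may assume that $(x,y)$ comes from the auxiliary instance, so that $r\in V_1$ and $x(\delta(v))=2$ for every $v\in V_1$, besides $x_{e_0}\geq 1$. After a harmless rational perturbation, choose a large even integer $N$ with $Nx$ integral and let $\hat G$ be the multigraph on $V$ with $Nx_e$ parallel copies of each edge $e$. Two facts drive the argument: $\hat G$ is \emph{Eulerian} (every degree $2Ny_v$ is even), hence every cut of $\hat G$ is even and in particular $\hat G$ is bridgeless; and, from the cut constraints of the \ref{eq:rel_PCTSP} together with $y_v=1$ on $V_1$ (taking complements whenever $r$ lies on the side not considered), $\lambda_{\hat G}(u,w)\geq 2N$ for all $u,w\in V_1$. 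Finally, designate one of the $Nx_{e_0}\geq N$ copies of $e_0$ as \emph{frozen}; it will account for the $\chi^{e_0}$ term and will never be touched, as it joins two vertices of $V_1$.

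Next I would eliminate $V\setminus V_1$ by iterated complete splitting-off while carrying a family of trees. Order $V\setminus V_1=\{v_1,\dots,v_m\}$. As an invariant, after the first $i$ vertices have been processed the current multigraph $\hat G_i$ lives on $V_1\cup\{v_{i+1},\dots,v_m\}$, is Eulerian, satisfies $\lambda_{\hat G_i}(u,w)\geq 2N$ for $u,w\in V_1$, and each of its edges $f$ carries a subtree $\tau_f\subseteq \hat G$ whose two leaves outside $\{v_1,\dots,v_i\}$ are the endpoints of $f$, so that $\hat G$ is the disjoint union of the frozen copy of $e_0$ and all the $\tau_f$; in addition, for each already-processed $v$ the tree-copies containing $v$ have total multiplicity $Ny_v$. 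To process $v_i$: its degree $2Ny_{v_i}$ is even and $\hat G_{i-1}$ is Eulerian, hence bridgeless, so by polynomial-time complete splitting-off for Eulerian graphs \cite{lovasz_1976_connectivity,mader_1978_reduction,frank_1992_on,complete_splitting,nagamochi_2006_fast} there is a complete admissible splitting at $v_i$ preserving $\lambda(u,w)$ for all $u,w\neq v_i$. Replacing each paired pair $\{fv_i,v_ig\}$ by an edge between the far ends of $f$ and $g$ and gluing $\tau_f$ and $\tau_g$ at $v_i$ maintains every invariant except that the glued objects are a priori only walks. This is exactly where anchoring at two vertices (rather than spanning $V_1$) creates room: using the freedom in which admissible splitting to perform, together with an exchange/uncrossing argument in the spirit of \textcite{bang-jensen_1995_preserving}, I would always pick a splitting in which no pair consists of two parallel edges (which the cut and degree constraints of the LP prevent from being forced) and, whenever two paired subtrees share a vertex other than $v_i$, resolve the resulting cycle by rerouting the surplus into a \emph{limb} — a subtree attached to the backbone path between the two anchors, all of whose remaining vertices lie in $V\setminus V_1$ — chosen so that the per-vertex multiplicity count $Ny_v$ is undisturbed. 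This keeps each $\tau_f$ a genuine tree.

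Finally I would read off the output and verify the four properties. Let $\hat G_m$ be the final Eulerian multigraph on $V_1$; take $\mathcal T$ to be the multiset $\{\tau_f:f\in E(\hat G_m)\}$ with the frozen $e_0$ deleted, identify parallel identical trees and split any tree of weight exceeding $1$ into half-weight copies, where $\mu_T$ is the multiplicity of $T$ divided by $N$, so that $\mu\in[0,1]^{\mathcal T}$. Then \cref{item:solution_partitioning} is the invariant $\hat G=(\text{frozen }e_0)\sqcup\bigsqcup_f\tau_f$ divided by $N$; \cref{item:anchor_points} holds because the only $V_1$-vertices of $\tau_f$ are its two ends, which are distinct since no loop was created; \cref{item:subtour_property} holds because contracting every $\tau_f$ onto the edge $e_{\tau_f}$ recovers exactly $\hat G_m$, whose $V_1$-degrees are all $2N$ (degrees are untouched by splitting) and whose $V_1$-cuts all have value at least $2N$, so $z=\hat G_m/N\in\PHK(G_1)$; and \cref{item:incident_trees} is the per-vertex multiplicity invariant restated after dividing by $N$. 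The main obstacle is precisely the middle step: steering the iterated splittings so that the recorded structures never stop being trees and so that each vertex of $V\setminus V_1$ is absorbed into a single component with exactly two $V_1$-anchors and the right coverage, which is what forces the careful, Bang-Jensen--Frank--Jackson-style control over \emph{which} admissible splitting to use rather than an off-the-shelf appeal to the theorems of Lovász or Mader.
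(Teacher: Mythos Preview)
Your high-level strategy is the right one---iteratively split off the vertices of $V\setminus V_1$, associate a tree to every surviving edge, and read off the four properties from the invariants---and this is also what the paper does. But the crucial middle step, where you write ``resolve the resulting cycle by rerouting the surplus into a limb \ldots{} chosen so that the per-vertex multiplicity count $Ny_v$ is undisturbed'', is exactly the heart of the lemma, and the proposal does not say how to do it. When $\tau_f$ and $\tau_g$ already share a processed vertex $w\neq v_i$, gluing them at $v_i$ drops the number of trees through $w$ by one; no choice of admissible splitting at $v_i$ can prevent this in general, because the splitting only sees the current endpoints of $f$ and $g$, not the internal vertices of $\tau_f$ and $\tau_g$. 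So the ``freedom in which admissible splitting to perform'' is not the right lever, and an uncrossing argument at $v_i$ alone cannot restore the count for $w$.

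The paper's fix has two ingredients that your plan is missing. First, the vertices of $V\setminus V_1$ are processed in \emph{increasing} order of $y_v$, and the argument runs backward: one first splits down to $V_1$, takes the trivial one-edge-per-tree decomposition there, and then undoes the splittings one vertex $s$ at a time. Second, when undoing a split $(\{s,u\},\{s,v\},\delta)$ at a tree $T$ that already contains $s$, the paper keeps $T$ acyclic by inserting only one of $\{s,u\},\{s,v\}$ and books the shortfall as a \emph{spare} at the other endpoint, say $w$. The spare is then discharged by attaching the single edge $\{s,w\}$ to some other tree that contains $w$ but not $s$. Such trees exist in sufficient total weight precisely because $s$ was chosen with minimum $y$-value: one has $\sum_{T\ni w}\mu_T\geq y_w\geq y_s=\sum_{T\ni s}\mu_T+\sum_v \operatorname{spare}_v$. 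This is the step that simultaneously preserves acyclicity, the anchor condition, and the exact coverage $\sum_{T\ni v}\mu_T=y_v$; without the minimum-$y$ ordering it fails. Your proposal should incorporate this ordering and the spare mechanism (or an equivalent device) rather than trying to steer the admissible splitting itself.
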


Note that \cref{item:incident_trees} in \cref{lem:solution_decomposition} guarantees that the connectivity $y_v$ of vertices $v\in V\setminus V_1$ is reflected by the (weighted) number of trees incident to $v$ in our decomposition.
The latter will be crucial to obtain bounds on the penalties paid by our solutions for vertices in $V\setminus V_1$, and guaranteeing this property requires caution when constructing the trees in $\mathcal{T}$.
We also remark that in our proof, the point $z\in\PHK(G_1)$ in \cref{lem:solution_decomposition}\,\ref{item:subtour_property} will be a point that can be obtained from $x$ through splitting off operations.
\cref{fig:treeDecomposition} shows an example of our decomposition.

\begin{figure}[ht]
\hfill
\begin{subfigure}[t]{.4\linewidth}
\centering
\begin{tikzpicture}

\begin{scope}[scale = 3, shift={(0,-0)},rotate=90]
\node[Vertex] (v) at (0,0) {};
\node[Vertex] (u) at (0,2) {};

\node[above left=-4pt] at (0.05,0) {$v$};
\node[above right=-4pt] at (0.05,2) {$u$};

\node[Vertex2] (r) at (0,0.6) {};
\node[Vertex2] (l) at (0,1.4) {};
\node[Vertex2] (rb) at (-0.1,0.8) {};
\node[Vertex2] (rt) at (0.1,0.8) {};
\node[Vertex2] (lb) at (-0.1,1.2) {};
\node[Vertex2] (lt) at (0.1,1.2) {};

\draw[thick] (u)--(l);
\draw[thick] (v)--(r);
\draw[thick] (lt)--(rt);
\draw[thick] (lb)--(rb);

\draw[thick,dashed] (l)--(lb)--(lt)--(l);
\draw[thick,dashed] (r)--(rb)--(rt)--(r);

\node[Vertex2] (b1) at (-0.1,0.933) {};
\node[Vertex2] (b2) at (-0.1,1.067) {};
\node[Vertex2] (t1) at (0.1,0.933) {};
\node[Vertex2] (t2) at (0.1,1.067) {};

\begin{scope}[every path/.style={thick}]
\draw (u) --++ (140:0.1);
\draw (u) --++ (90:0.1);
\draw (u) --++ (40:0.1);

\draw (v) --++ (-40:0.1);
\draw (v) --++ (-90:0.1);
\draw (v) --++ (-140:0.1);
\end{scope}
\end{scope}

\end{tikzpicture}
 \subcaption{We assume $y_u=y_v=1$, weight $\sfrac12$ on solid edges, weight $\sfrac14$ on dashed edges, and weight $0$ on all other edges incident to shown vertices $w$ different from $u$ and $v$. This gives $y_w=\sfrac12$ for all such $w$. }
\end{subfigure}
\hfill
\begin{subfigure}[t]{.4\linewidth}
\centering
\begin{tikzpicture}

\tikzstyle{Vertex}=[circle,fill=black,minimum size=6,inner sep=0pt]
\tikzstyle{Vertex2}=[draw=black,circle,fill=white,minimum size=4,inner sep=0pt]

\begin{scope}[scale = 3,shift={(3,0)},rotate=90]
\node[Vertex] (v) at (0,0) {};
\node[Vertex] (u) at (0,2) {};

\node[above left=-4pt] at (0.05,0) {$v$};
\node[above right=-4pt] at (0.05,2) {$u$};

\node[Vertex2] (r) at (0,0.6) {};
\node[Vertex2] (l) at (0,1.4) {};
\node[Vertex2] (rb) at (-0.1,0.8) {};
\node[Vertex2] (rt) at (0.1,0.8) {};
\node[Vertex2] (lb) at (-0.1,1.2) {};
\node[Vertex2] (lt) at (0.1,1.2) {};

\node[Vertex2] (b1) at (-0.1,0.933) {};
\node[Vertex2] (b2) at (-0.1,1.067) {};
\node[Vertex2] (t1) at (0.1,0.933) {};
\node[Vertex2] (t2) at (0.1,1.067) {};

\draw[red, thick] (v) to[bend left = 10] (r)--(r)--(rb) to[bend left = 20] (b1) --(b1) to[bend left = 20] (b2)--(b2) to [bend left = 20] (lb)--(lb) -- (l) to[bend left = 10] (u);
\draw[red,thick] (rb)--(rt) to[bend left = 20] (t1)--(t1) to[bend left = 20] (t2)--(t2) to[bend left = 20] (lt);

\draw[blue, thick] (v) to[bend right = 10] (r)--(r)--(rt) to[bend right=20] (t1) --(t1) to[bend right=20] (t2)--(t2) to [bend right=20] (lt)--(lt) -- (l) to[bend right=10] (u);
\draw[blue,thick] (lt)--(lb) to[bend left=20] (b2)--(b2) to[bend left=20] (b1)--(b1) to[bend left=20] (rb);

\begin{scope}[every path/.style={thick}]
\draw (u) --++ (140:0.1);
\draw (u) --++ (90:0.1);
\draw (u) --++ (40:0.1);

\draw (v) --++ (-40:0.1);
\draw (v) --++ (-90:0.1);
\draw (v) --++ (-140:0.1);
\end{scope}

\end{scope}

\end{tikzpicture} \subcaption{\cref{lem:solution_decomposition} decomposes the shown part of the LP solution into the red and the blue tree anchored at $u$ and $v$, each of weight $\sfrac14$. Note that it is not possible to decompose this part into $u$-$v$ walks satisfying \cref{lem:solution_decomposition}\,\ref{item:incident_trees}.}
\end{subfigure}
\hfill~
\caption{An excerpt of a solution of the \ref{eq:rel_PCTSP} and its decomposition given by \cref{lem:solution_decomposition}.}
\label{fig:treeDecomposition}
\end{figure}
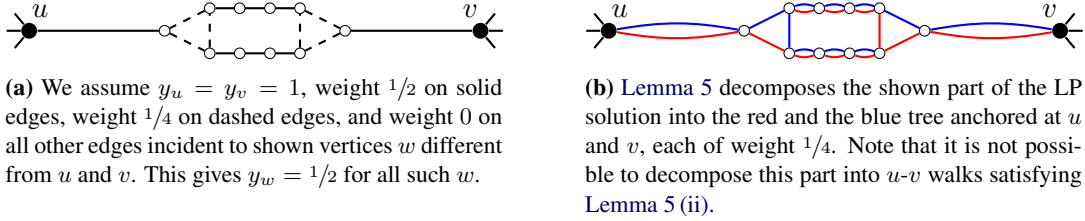

Our decomposition gives a more fine-grained analysis of the bound on the edge cost in threshold rounding than we obtained in \eqref{eq:thresholding_tourCost}:
We have
\begin{equation}\label{eq:reiterate_Wolsey}
c(E_C)
\leq \frac32 c^\top z
= \frac32 \sum_{T\in\mathcal{T}} \mu_T c_{e_T}
\leq \frac32 \sum_{T \in \mathcal{T}}\mu_T c(E[T])
\leq \frac32 c^\top x\enspace.
\end{equation}
Here, in the middle inequality, we are bounding $c_{e_T}$ by $c(E[T])$.
If $T$ was a path connecting the endpoints of $e_T$, we could exploit this by, whenever $e_T$ appears in a tour that we construct, replacing $e_T$ with $T$, and thereby avoid penalty at all vertices on the path without affecting the upper bound in \eqref{eq:reiterate_Wolsey}.
In the general case, in order to maintain a tour, we can only plug in the path in $T$ connecting the endpoints of $e_T$.
This path and its complement are important notions that we use repeatedly, hence we define the following (also see \cref{fig:backbones_and_limbs} for an illustration).

\begin{definition}
Let $T$ be a tree, and let $V_1$ be a vertex set such that $|V[T]\cap V_1|=2$.
The \emph{anchors} of $T$ in $V_1$ are the two vertices in $V[T]\cap V_1$.
The \emph{backbone} of $T$ with respect to $V_1$, denoted by $\bb{T}$, is the set of edges of $T$ on the unique path connecting the two vertices in $V[T]\cap V_1$.
The \emph{limbs} of $T$ with respect to $V_1$, denoted by $\lb{T}$, is the set of edges of $T$ that are not in the backbone of $T$ with respect to $V_1$.
\end{definition}

\begin{figure}[!ht]
\hfill
\begin{subfigure}[t]{0.26\linewidth}
\begin{tikzpicture}[xscale=0.9]

\begin{scope}[scale=0.4,shift={(0,0)}]
\node[Vertex] (u) at (0,0) {};
\node[Vertex] (v) at (10,1) {};

\node[above=1pt] at (v) {$v$};
\node[above=1pt] at (u) {$u$};

\node[Vertex2] (b1) at (3,1) {};
\node[Vertex2] (b2) at (6,0.5) {};
\node[Vertex2] (b3) at (8,0) {};
\node[Vertex2] (l1) at (3.5,2) {};
\node[Vertex2] (l2) at (4.5,2.5) {};
\node[Vertex2] (l3) at (3,3) {};
\node[Vertex2] (l4) at (2,2.5) {};
\node[Vertex2] (l5) at (7.5,-1) {};
\node[Vertex2] (l6) at (8.5,-1.5) {};

\draw[red, thick] (u)--(b1)--(b2)--(b3)--(v);
\draw[blue, thick] (b1)--(l1)--(l2);
\draw[blue, thick] (l1)--(l3)--(l4);
\draw[blue, thick] (b3)--(l5)--(l6);

\end{scope}

\end{tikzpicture} \subcaption{A tree $T$ with anchors $u$ and $v$, red backbone, and blue limbs.}
\end{subfigure}
\hfill
\begin{subfigure}[t]{0.54\linewidth}
\centering
\begin{tikzpicture}[xscale=0.9]

\begin{scope}[scale=0.4]
\node[Vertex] (u) at (0,0) {};
\node[Vertex] (v) at (10,1) {};

\node[above=1pt] at (v) {$v$};
\node[above=1pt] at (u) {$u$};

\node[Vertex2] (b1) at (3,1) {};
\node[Vertex2] (b2) at (6,0.5) {};
\node[Vertex2] (b3) at (8,0) {};
\node[Vertex2] (l1) at (3.5,2) {};
\node[Vertex2] (l2) at (4.5,2.5) {};
\node[Vertex2] (l3) at (3,3) {};
\node[Vertex2] (l4) at (2,2.5) {};
\node[Vertex2] (l5) at (7.5,-1) {};
\node[Vertex2] (l6) at (8.5,-1.5) {};

\draw[red, thick] (u)--(b1)--(b2)--(b3)--(v);
\end{scope}

\begin{scope}[scale=0.4,xshift=13cm]
\node[Vertex] (u) at (0,0) {};
\node[Vertex] (v) at (10,1) {};

\node[above=1pt] at (v) {$v$};
\node[above=1pt] at (u) {$u$};

\node[Vertex2] (b1) at (3,1) {};
\node[Vertex2] (b2) at (6,0.5) {};
\node[Vertex2] (b3) at (8,0) {};
\node[Vertex2] (l1) at (3.5,2) {};
\node[Vertex2] (l2) at (4.5,2.5) {};
\node[Vertex2] (l3) at (3,3) {};
\node[Vertex2] (l4) at (2,2.5) {};
\node[Vertex2] (l5) at (7.5,-1) {};
\node[Vertex2] (l6) at (8.5,-1.5) {};

\draw[red, thick] (u)--(b1)--(b2)--(b3)--(v);
\draw[blue, thick] (b1) to[bend left=20] (l1)--(l1) to [bend left=20] (l2)
--(l2) to[bend left = 20] (l1)--(l1) to[bend left=20] (b1);
\draw[blue, thick] (l1) to[bend left=20] (l3)--(l3) to [bend left=20] (l4)
--(l4) to[bend left = 20] (l3)--(l3) to[bend left=20] (l1);
\draw[blue, thick] (b3) to[bend left=20] (l5)--(l5) to [bend left=20] (l6)
--(l6) to[bend left = 20] (l5)--(l5) to[bend left=20] (b3);
\end{scope}

\end{tikzpicture}
\subcaption{The two walks generated from $T$: One consisting of the backbone only (left), and one with two copies of the limbs added on top (right).}
\end{subfigure}
\hfill~
\caption{Anchors, backbone, and limb edges of a tree, and the two walks that can be constructed thereof.}
\label{fig:backbones_and_limbs}
\end{figure}
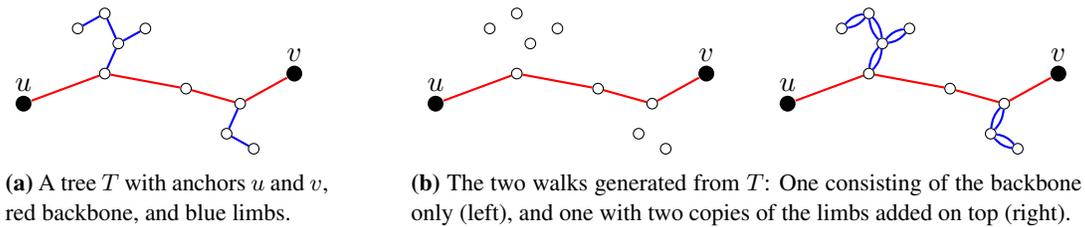

Using this new notation, we can strengthen the bound in \eqref{eq:reiterate_Wolsey} to
\begin{equation}\label{eq:stronger_tourBound}
c(E_C) \leq
\frac32 c^\top z
\leq \frac32 \sum_{T\in \mathcal{T}} \mu_Tc(\bb{T})\enspace.
\end{equation}
At the above cost, we can build a tour on the vertices in $V_1$ and replace edges of that tour with the corresponding backbone paths.
This extends the set of vertices that are covered beyond $V_1$ by those lying on the added backbones.
Moreover, compared to the upper bound $\frac32 c^\top x$, we still have a budget of $\frac32 \sum_{T \in \mathcal{T}} \mu_Tc(\lb{T})$ remaining.
We can use this budget to add two copies of the limb edges of some trees to the cycle $C$ in order to cover even more vertices.
Note that by adding two copies, we make sure that the resulting edge set can again be shortcut to a cycle.
Both of these observations contribute to saving penalties.

Illustrating how to translate the above intuition to actually improve coverage of vertices not in $V_1$ is easiest through exploiting randomization.
Thus, we present a probabilistic approach here, and only later show that derandomization is possible.
To build a basis for our tour, we exploit \cref{ass:e_0} and the structure obtained thereby:
\cref{lem:solution_decomposition}\,\ref{item:subtour_property} implies that $z - \chi^{e_0}=\sum_{T\in\mathcal{T}}\mu_T\chi^T$ is in the \emph{spanning tree polytope} $\PST(G_1)$ over $G_1$.\footnote{Recall that the spanning tree polytope $\PST(G)$ over a graph $G=(V,E)$ is the convex hull of all incidence vectors of spanning trees in $G$, and can be described by $\PST(G)=\{x\in\mathbb{R}^E_{\geq 0}\colon x(E) = |V|-1,\, x(E[S])\leq |S|-1\, \forall S\subseteq V\}$.}
Being a matroid base polytope, we can exploit known marginal-preserving negatively correlated randomized rounding schemes in $\PST(G_1)$, concretely \emph{randomized pipage rounding} (see \cref{sec:randomization} for details).
Based on this, our approach is the following:
\begin{stepsroman}
\item\label{step:sampling} Sample a spanning tree of $G_1$ with marginals $z-\chi^{e_0}$, such that edges $e_T$ appear with probability $\mu_T$.
\item\label{step:pathReplacement} Replace every sampled edge $e_T$ by the path with edge set $\bb{T}$, and with probability $\sfrac34$ (independently for all sampled $e_T$), add two copies of $\lb{T}$.
\item\label{step:parityCorrection} Correct parities in the resulting graph using an appropriate join, and shortcut.
\end{stepsroman}

We remark that---instead of randomized pipage rounding---one could sample the spanning tree in \cref{step:sampling} from a maximum entropy distribution and use the framework introduced by \citeauthor{karlin_2021_slightly} in their seminal $(1.5-\varepsilon)$-approximation algorithm for \TSP{} \cite{karlin_2021_slightly, karlin_derand} to slightly reduce the cost of parity correction in \cref{step:parityCorrection}.
However, the order of magnitude of this improvement (the authors show a bound of $\varepsilon > 10^{-36}$) is negligible in comparison with the loss that we incur through our analysis (see \cref{remark:threshold_choice}).
Additionally, derandomization of pipage rounding is rather straightforward in our setting using known tools (see \cref{sec:randomization}).

Replacing each sampled tree edge $e_T$ by the corresponding backbone together with parity correction leads to precisely the bound on the total edge cost in \eqref{eq:stronger_tourBound}, and adding limb edges probabilistically as in \cref{step:pathReplacement} uses up the remaining budget $\frac32 \sum_{T \in \mathcal{T}} \mu_Tc(\lb{T})$ in expectation.
Adding two copies of $\lb{T}$ implies that the edge set we replace $e_T$ by can always be cast as a walk connecting the two endpoints of $e_T$, hence vertices in $V\setminus V_1$ will always have even degree.
Consequently, in \cref{step:parityCorrection}, parities only need to be corrected at vertices in $V_1$, and hence at a cost that we can still bound by $\frac12c^\top z$.
Moreover, we can make use of the negative correlation property of the sampling procedure to prove bounds on the probability that a vertex $v\in V\setminus V_1$ is not covered by any of the used walks, and thereby obtain an upper bound on the expected penalty that we incur.
(Note that in \cref{step:parityCorrection}, we may use a join that could possibly connect further vertices, but we do not exploit this in our analysis.)
Summarizing \cref{step:sampling,step:pathReplacement} leads to the following lemma, whose formal proof we defer to \cref{sec:randomization}.

\begin{lemma}\label{lem:walk_sampling}
Let $(x, y)$ be a feasible solution of the \ref{eq:rel_PCTSP} satisfying \cref{ass:e_0}.
Let $\mathcal{T}$ and $\mu\in [0,1]^\mathcal{T}$ be a set of trees and weights with the properties guaranteed by \cref{lem:solution_decomposition}.
We can efficiently sample a multigraph $H$ on a subset of $V$ with the following properties:
\begin{enumerate}
\item $H$ is connected, spans $V_1\coloneqq \{v\in V\colon y_v=1\}$, and all vertices in $V\setminus V_1$ have even degree.

\item\label{item:prob_H_bound} We have
\begin{multline*}
\Exp\left[c(E[H]) + \pi(V\setminus V[H])\right] \\
\leq \sum_{T \in \mathcal{T}} \mu_T \cdot \left( c(\bb{T}) + \frac32 \cdot c(\lb{T}) \right) + \sum_{v\in V\setminus V_1}\pi_v \exp\left(-\frac{3y_v}{4}\right)\enspace.
\end{multline*}
\end{enumerate}
\end{lemma}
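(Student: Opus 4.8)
The plan is to carry out \cref{step:sampling,step:pathReplacement} of the outline and then verify the two asserted properties of $H$. By \cref{lem:solution_decomposition}\,\ref{item:subtour_property} the vector $z-\chi^{e_0}=\sum_{T\in\mathcal{T}}\mu_T\chi^{e_T}$ lies in $\PST(G_1)$ and has value $0$ on $e_0$; hence randomized pipage rounding on the matroid base polytope $\PST(G_1)$ samples a spanning tree $\mathcal{S}\subseteq\{e_T\colon T\in\mathcal{T}\}$ of $G_1$ whose edge indicators $X_{e_T}\coloneqq\mathbbm{1}[e_T\in\mathcal{S}]$ have marginals $\Prob[X_{e_T}=1]=\mu_T$ and are negatively associated (details on pipage rounding are given in \cref{sec:randomization}); here distinct trees $T\neq T'$ always give distinct variables $X_{e_T},X_{e_{T'}}$, even when $e_T$ and $e_{T'}$ are parallel edges of the multigraph $G_1$. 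We then build $H$ by including, for each $T$ with $e_T\in\mathcal{S}$, all edges of the backbone $\bb{T}$, and, independently with probability $\sfrac34$ through a coin $B_T$ that is independent of $\mathcal{S}$ and of the other coins, two parallel copies of every edge of the limbs $\lb{T}$. Everything above runs in polynomial time.

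To check the first property, note that $\mathcal{S}$ spans $V_1$ (it is a spanning tree of $G_1$ using no copy of $e_0$), and replacing each $e_T\in\mathcal{S}$ by the path $\bb{T}$ preserves connectivity; every further vertex that enters $H$ lies on some $\bb{T}$ or $\lb{T}$ of a sampled tree and is therefore attached to this connected subgraph, so $H$ is connected and spans $V_1$. For the parity claim, fix $v\in V\setminus V_1$; by \cref{lem:solution_decomposition}\,\ref{item:anchor_points} the vertex $v$ is never an anchor, so for any sampled tree $T$ the degree of $v$ in the path $\bb{T}$ is $0$ or $2$, and, when the coin $B_T$ comes up heads, the two copies of $\lb{T}$ contribute the additional even amount $2\cdot\deg_{\lb{T}}(v)$; in every case $T$ contributes an even amount to $\deg_H(v)$, and summing over the sampled trees shows that $\deg_H(v)$ is even.

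For the second property, linearity of expectation together with $\Prob[e_T\in\mathcal{S}]=\mu_T$, $\Exp[B_T]=\sfrac34$, and the independence of $B_T$ from $\mathcal{S}$ gives $\Exp[c(E[H])]=\sum_{T\in\mathcal{T}}\mu_T\bigl(c(\bb{T})+\sfrac32\,c(\lb{T})\bigr)$. It remains to show $\Prob[v\notin V[H]]\le\exp(-\sfrac34 y_v)$ for each $v\in V\setminus V_1$; since $V_1\subseteq V[H]$, summing $\pi_v$ times this bound over $v\in V\setminus V_1$ then yields the penalty term. Let $\mathcal{T}_v\coloneqq\{T\in\mathcal{T}\colon v\in V[T]\}$, let $\mathcal{T}_v^{\mathrm{bb}}$ be the set of trees in which $v$ lies on the backbone path, and let $\mathcal{T}_v^{\mathrm{lb}}\coloneqq\mathcal{T}_v\setminus\mathcal{T}_v^{\mathrm{bb}}$ be the remaining trees in $\mathcal{T}_v$ (in which $v$ lies only on limb edges). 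Then $v\in V[H]$ whenever some $T\in\mathcal{T}_v^{\mathrm{bb}}$ has $e_T\in\mathcal{S}$, or some $T\in\mathcal{T}_v^{\mathrm{lb}}$ has both $e_T\in\mathcal{S}$ and $B_T=1$, so, conditioning first on $\mathcal{S}$ and using independence of the coins $B_T$,
$$\Prob\bigl[v\notin V[H]\bigr]=\Exp_{\mathcal{S}}\Bigl[\,\prod_{T\in\mathcal{T}_v^{\mathrm{bb}}}(1-X_{e_T})\cdot\prod_{T\in\mathcal{T}_v^{\mathrm{lb}}}\bigl(1-\tfrac34 X_{e_T}\bigr)\Bigr]\enspace.$$
Each factor is a non-increasing function of a single, distinct variable $X_{e_T}$, so negative association of the $X_{e_T}$ yields
$$\Prob\bigl[v\notin V[H]\bigr]\le\prod_{T\in\mathcal{T}_v^{\mathrm{bb}}}(1-\mu_T)\cdot\prod_{T\in\mathcal{T}_v^{\mathrm{lb}}}\bigl(1-\tfrac34\mu_T\bigr)\le\exp\Bigl(-\sum_{T\in\mathcal{T}_v^{\mathrm{bb}}}\mu_T-\tfrac34\sum_{T\in\mathcal{T}_v^{\mathrm{lb}}}\mu_T\Bigr)\le\exp\Bigl(-\tfrac34 y_v\Bigr)\enspace,$$
where the middle step uses $1-t\le e^{-t}$ and the last step uses $\sum_{T\in\mathcal{T}_v}\mu_T=y_v$ from \cref{lem:solution_decomposition}\,\ref{item:incident_trees} together with $\mu\ge 0$.

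The step I expect to be the main obstacle is justifying the product bound in the penalty estimate, that is, combining the negative-correlation guarantee of pipage rounding for the spanning-tree indicators with the independent limb coins $B_T$: one must use that the $X_{e_T}$ are negatively associated (not merely pairwise negatively correlated) and that parallel copies of an anchor edge arising from different trees are genuinely distinct, jointly negatively associated variables. A second, more routine point is the parity bookkeeping: a vertex $v\in V\setminus V_1$ may be a branch vertex of some tree $T$, sitting simultaneously on the backbone path and on incident limbs, so one has to check in that case as well that the contribution of $T$ to $\deg_H(v)$ is even.
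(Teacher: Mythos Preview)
Your argument is correct and close in spirit to the paper's, but the implementation differs in one place. The paper does not flip independent limb coins after sampling the tree; instead it encodes the coin into the matroid. Concretely, each edge $e_T$ of $G_1$ is split into two parallel copies with weights $\tfrac34\mu_T$ and $\tfrac14\mu_T$, associated respectively to the walk $W_1(T)$ (backbone plus two copies of the limbs) and the walk $W_2(T)$ (backbone only), and pipage rounding is applied once in the spanning tree polytope of this blown-up multigraph $G_0$. The payoff is that the penalty bound then needs only the $\prod(1-X_i)$ inequality of \cref{thm:matroidRounding}\,\ref{item:neg_correlation} applied to the edge set $\{e_{W_1(T)}\colon v\in V[T]\}$, giving $\Prob[v\notin H]\le\prod_{T\colon v\in V[T]}(1-\tfrac34\mu_T)$ directly. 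Your two-stage scheme yields the same (in fact a slightly sharper) product, but the step you justify by ``negative association'' is not what \cref{thm:matroidRounding} asserts; what does work is to write $1-\tfrac34X=\tfrac14+\tfrac34(1-X)$, expand your product into a nonnegative combination of terms of the form $\prod_{T\in S}(1-X_{e_T})$, and apply \cref{thm:matroidRounding}\,\ref{item:neg_correlation} to each. A side benefit of the paper's one-shot formulation is that it feeds unchanged into the deterministic analogue (\cref{lem:deterministic_walks}): the same point $z_0$ in $\PST(G_0)$ and an explicit potential function go straight into deterministic pipage rounding, whereas your separate limb coins would require an additional derandomization step.
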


With this result at hand, we can state a randomized version of our new algorithm, \cref{alg:treeSampling}.
Note that the input only requires a feasible solution $(x^*, y^*)$ of the \ref{eq:rel_PCTSP}, even though typically, we will call \cref{alg:treeSampling} on an optimal solution. Moreover, the choice of a suitable threshold $\gamma\in(0,1]$ is left open for the moment; we show how to optimally exploit this remaining flexibility later on.

\begin{algorithm2e}[!thb]
\caption{Our new randomized algorithm for \PCTSP.\strut{}}\label{alg:treeSampling}
\SetKwInOut{Input}{Input}
\Input{Feasible solution $(x^*,y^*)$ of the \ref{eq:rel_PCTSP}, threshold $\gamma\in (0, 1]$.}
\smallskip
\begin{stepsarabic}

\item Guarantee that \cref{ass:e_0} is satisfied by modifying the instance accordingly if necessary.

\item Apply \cref{lem:PCTSP_sol_properties} to $(x^*, y^*)$ with $\lambda=\sfrac{1}{\gamma}$ to obtain another feasible solution $(x, y)$.

\item Compute a set of trees $\mathcal{T}$ with weights $\mu \in [0,1]^\mathcal{T}$ by applying \cref{lem:solution_decomposition} to $(x,y)$.

\item\label{algitem:sample_walks} Sample a multigraph $H$ on $V$ through \cref{lem:walk_sampling} applied to $(x, y)$ and $(\mathcal{T}, \mu)$.

\item\label{algitem:parity_correction}  Compute a shortest $\odd(H)$-join $J$.

\item Let $C$ be a cycle obtained by shortcutting an Eulerian tour in $H\cup J$.

\end{stepsarabic}
\Return{$C$.}
\end{algorithm2e}

Using \cref{lem:walk_sampling}, we get the following guarantees.
We remark that compared to classical threshold rounding, the factor $\sfrac3{2\gamma}$ on the term $c^\top x^*$ is unchanged, while at the same time, we get clearly improved factors on the penalty side.

\begin{theorem}\label{thm:treeSamplingAnalysis}
Let $(x^*, y^*)$ be a feasible solution of the \ref{eq:rel_PCTSP}.
On input $(x^*,y^*)$ and $\gamma \in (0,1]$,
\cref{alg:treeSampling} returns, in polynomial time, a cycle $C=(V_C, E_C)$ such that
\begin{equation*}
\Exp[c(E_C) + \pi(V\setminus V_C)] \le \frac{3}{2\gamma} \cdot c^\top x^* +  \sum_{v \in V\colon y^*_v < \gamma} \pi_v\cdot\exp\left(-\frac{3y_v^*}{4\gamma}\right)\enspace.
\end{equation*}
\end{theorem}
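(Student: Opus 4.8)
The plan is to run through \cref{alg:treeSampling} on the given input and bound the cost of the returned cycle term by term, feeding in \cref{lem:PCTSP_sol_properties,lem:solution_decomposition,lem:walk_sampling} in the order the algorithm invokes them. Step~1 replaces the instance by the auxiliary one described after \cref{ass:e_0}; this preserves the optimal LP value, maps any cycle back (by shortcutting) to a cycle of no larger cost covering the same set of original vertices, and changes neither $c^\top x^*$ nor the claimed penalty sum, since the added copy of $r$ has penalty $0$ and $y$-value $1$ and hence never lies in $\{v\colon y_v^*<\gamma\}$. So I may assume $(x^*,y^*)$ already satisfies \cref{ass:e_0}.

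Let $(x,y)$ be the solution produced in Step~2 via \cref{lem:PCTSP_sol_properties} with $\lambda=\sfrac{1}{\gamma}\geq 1$, so that $c^\top x\leq\sfrac{1}{\gamma}\cdot c^\top x^*$ and $y_v=\min\{1,\sfrac{y_v^*}{\gamma}\}$ for all $v$; in particular $V_1\coloneqq\{v\colon y_v=1\}=\{v\colon y_v^*\geq\gamma\}$ and $r\in V_1$. Steps~3--4 apply \cref{lem:solution_decomposition} and then \cref{lem:walk_sampling} to $(x,y)$ to produce a random multigraph $H$ that is connected, spans $V_1$, has even degree at every vertex of $V\setminus V_1$ (so $\odd(H)\subseteq V_1$ is an even set), and satisfies
\begin{equation*}
\Exp\bigl[c(E[H])+\pi(V\setminus V[H])\bigr]\ \leq\ \sum_{T\in\mathcal{T}}\mu_T\Bigl(c(\bb{T})+\tfrac32\,c(\lb{T})\Bigr)+\sum_{v\in V\setminus V_1}\pi_v\exp\!\Bigl(-\tfrac{3y_v}{4}\Bigr)\enspace.
\end{equation*}

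For the parity correction in Step~\ref{algitem:parity_correction} I would invoke the point $z\in\PHK(G_1)$ from \cref{lem:solution_decomposition}\,\ref{item:subtour_property}. Since $\odd(H)\subseteq V_1$ has even size, every $S\subseteq V_1$ with $|S\cap\odd(H)|$ odd satisfies $\emptyset\neq S\subsetneq V_1$, hence $\tfrac12 z(\delta_{G_1}(S))\geq 1$; thus $\tfrac12 z$ lies in the $\odd(H)$-join dominant of $G_1$ and bounds the minimum $\odd(H)$-join cost in $G_1$, and since the edges of $G_1$ are vertex pairs inside $V_1\subseteq V$ and $c$ is metric, it also bounds the minimum $\odd(H)$-join cost in the complete graph $G$. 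Therefore the join $J$ of Step~\ref{algitem:parity_correction} has $\Exp[c(E[J])]\leq\tfrac12 c^\top z$, the multigraph $H\cup J$ is connected and Eulerian (connectivity because $H$ is connected, spans $\odd(H)$, and $J$ is a \emph{minimum} $\odd(H)$-join, so no component of $J$ avoids $V[H]$), and shortcutting its Eulerian tour produces a cycle $C\ni r$ with $c(E_C)\leq c(E[H])+c(E[J])$ and $V_C\supseteq V[H]$; hence
\begin{equation*}
\Exp\bigl[c(E_C)+\pi(V\setminus V_C)\bigr]\ \leq\ \Exp\bigl[c(E[H])+\pi(V\setminus V[H])\bigr]+\tfrac12 c^\top z\enspace.
\end{equation*}

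To conclude, I would bound the two summands on the right. The penalty part already matches the target: for $v\in V\setminus V_1$ we have $y_v=\sfrac{y_v^*}{\gamma}<1$, so $\exp(-\tfrac{3y_v}{4})=\exp(-\frac{3y_v^*}{4\gamma})$ and $V\setminus V_1=\{v\colon y_v^*<\gamma\}$. For the edge part, $c_{e_T}\leq c(\bb{T})$ (the backbone is a path between the anchors of $T$ and $c$ is metric), so by \cref{lem:solution_decomposition}\,\ref{item:subtour_property} we get $c^\top z=c_{e_0}+\sum_T\mu_T c_{e_T}\leq c_{e_0}+\sum_T\mu_T c(\bb{T})$, while \cref{lem:solution_decomposition}\,\ref{item:solution_partitioning} gives $\sum_T\mu_T c(E[T])=c^\top x-c_{e_0}$; with $c(\bb{T})+c(\lb{T})=c(E[T])$ this yields
\begin{equation*}
\sum_{T\in\mathcal{T}}\mu_T\Bigl(c(\bb{T})+\tfrac32\,c(\lb{T})\Bigr)+\tfrac12 c^\top z\ \leq\ \tfrac12 c_{e_0}+\tfrac32\sum_{T\in\mathcal{T}}\mu_T c(E[T])\ =\ \tfrac32 c^\top x-c_{e_0}\ \leq\ \sfrac{3}{2\gamma}\cdot c^\top x^*\enspace,
\end{equation*}
and adding the two estimates gives the theorem; polynomial running time is immediate, as every step (the reduction, \cref{lem:PCTSP_sol_properties,lem:solution_decomposition,lem:walk_sampling}, a minimum-weight perfect matching for the join, and the Eulerian-tour shortcutting) runs in polynomial time. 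The only delicate point is the last display: granting \cref{lem:walk_sampling}, I expect the real work to be in checking that the limb coefficient $\sfrac{3}{2}$ it supplies combines with the $\sfrac{1}{2}$ coming from parity correction so that each tree $T$ is charged at most $\sfrac{3}{2}\mu_T c(E[T])$ overall (the backbones absorbing an extra $\leq\tfrac12\mu_T c(\bb{T})$ from $\tfrac12 c^\top z$) and that the loose $e_0$-terms cancel in the right direction; everything else is routine bookkeeping.
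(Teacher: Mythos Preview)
Your proof is correct and follows essentially the same route as the paper's: bound the join via $\tfrac12 z\in\PQJ[\odd(H)]$ from \cref{lem:solution_decomposition}\,\ref{item:subtour_property}, combine with the bound of \cref{lem:walk_sampling}, and then translate back to $(x^*,y^*)$ via \cref{lem:PCTSP_sol_properties}. Your bookkeeping of the $e_0$-terms is in fact slightly more explicit than the paper's (which tacitly uses $c_{e_0}=0$ from the auxiliary construction), but the argument is otherwise identical.
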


\begin{proof}
By \cref{lem:walk_sampling}, the multigraph $H$ constructed in \cref{alg:treeSampling} has odd degrees only at vertices in $V_1$, hence we can also correct parities by an $\odd(H)$-join in the multigraph $G_1=(V_1,E_1)$ with edge set $E_1= \{e_0\} \cup \{e_T\colon T\in \mathcal{T}\}$.
By \cref{item:subtour_property} of \cref{lem:solution_decomposition}, the point $z=\sum_{T\in\mathcal{T}}\mu_T \chi^{e_T} + \chi^{e_0}$ is in $\PHK(G_1)$.
Thus, following Wolsey's analysis, a shortest $\odd(H)$-join $J$ in $G_1$ has cost at most
$$
c(E[J])
\leq \frac12c^\top z
\leq  \frac12 \sum_{T\in\mathcal{T}}\mu_T c(\bb{T})\enspace.
$$
Note that $J$ also is a shortest $\odd(H)$-join in the original complete graph $G$.
Combining this with the bound of \cref{item:prob_H_bound} in \cref{lem:walk_sampling} immediately gives the claimed guarantee
\begin{align*}
\Exp[c(E_C) + \pi(V\setminus V_C)] &\leq \Exp[c(E[H]) + c(E[J]) + \pi(V\setminus V_C)]\\
&\leq \frac32c^\top x + \sum_{v\in V\setminus V_1}\pi_v \exp\left(-\frac{3y_v}{4}\right)
\leq \frac{3}{2\gamma} \cdot c^\top x^* +  \sum_{v \in V\colon y^*_v < \gamma} \pi_v\exp\left(-\frac{3y_v^*}{4\gamma}\right)
\enspace.
\end{align*}
Here, in the last inequality, we use that by \cref{lem:PCTSP_sol_properties}, we have $c^\top x \le \frac1\gamma c^\top x^*$, and also $y_v = \frac1\gamma y^*_v$ for $v\in V\setminus V_1$ because $V\setminus V_1 = \{v\in V\colon y^*_v < \gamma\}$.
To finish the proof, we note that indeed, all steps in \cref{alg:treeSampling} can be implemented in polynomial time.
\end{proof}

\subsection{Choosing the right threshold}\label{sec:choose_gamma}

Directly balancing the penalty terms in \cref{thm:treeSamplingAnalysis} by a proper choice of the threshold $\gamma$ gives the subsequent corollary.
We remark that the resulting approximation guarantee does not yet beat the previously best known one of $1.915$ by \textcite{goemans_2009_combining}, but it is significantly better than the standard analysis of threshold rounding, which gives a $\frac52$-approximation.

\begin{corollary}
Given a feasible solution $(x^*,y^*)$ of the \ref{eq:rel_PCTSP} and $\gamma = \sfrac{1}{\left(1+ \frac23\exp(-\sfrac34)\right)} \approx 0.761$, \cref{alg:treeSampling} returns in polynomial time a \PCTSP{} solution of expected value at most $\alpha \cdot (c^\top x^*+\pi^\top (1-y^*))$ with $\alpha = \frac32 + \exp(-\sfrac34) < 1.973$.
\end{corollary}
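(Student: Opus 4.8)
The plan is to apply \cref{thm:treeSamplingAnalysis} with the stated $\gamma$ and then bound its two error terms separately against the two parts $c^\top x^*$ and $\pi^\top(1-y^*)$ of the LP value, in each case by the common factor $\alpha = \frac32 + \exp(-\sfrac34)$. The edge-cost term is immediate from the choice of $\gamma$: since $\gamma = \sfrac1{(1+\frac23\exp(-\sfrac34))}$, we have $\frac{3}{2\gamma} = \frac32\bigl(1+\tfrac23\exp(-\sfrac34)\bigr) = \frac32 + \exp(-\sfrac34) = \alpha$, so the term $\frac{3}{2\gamma}\,c^\top x^*$ of \cref{thm:treeSamplingAnalysis} equals $\alpha\,c^\top x^*$. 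It remains to control the penalty term $\sum_{v\colon y^*_v<\gamma}\pi_v\exp\bigl(-\frac{3y_v^*}{4\gamma}\bigr)$.

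Because the $\pi_v$ are nonnegative, it suffices to establish the pointwise inequality $\exp\bigl(-\frac{3t}{4\gamma}\bigr)\le \alpha(1-t)$ for every $t=y_v^*\in[0,\gamma)$, and then to enlarge the index set from $\{v\colon y_v^*<\gamma\}$ to all of $V$ at the cost of only adding the nonnegative quantities $\pi_v(1-y_v^*)$. The left-hand side is convex in $t$ and the right-hand side is affine, so the inequality on the whole interval follows once it is checked at the two endpoints $t=0$ and $t=\gamma$. At $t=0$ it reads $1\le\alpha$, which holds since $\alpha>\frac32$. At $t=\gamma$ it reads $\exp(-\sfrac34)\le\alpha(1-\gamma)$; writing $c\coloneqq\frac23\exp(-\sfrac34)$ so that $\gamma=\sfrac1{(1+c)}$ and $1-\gamma=\sfrac{c}{(1+c)}$, one computes $\alpha(1-\gamma)=\frac32(1+c)\cdot\frac{c}{1+c}=\frac32 c=\exp(-\sfrac34)$, i.e.\ equality. (This identity is precisely what pins down the value of $\gamma$: it is the threshold at which the convex penalty curve is tangent to, rather than crossing, the line $t\mapsto\alpha(1-t)$ — this is the one place where the specific choice of $\gamma$ is used, beyond the edge-term identity above.)

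Combining the two bounds gives
\[
\Exp[c(E_C)+\pi(V\setminus V_C)]\;\le\;\alpha\,c^\top x^* + \alpha\!\!\sum_{v\colon y_v^*<\gamma}\!\!\pi_v(1-y_v^*)\;\le\;\alpha\bigl(c^\top x^* + \pi^\top(1-y^*)\bigr),
\]
and polynomial running time is inherited directly from \cref{alg:treeSampling} via \cref{thm:treeSamplingAnalysis}. Finally, $\exp(-\sfrac34)<0.473$, so $\alpha=\frac32+\exp(-\sfrac34)<1.973$. I do not expect a genuine obstacle here: the whole argument is the routine ``balance the two terms'' calculation, and the only point requiring a bit of care is the convexity reduction, which lets one avoid a case analysis on the magnitude of $y_v^*$ and instead verify the bound at just two endpoints, one of which is exactly the defining identity for $\gamma$.
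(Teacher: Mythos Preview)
Your proof is correct and follows essentially the same approach as the paper: both apply \cref{thm:treeSamplingAnalysis}, use convexity of $t\mapsto\exp(-\tfrac{3t}{4\gamma})$ to reduce the pointwise penalty bound to a check at the two endpoints $t=0$ and $t=\gamma$, and then verify that the chosen $\gamma$ equalizes the resulting constants. The only cosmetic difference is that the paper carries out the calculation for general $\gamma$ (yielding the factor $\max\{\tfrac{3}{2\gamma},1,\tfrac{\exp(-\sfrac34)}{1-\gamma}\}$) and then minimizes, whereas you plug in the optimal $\gamma$ from the start.
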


\begin{proof} For general $\gamma\in(0,1]$, we can bound the penalty term of the bound given in \cref{thm:treeSamplingAnalysis} as follows:
\begin{align*}
\sum_{v \in V\colon y^*_v < \gamma} \pi_v  \exp\left(-\frac{3y_v^*}{4\gamma}\right)
\le \max\left\{1,\frac{\exp\left({-\sfrac34}\right)}{1-\gamma}\right\} \cdot \sum_{v \in V\colon y^*_v < \gamma} \pi_v (1-y^*_v)\enspace.
\end{align*}
Here we used that for $y\in[0,\gamma]$, we have $\exp\big({-\frac{3y}{4\gamma}}\big) \leq \max\big\{1,\frac{\exp\left({-\sfrac34}\right)}{1-\gamma}\big\}\cdot(1-y)$.
The latter can be derived by noting that the inequality holds for $y\in \{0, \gamma\}$, and that $y\mapsto \exp\big({-\frac{3y}{4\gamma}}\big)$ is convex in $y$.
Thus, \cref{thm:treeSamplingAnalysis} implies
$$
\Exp[c(E_C)] + \Exp[\pi(V\setminus V_C)] \leq \max \left\{\frac{3}{2\gamma},1,\frac{\exp\left({-\sfrac34}\right)}{1-\gamma}\right\}\cdot\left(\sum_{e\in E} c_ex_e^* + \sum_{v\in V} \pi_v(1-y_v^*)\right)\enspace,
$$
where the maximum expression is minimized for $\gamma = \sfrac{1}{\left(1+ \frac23\exp(-\sfrac34)\right)}$, giving the claimed approximation factor with respect to the objective value of the input solution $(x^*, y^*)$ of the \ref{eq:rel_PCTSP}.
\end{proof}

When choosing the threshold $\gamma$ with respect to a specific distribution (similarly to \textcite{goemans_2009_combining}), we get a significant improvement of the best known approximation ratio. We remark that in contrast to the classical threshold rounding algorithm, it is no longer optimal to choose $\gamma$ with respect to a uniform distribution over some interval in $[0,1]$.

\begin{theorem}\label{thm:randomizedAnalysis}
	Let $b= 0.6945$. Sampling $\gamma$ from the interval $[b,1]$ using a distribution with density $f(\gamma)\propto\exp(-\sfrac b\gamma)$, and starting \cref{alg:treeSampling} from a feasible solution $(x^*, y^*)$ of the \ref{eq:rel_PCTSP}, we get in polynomial time a \PCTSP{} solution of expected value at most $\alpha\cdot\left(c^\top x^*+ \pi^\top(1-y^*)\right)$, where $\alpha<1.774$.
\end{theorem}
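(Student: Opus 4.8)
The plan is to integrate the fixed-threshold guarantee of \cref{thm:treeSamplingAnalysis} against the sampling density $f$. Write $Z\coloneqq\int_b^1\exp(-b/\gamma)\,\dd\gamma$ for the normalizing constant, so $f(\gamma)=\exp(-b/\gamma)/Z$ on $[b,1]$. Since the threshold $\gamma$ is drawn independently of the internal randomness of \cref{alg:treeSampling}, the tower rule together with \cref{thm:treeSamplingAnalysis} gives
\begin{align*}
\Exp[c(E_C)+\pi(V\setminus V_C)]
&\le \frac{3}{2Z}\left(\int_b^1\frac{\exp(-b/\gamma)}{\gamma}\,\dd\gamma\right)c^\top x^*\\
&\quad+\sum_{v\in V}\frac{\pi_v}{Z}\int_{\max\{y_v^*,\,b\}}^{1}\exp\!\left(-\frac{b+\tfrac34\, y_v^*}{\gamma}\right)\dd\gamma\,,
\end{align*}
where I used the identity $\exp(-b/\gamma)\cdot\exp(-\tfrac{3y}{4\gamma})=\exp(-(b+\tfrac34 y)/\gamma)$, whose point is that it keeps \emph{every} integral that appears of the single shape $\gamma\mapsto\exp(-a/\gamma)$ (or $\gamma\mapsto\gamma^{-1}\exp(-a/\gamma)$). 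As the objective value of the input solution is $c^\top x^*+\sum_v\pi_v(1-y_v^*)$, it then suffices to produce a constant $\alpha<1.774$ with
\begin{gather*}
A\coloneqq\frac{3}{2Z}\int_b^1\frac{\exp(-b/\gamma)}{\gamma}\,\dd\gamma\ \le\ \alpha\,,\\
g(y)\coloneqq\frac{1}{(1-y)Z}\int_{\max\{y,b\}}^{1}\exp\!\left(-\frac{b+\tfrac34 y}{\gamma}\right)\dd\gamma\ \le\ \alpha\quad\text{for every }y\in[0,1)\,.
\end{gather*}

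Before any computation it is worth pinning down why this per-vertex bookkeeping is essential. The crude estimate used for the fixed-threshold corollary — bounding $\exp(-\tfrac{3y}{4\gamma})$ by $\max\{1,\exp(-\tfrac34)/(1-\gamma)\}\cdot(1-y)$ and then pulling the $\gamma$-dependent factor out of the $\gamma$-integral — \emph{diverges} here, because $1/(1-\gamma)$ is not integrable against $f$ near $\gamma=1$. What rescues the argument is precisely that the lower limit of integration is $\max\{y,b\}\ge y$: a vertex at level $y$ only ever sees thresholds $\gamma\ge y$, and over that range $\exp(-(b+\tfrac34 y)/\gamma)$ stays bounded, so its contribution is genuinely $O(1-y)$.

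Next I would put everything in closed form. The substitution $t=a/\gamma$ turns $\int_l^u\exp(-a/\gamma)\,\dd\gamma$ into $a\int_{a/u}^{a/l}t^{-2}e^{-t}\,\dd t$ (with antiderivative $-e^{-t}/t+E_1(t)$) and $\int_l^u\gamma^{-1}\exp(-a/\gamma)\,\dd\gamma$ into $E_1(a/u)-E_1(a/l)$, where $E_1$ is the exponential integral; in particular $A=\tfrac32\bigl(E_1(b)-E_1(1)\bigr)/Z$, and $g(y)$ likewise has an explicit expression in $E_1$-values. The inequality $A\le\alpha$ is then a single numerical check. For $g$, I would split at the breakpoint $y=b$: on $[0,b]$ the limits of integration are fixed and only the integrand moves, and one verifies $g(0)=1$ and that $g$ is increasing up to $y=b$; on $(b,1)$ the lower limit moves as well, the numerator and $1-y$ both vanish at $y=1^-$, and L'H\^opital (with Leibniz' rule for the moving endpoint) yields $\lim_{y\to1^-}g(y)=\exp(-(b+\tfrac34))/Z$, a finite constant. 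A sign analysis of $g'$ on $(b,1)$ reduces $\sup_{y\in[0,1)}g(y)$ to the values of $g$ at $y=b$, at $y=1^-$, and at the at most one interior critical point, hence to finitely many explicit inequalities. One then sets $\alpha\coloneqq\max\{A,\sup_y g(y)\}$ and, substituting $b=0.6945$, checks $\alpha<1.774$. The parameter $b$ plays exactly the role of the threshold constant in Goemans' framework \cite{goemans_2009_combining}: it is chosen to (almost) equalize the tour coefficient $A$ and the worst-case penalty coefficient $\sup_y g(y)$, and $0.6945$ is a value for which their maximum falls below $1.774$.

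The main obstacle is the quantitative part. For $b=0.6945$ the target $1.774$ is essentially tight — already the tour coefficient satisfies $A\approx1.7738$ — so the proof needs provably sharp two-sided estimates on $Z$, on $\int_b^1\gamma^{-1}e^{-b/\gamma}\,\dd\gamma$, and on the family $\int_{\max\{y,b\}}^1 e^{-(b+3y/4)/\gamma}\,\dd\gamma$ (uniformly in $y$). Plain midpoint or trapezoid estimates do not have enough accuracy to separate the answer from $1.774$; the verification instead requires either higher-order quadrature with explicit error terms or careful monotonicity/convexity arguments — for instance exploiting that $\gamma\mapsto e^{-a/\gamma}$ is concave on the relevant range and $\gamma\mapsto\gamma^{-1}e^{-a/\gamma}$ is monotone there — together with rigorous enclosures of the exponential-integral values that occur.
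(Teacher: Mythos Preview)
Your high-level framework coincides with the paper's: integrate the bound from \cref{thm:treeSamplingAnalysis} against the density $f$, separate the tour coefficient $A$ from a per-vertex penalty ratio $g(y)$, and show $\max\{A,\sup_y g(y)\}<1.774$. The reduction of the $[0,b]$ range to the right endpoint is also handled in the paper, though by a different argument: instead of claiming $g$ is increasing on $[0,b]$ (which you assert but do not prove, and which is not obvious since the integrand decreases while $1/(1-y)$ increases), the paper observes that $y\mapsto\frac{1}{1-y}\exp(-\tfrac{3y}{4\gamma})$ is convex, hence bounded on $[0,b]$ by its values at $0$ and $b$, and then uses $b\geq 1-e^{-3/4}$ to conclude the endpoint $b$ dominates.

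The substantive difference is how to control $g(y)$ on $[b,1)$. You propose closed forms via the exponential integral $E_1$, a sign analysis of $g'$ (with the unproven assertion of at most one interior critical point), and rigorous enclosures of the resulting $E_1$-values. The paper instead exploits that $\gamma\mapsto\exp(-a/\gamma)$ is \emph{concave} on $[y,1]$ for the relevant $a$: splitting $[y,1]$ at its midpoint and applying the midpoint rule on each half (which for concave integrands gives an \emph{upper} bound) yields
\[
\frac{1}{1-y}\int_y^1\exp\!\Big(-\frac{3y+4b}{4\gamma}\Big)\dd\gamma \ \le\ \frac12\Big(\exp\!\big(-\tfrac{3y+4b}{3y+1}\big)+\exp\!\big(-\tfrac{3y+4b}{y+3}\big)\Big)\eqqcolon\frac{\theta_b(y)}{2}\,,
\]
so $g(y)\le\theta_b(y)/(2I_b)$ with $\theta_b$ an explicit elementary function of bounded derivative. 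The final check is then a discretized maximization of $\theta_b$ over $[b,1)$ together with numerical evaluation of the two one-dimensional integrals defining $A$. In particular, your worry that ``plain midpoint or trapezoid estimates do not have enough accuracy'' is exactly what the paper's trick circumvents: a single midpoint-type bound, applied after the $(1-y)$ cancellation, is already sharp enough. Your route via $E_1$ is not wrong, but it carries extra unproven structural claims and a heavier verification burden than the paper's concavity argument.
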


The proof of \cref{thm:randomizedAnalysis} is postponed to \cref{sec:randomized_analysis}.
We remark that in contrast to the classical threshold rounding approach, we do not profit from combining our algorithm with the primal-dual algorithm by \textcite{goemans_1995_general}. This makes our algorithm self-contained and easy to state.

Note that the \ref{eq:rel_PCTSP} can be solved in polynomial time through the ellipsoid method, as efficient separation over the constraints $x(\delta(S))\geq y_v$ can be reduced to minimum $r$-$v$ cut calculations for all $v\in V$ in the graph with capacities given by $x$.
Thus, applying \cref{thm:randomizedAnalysis} to an optimal solution $(x^*, y^*)$ of the \ref{eq:rel_PCTSP} directly implies \cref{thm:improvedApproximation} if one allows randomization.
In order to obtain a deterministic procedure, we show in \cref{sec:randomization} how to derandomize the proposed algorithm. Note that based on \cref{thm:randomizedAnalysis}, there are two steps where we exploit randomization.
First, we choose the threshold $\gamma$ from some distribution.
We will see that given an \emph{optimal} LP-solution $(x^*,y^*)$ of the \ref{eq:rel_PCTSP}, it suffices to try all thresholds in the set $\{y_v^*\colon v \in V\}$ to get the claimed bound (cf. \cref{sec:derandomizeGamma}).
Second, the construction of the multigraph $H$ in \cref{algitem:sample_walks} of \cref{alg:treeSampling} is based on randomized pipage rounding in the spanning tree polytope of an auxiliary graph.
We show in \cref{sec:derandomizeWalkSampling} that we can also use a deterministic version of pipage rounding to achieve the same guarantees.
This will prove \cref{thm:improvedApproximation}.

We make the reader aware that the first step above was immediate in the setting of \textcite{goemans_2009_combining}, because the values $\gamma=y_v^*$ for $v\in V$ already result in all possible supports of the tours that can be obtained through a threshold, and tours were subsequently built in a black-box way.
In our refined approach, this is no longer the case; other values of $\gamma$ may result in different tours, and it requires some thought to recover the conclusion that the best solution is obtained for some $\gamma=y_v^*$.

\begin{remark}\label{remark:threshold_choice}
Computational experiments (through discretizing the distribution to sample $\gamma$ from) indicate that even by choosing an optimal distribution for $\gamma$, one cannot push the above analysis to prove a bound on the approximation factor of value less than or equal to $1.773$. Consequently, the distribution over $\gamma$ that we propose in \cref{thm:randomizedAnalysis} is very close to optimal for an analysis with respect to the bounds given by \cref{thm:treeSamplingAnalysis}.
\end{remark}

\section[Simple \texorpdfstring{$2$}{2}-approximations for PCTSP and PCST through decompositions]{\boldmath Simple $2$-approximations for \PCTSP{} and \PCST{} through decompositions}\label{sec:21}

In this section, we show that for \PCTSP{} and \PCST{}, we can obtain an approximation guarantee similar to the one of the primal-dual approach by \textcite{goemans_1995_general} through very simple algorithms.
More precisely, the proposed algorithms return cycles and trees, respectively, of cost at most twice the $x$-cost of an optimum LP solution plus once the $y$-cost of an optimum LP solution.
The main ingredient giving rise to these algorithms is the following simplified variation of our decomposition lemma, which can also be obtained by utilizing an existential result on packing branchings in a directed multigraph by \textcite[Theorem 2.6]{bang-jensen_1995_preserving}.

\begin{lemma}\label{lem:fractional_tree_partition}
	Let $(x, y)$ be a feasible solution of the \ref{eq:rel_PCTSP}.
	We can in polynomial time compute a set of trees $\mathcal{T}$ containing the root $r$ and weights $\mu \in [0,1]^\mathcal{T}$ such that $\sum_{T\in\mathcal{T}}\mu_T = 1$,
    \begin{equation*}
		\sum_{T\in \mathcal{T}}\mu_T\chi^{E[T]} \leq x\enspace,
        \qquad\text{and}\qquad
		\forall v\in V\colon\
		\sum_{T\in\mathcal{T}\colon v\in V[T]}\mu_T = y_v\enspace.
    \end{equation*}
\end{lemma}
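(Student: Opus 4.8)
The plan is to deduce the lemma from a packing theorem for branchings in directed graphs, using the constraints of the \ref{eq:rel_PCTSP} only to certify that theorem's hypotheses. The reduction rests on the observation that, after orienting so that a tree containing $r$ becomes an out-branching rooted at $r$, what we are after is a fractional packing of $r$-out-branchings of total weight $1$ in which every vertex $v$ is covered with weight exactly $y_v$: forgetting orientations of such a packing gives trees containing $r$, arc-disjointness bounds the edge usage by $x$, the weights sum to $1$, and the coverage condition is the third displayed equality. Concretely, I would take an integer $N$ with $Nx$ and $Ny$ integral (LP solutions may be assumed rational), let $G_N$ be the multigraph with $Nx_e$ copies of each edge $e$, and aim to find $N$ arc-disjoint $r$-out-branchings $B_1,\dots,B_N$ in a suitable orientation of $G_N$ with each $v$ lying in exactly $Ny_v$ of them; then $\mathcal{T}$ is the set of underlying trees with $\mu_T=1/N$, after identifying trees that coincide.

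First I would orient $G_N$ so that all edges at $r$ point away from $r$ and $d^-(S)\ge\max_{v\in S}Ny_v$ for every $\emptyset\ne S\subseteq V\setminus\{r\}$; existence of such an orientation follows from a Nash-Williams/Frank-type orientation theorem, whose sub-partition hypotheses reduce, via $d_{G_N}(S)=N\cdot x(\delta(S))$, to the cut constraints $x(\delta(S))\ge 2y_v$ (the factor $2$ supplying exactly the slack needed to pass from an undirected cut bound to an in-degree bound). Then I would apply \textcite[Theorem 2.6]{bang-jensen_1995_preserving} to the resulting digraph with root $r$ and coverage requirement $m(v)\coloneqq Ny_v$ --- note $m(r)=N$ since $y_r=1$, and $m(v)\le N$ --- to extract arc-disjoint out-branchings $B_1,\dots,B_N$ covering each $v$ exactly $m(v)$ times. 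Reading the conclusions back off: $\sum_{T}\mu_T=1$ is immediate; $\sum_{T}\mu_T\chi^{E[T]}\le x$ holds because the $B_i$ are arc-disjoint in an orientation of $G_N$, so each edge $e$ is used at most $Nx_e$ times; and $\sum_{T\colon v\in V[T]}\mu_T=y_v$ is the coverage statement scaled by $1/N$.

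The hard part will be verifying the hypotheses of the branching-packing theorem, and in particular arranging \emph{exact} coverage rather than merely a lower bound $\ge y_v$: this is precisely where the degree equalities $x(\delta(v))=2y_v$ must be used together with the cut constraints, in an uncrossing-style argument. (Without the degree constraints the configurations one must exclude --- a low-demand cut vertex through which a high-demand vertex is forced to be reached, hence over-covered --- could occur; the degree equalities bound the capacity at such a vertex and, combined with the cut constraint on the sets it separates, make them infeasible for the LP.) A secondary obstacle is polynomial running time, since $N$ as defined above may be exponential in the input size; rather than literally forming $G_N$, one runs the capacitated analogues of the orientation and splitting-off routines --- of the kind underlying the fast edge-splitting algorithms referenced in this paper --- which return $\mathcal{T}$ and $\mu$ of polynomial size directly. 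Finally, I would remark that in the special case $y\equiv 1$ this recovers the familiar decomposition of a Held-Karp point into spanning trees, and that essentially the same construction, refined so that each tree is anchored at two vertices of $V_1$, is what underlies \cref{lem:solution_decomposition}.
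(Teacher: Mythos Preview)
Your plan is sound and is exactly the alternative route the paper alludes to in the sentence preceding the lemma (``can also be obtained by utilizing an existential result on packing branchings \ldots\ by \textcite[Theorem~2.6]{bang-jensen_1995_preserving}''). It is, however, \emph{not} the proof the paper gives. The paper instead deduces \cref{lem:fractional_tree_partition} from the common generalization \cref{lem:general_solution_decomposition} by taking $U$ to be the two endpoints of the auxiliary edge $e_0$; in that specialization the auxiliary graph $H$ has only two vertices, so the Held-Karp degree constraint forces $\sum_T\mu_T\le 1$, while \cref{item:general_incident_trees} applied at $r$ gives $\sum_T\mu_T\ge 1$, and hence every tree must contain $r$. \Cref{lem:general_solution_decomposition} itself is proved directly by iterated splitting off at vertices of $V\setminus U$ in increasing order of $y_v$, followed by a careful reversal of the splittings to grow the trees.

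The tradeoffs: the paper's route is self-contained (no black-box packing theorem) and, more importantly, simultaneously yields the anchored decomposition \cref{lem:solution_decomposition} that drives the main algorithm --- so the simple lemma comes essentially for free. Your route is conceptually cleaner for \cref{lem:fractional_tree_partition} in isolation, and it gets ``all trees contain $r$'' and $\sum_T\mu_T=1$ for free from the branching formulation rather than via the two-vertex Held-Karp trick. You correctly flag the two genuine obligations your sketch leaves open: (i) that the degree equalities $x(\delta(v))=2y_v$ are what upgrade the BFJ lower-bound coverage to \emph{exact} coverage (concretely, in an orientation with $d^-(v)=Ny_v$, arc-disjointness caps the number of branchings through $v$ at $Ny_v$), and (ii) that one must run capacitated versions of the orientation and packing routines to keep the running time polynomial. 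Both are doable, but neither is carried out in the proposal.
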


We prove \cref{lem:fractional_tree_partition} through a common generalization with \cref{lem:solution_decomposition} in \cref{sec:splitting}.
Based on \cref{lem:solution_decomposition}, the simple algorithm we propose for \PCTSP{} is the following.

\begin{algorithm2e}[!ht]
\caption{A simple $2$-approximation for \PCTSP\strut{}}\label{alg:treePartitionTours}
\SetKwInOut{Input}{Input}
\Input{\PCTSP{} instance $(G,r,c,\pi)$ on $G=(V,E)$.}
\smallskip
\begin{stepsarabic}
\item\label{step:sol} Compute an optimal solution $(x^*, y^*)$ of the \ref{eq:rel_PCTSP}.

\item Compute a set of trees $\mathcal{T}$ through \cref{lem:fractional_tree_partition} applied to $(x^*,y^*)$.

\item\label{algitem:doubling_tree} For each tree $T\in\mathcal{T}$, obtain a cycle $C_T$ through duplicating $T$ and shortcutting.

\item Let $S=\arg\min_{T\in\mathcal{T}}c(E[C_T])+\pi(V\setminus V[C_T])$.

\end{stepsarabic}
\Return{$C_S$.}
\end{algorithm2e}

\begin{theorem}\label{thm:21PCTSP}
\cref{alg:treePartitionTours} returns, in polynomial time, a cycle $C$ satisfying
$$ c(E[C]) + \pi(V \setminus V[C]) \leq 2\cdot c^\top x^* + \pi^\top (1-y^*)\enspace,$$
where $(x^*,y^*)$ is the optimal solution of the \ref{eq:rel_PCTSP} computed in \cref{step:sol} of the algorithm.
\end{theorem}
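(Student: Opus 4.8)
The plan is to bound the objective value of the returned cycle by the $\mu$-weighted average, over $T\in\mathcal{T}$, of the quantities $c(E[C_T])+\pi(V\setminus V[C_T])$. Since $\sum_{T\in\mathcal{T}}\mu_T=1$ by \cref{lem:fractional_tree_partition} and the algorithm outputs the single cheapest $C_T$, this minimum is at most the average, so it suffices to show
$$
\sum_{T\in\mathcal{T}}\mu_T\bigl(c(E[C_T])+\pi(V\setminus V[C_T])\bigr)\le 2\,c^\top x^*+\pi^\top(1-y^*)\enspace.
$$

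For the edge term I would use the standard double-tree estimate: duplicating $E[T]$ yields an Eulerian multigraph on vertex set $V[T]$, and shortcutting an Eulerian tour produces a cycle $C_T$ whose vertex set is exactly $V[T]$ (no vertex is dropped, since the closed walk visits every vertex of $T$) and whose length satisfies $c(E[C_T])\le 2\,c(E[T])$, because each shortcut does not increase length by the triangle inequality. Averaging and invoking the guarantee $\sum_{T\in\mathcal{T}}\mu_T\chi^{E[T]}\le x^*$ from \cref{lem:fractional_tree_partition} (together with $c\ge 0$) gives $\sum_{T\in\mathcal{T}}\mu_T\,c(E[C_T])\le 2\,c^\top\bigl(\sum_{T}\mu_T\chi^{E[T]}\bigr)\le 2\,c^\top x^*$.

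For the penalty term, since $V[C_T]=V[T]$ I would swap the order of summation:
$$
\sum_{T\in\mathcal{T}}\mu_T\,\pi(V\setminus V[T])
=\sum_{v\in V}\pi_v\Bigl(\sum_{T\in\mathcal{T}}\mu_T-\sum_{T\in\mathcal{T}\colon v\in V[T]}\mu_T\Bigr)
=\sum_{v\in V}\pi_v(1-y^*_v)\enspace,
$$
using $\sum_{T}\mu_T=1$ and $\sum_{T\ni v}\mu_T=y^*_v$ from \cref{lem:fractional_tree_partition}. Adding the two bounds yields the displayed inequality. It then remains to observe that $C_S$ is a feasible \PCTSP{} solution — every tree in $\mathcal{T}$, hence every $C_T$, contains the root $r$ — and that the whole procedure runs in polynomial time: the \ref{eq:rel_PCTSP} is solvable in polynomial time via the ellipsoid method, \cref{lem:fractional_tree_partition} is constructive with $|\mathcal{T}|$ polynomially bounded, and both the double-tree construction and the final selection are trivially polynomial.

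I do not anticipate a genuine obstacle: the theorem is essentially immediate from \cref{lem:fractional_tree_partition} together with the classical double-tree bound, and the only point that warrants an explicit sentence is the identity $V[C_T]=V[T]$, i.e.\ that shortcutting the doubled tree never loses a vertex — which holds because an Eulerian tour of the duplicated edge set traverses every vertex of $T$ and shortcutting only deletes repeated occurrences.
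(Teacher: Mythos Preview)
Your proof is correct and essentially identical to the paper's: the paper phrases it as a ``probabilistic analysis'' (sample $T$ with marginals $\mu$ and bound the expectation), which is exactly your $\mu$-weighted average argument, and then concludes that the minimum-cost $C_T$ meets the bound. If anything, you are slightly more explicit in justifying $V[C_T]=V[T]$ and the polynomial running time.
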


\begin{proof}
We do a probabilistic analysis of the algorithm.
The trees in $\mathcal{T}$ from \cref{lem:fractional_tree_partition} come with weights $\mu\in [0,1]^\mathcal{T}$ such that $\sum_{T \in \mathcal{T}} \mu_{T}=1$ and $\sum_{T\in\mathcal{T}} \mu_T \chi^{E[T]} \leq x^*$.
Sampling a tree $T$ from $\mathcal{T}$ with marginals $\mu$ gives
$$
\Exp[c(E[C_T])] =
\sum_{T \in \mathcal{T}} \mu_T \cdot 2c(E[T]) =
2\cdot c^\top\!\left(\sum_{T\in \mathcal{T}}\mu_T\chi^{E[T]}\right) \leq
2\cdot c^\top x^* \enspace.
$$
Similarly, for the penalty term we get
\begin{align*}
\Exp[\pi(V\setminus V[C_T])] &=
\sum_{v\in V} \Prob[v\notin T]\cdot \pi_v \\
&=
\sum_{v\in V} \left(1-\sum_{T\in\mathcal{T}\colon v\in V[T]} \mu_T\right) \cdot \pi_v =
\sum_{v\in V} (1-y_v^*)  \cdot \pi_v =
\pi^\top (1-y^*)\enspace.
\end{align*}
Hence, the expected objective value of $C_T$ can be bounded by $2\cdot c^\top x^* + \pi^\top (1-y^*)$.
Thus, also at least one of the cycles $C_T$ for $T\in\mathcal{T}$ satisfies the desired guarantee.
\end{proof}

For \PCST{}, we consider the following very related LP relaxation.
Recall that in \PCST{}, we do not assume the triangle inequality for distances, and thus also the LP relaxation comes without degree constraints.
\begin{equation}\tag{PCST LP relaxation}\label{eq:PCST_relaxation}
	\begin{array}{rrcll}
		\min & \displaystyle \sum_{e\in E}c_ex_e + \sum_{v\in V}\pi_v (1-y_v) \\
		& x(\delta(S)) & \geq & y_v & \forall S\subseteq V\setminus\{r\}, v\in S\\
		& y_r & = & 1 \\
		& x_e & \geq & 0 & \forall e\in E\\
		& y_v & \in & [0,1] & \forall v\in V\enspace.
	\end{array}
\end{equation}
The main step towards mimicking \cref{alg:treePartitionTours} is to transform a solution of the \ref{eq:PCST_relaxation} to one of the \ref{eq:rel_PCTSP} to be able to apply \cref{lem:fractional_tree_partition} in a black-box way, and later transform the obtained components back (see \cref{algitem:backtransform} of \cref{alg:PCST}). The latter is required because edge costs in \PCST{} are not necessarily metric.

\begin{algorithm2e}[!ht]
\caption{A simple $2$-approximation for \PCST\strut{}}\label{alg:PCST}
\SetKwInOut{Input}{Input}
\Input{\PCST{} instance $(G,r,c,\pi)$ on $G=(V,E)$.}
\smallskip
\begin{stepsarabic}
\item\label{step:sol_PCST} Compute an optimal solution $(x^*, y^*)$ of the \ref{eq:PCST_relaxation}.

\item Consider the \PCTSP{} instance $(\bar G, r, \bar c, \pi)$ where $\bar c$ denotes the metric closure of the edge lengths $c$ on the complete graph $\bar G$ on $V$. Obtain a solution $(x,y^*)$ of the \ref{eq:rel_PCTSP} from $(2x^*, y^*)$ by splitting off until all degree constraints are satisfied.

\item Compute a set of trees $\mathcal{T}$ through \cref{lem:fractional_tree_partition} applied to $(x,y^*)$.

\item\label{algitem:backtransform} For each $T\in\mathcal{T}$, let $T_0$ be obtained from $T$ through replacing edges by shortest paths connecting their endpoints in $G$.

\item Let $S=\arg\min_{T\in\mathcal{T}}c(E[T_0])+\pi(V\setminus V[T_0])$.

\end{stepsarabic}
\Return{a minimum spanning tree on $V[S_0]$.}
\end{algorithm2e}

\begin{theorem}\label{thm:21PCST}
\cref{alg:PCST} returns, in polynomial time, a tree $T$ satisfying
$$ c(E[T]) + \pi(V \setminus V[T]) \leq 2\cdot c^\top x^* + \pi^\top (1-y^*)\enspace,$$
where $(x^*,y^*)$ is the optimal solution of the \ref{eq:PCST_relaxation} computed in \cref{step:sol_PCST} of the algorithm.
\end{theorem}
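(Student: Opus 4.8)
The plan is to mirror the probabilistic averaging argument used to prove \cref{thm:21PCTSP}, adding two ingredients that account for the fact that \PCST{} edge costs need not be metric: (i) certifying that the optimal LP solution computed in \cref{step:sol_PCST} can be turned into a \emph{feasible} solution of the \ref{eq:rel_PCTSP} on the metric closure $\bar G$ with $x$-cost at most $2\,c^\top x^*$ and with the $y$-part unchanged, and (ii) pulling the ``metric'' trees produced by \cref{lem:fractional_tree_partition} back to honest subgraphs of $G$ while controlling both cost and vertex coverage.

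For the first ingredient, I would observe that $(2x^*,y^*)$ already satisfies all cut constraints of the \ref{eq:rel_PCTSP}: for $S\subseteq V\setminus\{r\}$ and $v\in S$ the \ref{eq:PCST_relaxation} gives $x^*(\delta(S))\ge y^*_v$, hence $2x^*(\delta(S))\ge 2y^*_v$, and $y^*_r=1$. The only possible violations are of the degree constraints, and only in the direction of excess degree, so the splitting-off machinery developed in \cref{sec:splitting} for precisely this situation yields $(x,y^*)$ feasible for the \ref{eq:rel_PCTSP}: it preserves all cut constraints, leaves $y^*$ untouched, and does not increase $\bar c^\top x$ because $\bar c$ obeys the triangle inequality. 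Since $\bar c_e\le c_e$ for every $e\in E$ and $x^*$ is supported on $E$, this gives $\bar c^\top x\le \bar c^\top(2x^*)\le 2\,c^\top x^*$. Combined with polynomial-time solvability of the \ref{eq:PCST_relaxation} (ellipsoid, separating the cut constraints by min-cut computations) and of \cref{lem:fractional_tree_partition}, this also settles the running-time claim.

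Next comes the probabilistic analysis. Applying \cref{lem:fractional_tree_partition} to $(x,y^*)$ produces trees $\mathcal{T}$ containing $r$ with weights $\mu\in[0,1]^\mathcal{T}$, $\sum_T\mu_T=1$, $\sum_T\mu_T\chi^{E[T]}\le x$, and $\sum_{T\colon v\in V[T]}\mu_T=y^*_v$ for all $v$. For each $T$, the graph $T_0$ of \cref{algitem:backtransform} is obtained by replacing every edge $\{u,w\}\in E[T]$ by a shortest $u$-$w$ path of length $\bar c_{uw}$ in $G$, so $T_0$ is connected, $V[T]\subseteq V[T_0]$, and $c(E[T_0])\le\sum_{e\in E[T]}\bar c_e=\bar c(E[T])$. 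Sampling $T$ from $\mathcal{T}$ with marginals $\mu$ exactly as in \cref{thm:21PCTSP},
\begin{equation*}
\Exp[c(E[T_0])]\le\Exp[\bar c(E[T])]=\bar c^\top\!\Big(\sum_{T\in\mathcal{T}}\mu_T\chi^{E[T]}\Big)\le\bar c^\top x\le 2\,c^\top x^*,
\end{equation*}
and, using $V[T]\subseteq V[T_0]$,
\begin{equation*}
\Exp[\pi(V\setminus V[T_0])]\le\Exp[\pi(V\setminus V[T])]=\sum_{v\in V}\Big(1-\sum_{T\colon v\in V[T]}\mu_T\Big)\pi_v=\pi^\top(1-y^*).
\end{equation*}
Hence some $T\in\mathcal{T}$ satisfies $c(E[T_0])+\pi(V\setminus V[T_0])\le 2\,c^\top x^*+\pi^\top(1-y^*)$, so the minimizer $S$ chosen by the algorithm does as well; and since $S_0$ is a connected subgraph of $G$ spanning $V[S_0]$, a minimum spanning tree on $V[S_0]$ has cost at most $c(E[S_0])$ and covers exactly $V[S_0]$, so the returned tree has total cost at most $c(E[S_0])+\pi(V\setminus V[S_0])\le 2\,c^\top x^*+\pi^\top(1-y^*)$.

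The one step that requires genuine care — and hence the main obstacle — is the feasibility claim in (i): that splitting off starting from $(2x^*,y^*)$ can be pushed all the way to exact satisfaction of the degree constraints ($x(\delta(v))=2y^*_v$ for $v\ne r$ and $x(\delta(r))\le 2$) without ever dropping any cut value $x(\delta(S))$ below $2y^*_v$. This is exactly the splitting-off theory (in the tradition of Lov\'asz, Mader, and \textcite{bang-jensen_1995_preserving}) set up in \cref{sec:splitting}, so the argument reduces to invoking it; everything else above is routine bookkeeping.
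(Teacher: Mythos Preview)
Your proposal is correct and follows essentially the same route as the paper: argue $\bar c^\top x\le 2\,c^\top x^*$ via splitting off, apply \cref{lem:fractional_tree_partition}, do the probabilistic averaging over $\mathcal T$ with marginals $\mu$, and pass to the minimum spanning tree on $V[S_0]$. If anything, you are slightly more careful than the paper---you use $c(E[T_0])\le\bar c(E[T])$ and $V[T]\subseteq V[T_0]$ where the paper writes an equality in the first displayed line before immediately relaxing to an inequality, and you spell out why $(2x^*,y^*)$ satisfies the cut constraints and only overshoots on degrees---but there is no substantive difference in the argument.
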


\begin{proof}
As in \cref{alg:PCST}, let $\bar c$ denote the metric closure of the (not necessarily metric) edge lengths $c$.
Note that $\bar c^\top x \leq 2c^\top x^*$, as splitting off does not increase the total length with respect to metric lengths.
Additionally, for the connected multigraphs $T_0$ computed in \cref{algitem:backtransform}, we have $\bar c(E[T])=c(E[T_0])$ by definition.
With this at hand, we can again do a randomized analysis, where we sample a tree $T$ from $\mathcal{T}$ with marginals equal to the weights $\mu$ obtained through \cref{lem:fractional_tree_partition}.
This gives an expected objective value
\begin{align*}
\Exp[c(E[T_0]) + \pi(V\setminus V[T_0])]
&=
\Exp[\bar c(E[T])] + {\sum_{v\in V}} \Pr[v\notin T]\cdot \pi_v \\
&\leq
\sum_{T\in\mathcal{T}} \mu_T \bar c(E[T]) + \sum_{v\in V} \Bigg(1-\sum_{T\in\mathcal{T}\colon v\in V[T]} \mu_T\Bigg)\cdot \pi_v\\
&
\leq
\bar c^\top x + \sum_{v\in V} (1-y^*_v)\pi_v \\
&\leq
2c^\top x^* + \pi^\top (1-y^*)\enspace.
\end{align*}
Clearly, the same bound holds for a minimum spanning tree on $V[T_0]$.
Thus again, we get the desired bound in expectation, hence for at least one $T\in\mathcal{T}$, we obtain the claimed guarantee.
\end{proof}

\section{From splitting off to tree decompositions}\label{sec:splitting}

In this section, we prove \cref{lem:PCTSP_sol_properties,lem:solution_decomposition,lem:fractional_tree_partition}.
All three of these lemmas share a common technique in the background, namely \emph{splitting off}, hence we start by recalling the concept.
Splitting off was initially introduced as a fundamental tool in Graph Theory \cite{lovasz_1976_connectivity,mader_1978_reduction, frank_1992_on} for modifying a graph while maintaining certain connectivity properties.
We exploit a weighted version of splitting off.
Here, given a complete graph $G=(V,E)$ and edge weights $w\colon E \to \mathbb{R}_{\ge 0}$, a splitting operation at a vertex $v\in V$ is the following:
For two edges $e=\{v,u\}$ and $f=\{v,w\}$ incident to $v$ and $\delta \in (0, \min\{w(e),w(f)\}]$, reduce the edge weights of $e$ and $f$ by $\delta$, and increase the edge weight of $\{u,w\}$ by $\delta$.
In case $e=f=\{v,u\}$, the weight of $e$ is reduced by $\delta$, and we call the splitting operation \emph{degenerate}.
A splitting $(e,f,\delta)$ at $v$ is called \emph{feasible} if minimum $s$-$t$ cut sizes in $G$ with respect to the weights $w$ are preserved under the splitting for all $s,t\in V\setminus\{v\}$.
A \emph{complete splitting} at $v \in V$ denotes a sequence of feasible splitting operations at $v$ such that all edges incident to $v$ have weight zero after performing the splittings.
\textcite{frank_1992_on} showed that such a complete splitting always exists. It is straightforward to see that a complete splitting consisting of less than $|V|^2$ splitting operations can be found through a polynomial number of minimum $s$-$t$ cut computations, implying the following theorem.

\begin{theorem}\label{thm:complete_splitting}
Let $G=(V,E)$ be a complete graph with edge weights $w\colon E \to \mathbb{R}_{\ge 0}$.
Let $v \in V$ and $\beta \in [0,w(\delta(v))]$.
There is a deterministic algorithm that computes in polynomial time a sequence of less than $\operatorname{poly}(|V|)$ many feasible splitting operations at $v$ such that the edges incident to $v$ have total weight $\beta$ with respect to the resulting weight function.
Note that if $\beta = 0$, this results in a complete splitting at $v$.
\end{theorem}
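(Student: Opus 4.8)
The plan is to describe a whole splitting \emph{sequence} by its aggregate effect and then produce a suitable aggregate by linear programming. Write $d := w(\delta(v))$; if $\beta = d$ the empty sequence works, so assume $\beta < d$. Up to reordering, a sequence of splittings at $v$ is determined by the total amount $\sigma_{\{a,b\}}\ge 0$ that it splits onto each (possibly degenerate) pair $\{a,b\}$ with $a,b\in V\setminus\{v\}$: performing all of them turns $w$ into the weight function $w_\sigma$ obtained by decreasing $w(\{v,a\})$ by the total amount of splittings incident to $a$ and increasing $w(\{a,b\})$ by $\sigma_{\{a,b\}}$. Writing $\lambda_{w'}(s,t)$ for the minimum $s$-$t$ cut value under weights $w'$, consider the polytope
\[
P_\beta := \bigl\{\sigma\ge 0 \ :\ w_\sigma\ge 0,\quad w_\sigma(\delta(v)) = \beta,\quad \lambda_{w_\sigma}(s,t)\ge \lambda_w(s,t)\ \text{ for all } s,t\in V\setminus\{v\}\bigr\}\enspace.
\]
It has $O(|V|^2)$ variables, so any point of $P_\beta$ is supported on $O(|V|^2)$ coordinates and thus encodes a sequence of $O(|V|^2)$ splitting operations; the theorem's bound on the number of operations comes for free once we know how to compute such a point whose operations are all individually feasible.

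I would then establish three things. First, $P_\beta\neq\emptyset$: by the theorem of \textcite{frank_1992_on} a complete splitting at $v$ exists, i.e., there is $\sigma^0$ with $w_{\sigma^0}(\delta(v)) = 0$ that preserves every minimum $s$-$t$ cut for $s,t\neq v$; scaling it by $t := 1-\beta/d\in(0,1]$ yields $t\sigma^0\in P_\beta$, since the degree at $v$ then drops to exactly $\beta$, the per-edge nonnegativity gets only easier, and the amount subtracted from any cut $\delta(X)$ with $v\notin X$ scales linearly with $t\le 1$, hence stays above the value it has to preserve. Second, $P_\beta$ admits efficient separation: the constraints $w_\sigma\ge 0$ and $w_\sigma(\delta(v)) = \beta$ are checked directly, and the family $\lambda_{w_\sigma}(s,t)\ge\lambda_w(s,t)$ is checked by computing a Gomory--Hu tree of $w_\sigma$ and comparing with the precomputed one of $w$, a violated pair $(s,t)$ giving a set $X\not\ni v$ that certifies a violated inequality. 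Thus the ellipsoid method finds a point $\sigma\in P_\beta$ in deterministic polynomial time, and we output the $O(|V|^2)$ corresponding operations in an arbitrary order.

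The only step needing real care is checking that these operations, in whatever order, are each \emph{individually} feasible. The key is a monotonicity observation: for every $X$ with $v\notin X$ and every partial aggregate $\sigma'\le\sigma$ reached along the sequence, each single splitting changes $w(\delta(X))$ by $0$ or by a negative multiple of its amount and never increases it, so $w_{\sigma'}(\delta(X))$ decreases monotonically and stays at least $w_\sigma(\delta(X))$. Since a minimum $s$-$t$ cut with $s,t\neq v$ can always be taken on the side avoiding $v$, this forces $\lambda_{w_{\sigma'}}(s,t) = \lambda_w(s,t)$ after \emph{every} prefix (it never rises because splittings only shrink such cuts, and never falls below $\lambda_w(s,t)$ because of the defining constraint of $P_\beta$). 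Hence every single operation preserves all minimum $s$-$t$ cuts for $s,t\neq v$, i.e., is feasible; as the degree at $v$ strictly decreases along the sequence and ends at $\beta$, the statement follows, with the ``$\beta=0$'' remark being immediate. I expect the main obstacle to be making this monotonicity-plus-reordering argument and the bookkeeping around degenerate splittings fully precise; the existence of a complete splitting is imported wholesale from \textcite{frank_1992_on}, and as an alternative to the ellipsoid-based construction one can instead invoke the combinatorial edge-splitting algorithms of \cite{complete_splitting,nagamochi_2006_fast}.
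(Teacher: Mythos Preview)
Your argument is correct, but it takes a rather different route from the paper. The paper does not give a detailed proof at all: it simply invokes \textcite{frank_1992_on} for the existence of a complete splitting, asserts that ``a complete splitting consisting of less than $|V|^2$ splitting operations can be found through a polynomial number of minimum $s$-$t$ cut computations'' (i.e., greedily pick a feasible pair, split by the maximal admissible amount, repeat), and handles $\beta>0$ by pruning the sequence once the residual degree reaches $\beta$.

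Your approach is more global: you encode the entire splitting sequence by its aggregate $\sigma$, describe the admissible aggregates by the polytope $P_\beta$, certify nonemptiness by scaling a complete splitting, and use the ellipsoid method with a Gomory--Hu based separation oracle to compute a point. The nice extra ingredient is the monotonicity observation that for any $X\not\ni v$ the cut value $w_{\sigma'}(\delta(X))$ is nonincreasing along $\sigma'\le\sigma$, which lets you conclude that \emph{every} ordering of the nonzero coordinates yields a sequence of individually feasible operations; this is a genuine (and clean) argument that the paper never needs, since its greedy construction is feasible step-by-step by design. The trade-off is that your route imports heavier machinery (ellipsoid, Gomory--Hu) to prove a statement the paper treats as essentially folklore, whereas the paper's pruning argument is a one-liner. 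You already note the combinatorial alternative of \cite{complete_splitting,nagamochi_2006_fast} at the end, which is closer in spirit to what the paper has in mind.
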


We remark that it is known that already a linear number of splitting operations are sufficient to obtain a complete splitting.
For further details, we refer to \cite{complete_splitting, nagamochi_2006_fast} and references therein.
Moreover, results of the above kind are typically proved for $\beta=0$ only.
The extension to $\beta>0$ is immediate, since a sequence of splitting operations at a vertex $v$ can simply be pruned as soon as the edges incident to $v$ have remaining weight $\beta$.
Having \cref{thm:complete_splitting} at hand, we can readily prove \cref{lem:PCTSP_sol_properties}.

\begin{proof}[Proof of \cref{lem:PCTSP_sol_properties}]
Consider the point $(\lambda x, y_\lambda)$.
This point satisfies all cut constraints in the \ref{eq:rel_PCTSP}, but might violate some of the degree constraints.
If not, we are done; hence assume the contrary and fix a vertex $s\in V$ such that the degree constraint at $s$ is violated.
We then have $\lambda x(\delta(s)) = 2\lambda y_s > 2y_{\lambda,s} = 2$.
Thus, we can apply \cref{thm:complete_splitting} with $\beta = 2$ at the vertex $s$, resulting in a new weight function $\bar x$ after performing the splitting.
All splitting operations are feasible, hence minimum $r$-$v$ cut sizes are preserved for all $v \in V \setminus \{r,s\}$, and, in particular, equal to $\bar x(\delta(v))=\lambda x(\delta(v))$ due to the initial degree constraints satisfied by $x$.
Consequently, $(\bar x, y_\lambda)$ satisfies all cut constraints in the \ref{eq:rel_PCTSP}.
By the same argument, there cannot be degenerate splitting operations.
Hence, degrees of vertices other than $s$ are invariant under the splitting, i.e., degree constraints satisfied by $\lambda x$ are also satisfied by $\bar x$.
Additionally, $\bar x$ satisfies the degree constraint at $s$ by construction.
Iterating this for all vertices $s\in V$ with violated degree constraints leads to a pair $(x_\lambda, y_\lambda)$ feasible for the \ref{eq:rel_PCTSP}.
By the triangle inequality, applying a splitting off operation does not increase the total cost of the solution, hence $c^\top x_\lambda \leq \lambda c^\top x$, as desired.
\end{proof}

Next, we show \cref{lem:solution_decomposition,lem:fractional_tree_partition}.
More precisely, we prove that the following common generalization of the two lemmas is true.

\begin{lemma}\label{lem:general_solution_decomposition}
Let $(x, y)$ be feasible for the \ref{eq:rel_PCTSP}, and let $U \subseteq V_1 \coloneqq \{v\in V\colon y_v = 1\}$. Assume that there is an edge $e_0 \in \binom{U}{2}$ with $x_{e_0}\ge1$.
We can in polynomial time construct a set $\mathcal{T}$ of trees and weights $\mu \in [0,1]^\mathcal{T}$ with the following properties:
\begin{enumerate}

\item\label{item:general_solution_partitioning} The solution $x$ is a conic combination of the trees in $\mathcal{T}$ with weights $\mu$ and the edge $e_0$, i.e.,
$$
x = \sum_{T\in \mathcal{T}}\mu_T\chi^{E[T]} + \chi^{e_0} \enspace.
$$

\item\label{item:general_incident_trees} For every $v\in V \setminus U$,
$$
\sum_{T\in \mathcal{T}\colon v\in V[T]} \mu_T = y_v \enspace.
$$

\item\label{item:general_anchor_points} For every $T\in\mathcal{T}$, we have $|V[T]\cap U|=2$, and we call the vertices in $V[T]\cap U$ the \emph{anchors} of $T$.

\item\label{item:general_subtour_property} For $T \in \mathcal{T}$, let $e_T\coloneqq V[T]\cap U$ denote the edge joining the anchors of $T$, and let $H\coloneqq (U, F)$ be the multigraph with edge set $F\coloneqq \{e_0\}\cup \{e_T\colon T\in \mathcal{T}\}$.
Then
$$
z \coloneqq \sum_{T\in \mathcal{T}}\mu_T\chi^{e_T} + \chi^{e_0} \in \PHK(H)\enspace.
$$
\end{enumerate}

\end{lemma}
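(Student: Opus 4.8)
The plan is to prove \cref{lem:general_solution_decomposition} by induction on $|V\setminus U|$, peeling the vertices of $V\setminus U$ off one at a time via splitting off and then, going back up the recursion, re‑inserting each peeled vertex into the recursively produced trees. Before the induction I would make three harmless normalizations: first, we may assume $r\in U$ (this is the case in all applications, where $U=V_1$ or $U=\{r,r'\}$); second, since $U\neq\emptyset$ contains both endpoints of $e_0$, some vertex has $y_v=1$, so the cut constraint for $S=V\setminus\{r\}$ forces $x(\delta(r))=2$ and hence $x(\delta(u))=2$ for every $u\in U$; third, by replacing an edge $e$ with $x_e>1$ by parallel copies carrying weights in $[0,1]$ summing to $x_e$, we may assume $x_e\le 1$ for $e\neq e_0$, which is what ultimately yields $\mu\in[0,1]^{\mathcal T}$. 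Throughout the induction I would keep the invariant that the point $z$ of \cref{item:general_subtour_property} equals the vector obtained from $x$ by completely splitting off all of $V\setminus U$; this makes \cref{item:general_subtour_property} self‑propagating once it is checked in the base case.

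In the base case $V=U$, every tree of the decomposition has all its vertices in $U$, so by \cref{item:general_anchor_points} it must be a single edge. I therefore take $\mathcal T$ to consist of one tree per positive‑weight edge $e$, with weight $x_e$ (respectively $x_{e_0}-1$ for $e_0$), using parallel copies to keep weights in $[0,1]$. Properties \ref{item:general_solution_partitioning}, \ref{item:general_incident_trees} (vacuous) and \ref{item:general_anchor_points} are immediate, and \ref{item:general_subtour_property} holds because $z=x\in\PHK(G[U])$: the degree equalities are the second normalization, and $x(\delta(S))\ge 2$ for $\emptyset\neq S\subsetneq U$ follows from the LP cut constraints since each such $S$ contains a vertex with $y_v=1$.

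For the inductive step, pick $v\in V\setminus U$ (so $v\neq r$) and use \cref{thm:complete_splitting} with $\beta=0$ to completely split off $v$, obtaining $x'$ on $V\setminus\{v\}$. All min $r$--$w$ cuts are preserved, no degenerate splittings occur (as in the proof of \cref{lem:PCTSP_sol_properties}, the singleton cuts $\{u\}$ are tight), so degrees of all $u\neq v$ are unchanged; $e_0$ is untouched; hence $(x',y|_{V\setminus\{v\}})$ is feasible for the \ref{eq:rel_PCTSP} and still meets the hypotheses with the same $U$ and $e_0$. Applying the induction hypothesis gives $(\mathcal T',\mu')$; its point $z'$ equals the result of splitting off $V\setminus U$ from $x$, so \cref{item:general_subtour_property} transfers verbatim. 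Writing the complete splitting as operations $(\{v,a_i\},\{v,b_i\},\delta_i)_{i=1}^k$ with the pairs $\{a_i,b_i\}$ distinct, one has $x=x'+\sum_i\delta_i(\chi^{\{v,a_i\}}+\chi^{\{v,b_i\}}-\chi^{\{a_i,b_i\}})$ and $\sum_i\delta_i=\tfrac12 x(\delta(v))=y_v$. I re‑insert $v$ by selecting, for each $i$, a sub‑collection of trees of $\mathcal T'$ that contain the edge $\{a_i,b_i\}$, of total $\mu'$‑mass exactly $\delta_i$ (splitting a tree into two parallel copies of smaller weight where needed), and replacing in each selected tree the edge $\{a_i,b_i\}$ by the path $a_i$--$v$--$b_i$. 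Subdividing an interior edge keeps a tree a tree, does not move its anchors or change its anchor edge $e_T$, and does not change which other vertices it contains; so \cref{item:general_anchor_points}, the invariant (hence \cref{item:general_subtour_property}), and \cref{item:general_incident_trees} for every $w\neq v$ are preserved, while \cref{item:general_solution_partitioning} holds by the displayed identity for $x$.

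The heart of the argument, and the step I expect to be the real obstacle, is performing this selection so that \cref{item:general_incident_trees} also holds for $v$, i.e.\ so that the trees receiving $v$ carry total weight exactly $y_v=\sum_i\delta_i$. Two constraints are forced: within a single tree at most one of the edges $\{a_i,b_i\}$ may be subdivided (subdividing two would put $v$ into the tree twice and create a cycle), and the $\delta_i$ mass for split $i$ must come from trees actually containing $\{a_i,b_i\}$. Thus one needs a fractional transversal — pairwise‑disjoint sub‑collections of masses $\delta_1,\dots,\delta_k$, the $i$‑th drawn from the trees using $\{a_i,b_i\}$ — and the crux is verifying the corresponding Hall condition over every subset of splits. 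This is precisely where the construction must depart from \textcite{bang-jensen_1995_preserving}: anchoring each component at only two vertices of $U$ (rather than forcing it to span all of $U$) is what provides the needed slack, and I would expect to establish Hall's condition either by exploiting an additional structural property maintained in the recursively produced decomposition or by choosing the complete splitting of $v$ judiciously so that the created edges $\{a_i,b_i\}$ are spread over enough $\mu'$‑mass. Two easy loose ends complete the proof: each anchor has degree exactly $1$ in its tree throughout (true for the single‑edge trees of the base case and preserved by interior subdivisions), which is exactly what makes $z(\delta_H(u))=x(\delta(u))=2$; and every step is polynomial, since \cref{thm:complete_splitting} runs in polynomial time and there are $|V\setminus U|$ recursion levels each doing polynomially much tree surgery.
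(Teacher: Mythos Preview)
Your overall architecture matches the paper's: induction on $|V\setminus U|$, splitting off a vertex, applying the induction hypothesis, and then re-inserting that vertex into the recursively produced trees. The base case and the bookkeeping for \cref{item:general_solution_partitioning}, \cref{item:general_anchor_points}, and \cref{item:general_subtour_property} are essentially the same. The gap you yourself flag, however, is genuine and is exactly where the paper's argument diverges from yours.

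Your re-insertion scheme only ever subdivides an edge $\{a_i,b_i\}$ into the path $a_i$--$v$--$b_i$, and therefore you need the disjoint fractional transversal you describe. The required Hall condition can fail: nothing prevents a single tree $T\in\mathcal T'$ of weight $\mu'_T$ from containing several of the edges $\{a_i,b_i\}$ created by the splitting, while being the only tree that carries those edges; then the total available mass is $\mu'_T$ but the demand is $\sum_i\delta_i>\mu'_T$. Neither of your two proposed fixes is what the paper does, and it is not clear either would succeed.

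The paper resolves this in two moves. First, it does \emph{not} pick an arbitrary vertex but the one $s\in V\setminus U$ of \emph{minimum} connectivity $y_s$. Second, it abandons the ``pure subdivision'' picture: when reverting a split $(\{s,u\},\{s,w\},\delta)$, a selected tree already containing $s$ gets only \emph{one} of the two edges $\{s,u\},\{s,w\}$ (so it stays a tree), and the lost $\mu_T$ is recorded as $\spare{u}$ or $\spare{w}$. After all splittings are reverted, one has $\sum_{T\ni s}\mu_T+\sum_w\spare{w}=y_s$. The spares are then spent by attaching $s$ as a pendant (via the edge $\{s,w\}$) to trees that contain $w$ but not $s$. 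Here the minimum-connectivity choice is exactly what guarantees enough such trees exist: $\sum_{T\ni w}\mu_T\ge y_w\ge y_s=\sum_{T\ni s}\mu_T+\sum_v\spare{v}$. This replaces your Hall-type argument entirely and simultaneously explains why the resulting trees need not be mere subdivisions of the base-case edges (they can have genuine limbs).
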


We start by showing that indeed, \cref{lem:general_solution_decomposition} is a common generalization of \cref{lem:solution_decomposition,lem:fractional_tree_partition}.

\begin{proof}[Proof of \cref{lem:solution_decomposition,lem:fractional_tree_partition}]
\cref{lem:solution_decomposition} immediately follows from \cref{lem:general_solution_decomposition} by choosing $U=V_1$.
In order to utilize \cref{lem:general_solution_decomposition} to prove \cref{lem:fractional_tree_partition}, we note that we can guarantee that \cref{ass:e_0} is satisfied:
If not, modify the given instance as described in \cref{sec:our_approach} and transform the trees computed in the thereby obtained auxiliary graph into trees in the original graph by contracting $e_0$ and deleting some edge if necessary.

Now, choose $U$ to be the vertex set containing the endpoints of $e_0$ only.
In this case, the graph $H$ in \cref{item:general_subtour_property} consists of parallel edges only, with one of them being $e_0$, and every tree in $\mathcal{T}$ corresponding to one of the other edges.
Moreover, the total edge weight of any feasible solution in the Held-Karp polytope over a two-vertex graph is $2$.
In $z$, at least one unit is taken by the edge $e_0$; hence we must have $\sum_{T\in\mathcal{T}}\mu_T \le 1$. On the other hand, by \cref{item:general_incident_trees} $$\sum_{T\in\mathcal{T}}\mu_T \ge \sum_{T \in \mathcal{T}\colon r\in V[T]} \mu_T = y_r = 1 \enspace,$$ implying $\sum_{T\in\mathcal{T}}\mu_T = 1$, and $r \in V[T]$ for each $T \in \mathcal{T}$.
\end{proof}

It remains to prove \cref{lem:general_solution_decomposition}.
On a high level, the proof is organized as follows.
We apply \cref{thm:complete_splitting} iteratively to find complete splittings at vertices $v\in V\setminus U$, in order of increasing connectivity $y_v$, until only vertices in $U$ remain.
In the resulting graph, it is trivial to find a decomposition meeting the requirements: We can simply consider each individual edge as a tree, with the corresponding weight being the edge weight.
Carefully undoing the splitting off operations in a way similar to \textcite{bang-jensen_1995_preserving}, and exploiting that we did them in increasing order of connectivity, we will show how to alter the initial set of trees so to obtain one with the desired properties for the original instance.

\begin{proof}[Proof of \cref{lem:general_solution_decomposition}]
We prove the statement by induction on $|V \setminus U|$.
If $|V \setminus U|=0$, there is nothing to be done: Creating a tree $T$ from every single edge, with $\mu_T$ being equal to the edge weight (or equal to the edge weight minus 1 in case of $e_0$), clearly satisfies all the properties.

If $|V\setminus U|>0$, consider a vertex $s\in V\setminus U$ of minimum connectivity $y_s$.
By \cref{thm:complete_splitting}, we can efficiently compute a sequence of feasible splitting off operations resulting in a complete splitting at $s$. Note that the weights of minimum $r$-$v$-cuts are preserved under the splittings for all $v \in V \setminus \{r,s\}$.
By the degree constraints of the \ref{eq:rel_PCTSP}, this implies that the degrees of vertices different from $s$ are preserved as well.
The latter has two implications:
First, none of the splitting operations at $s$ can be degenerate, as degenerate splittings would reduce the degree of some vertex.
Second, the graph on $V'=V \setminus \{s\}$ satisfies the assumptions of \cref{lem:general_solution_decomposition} with respect to the resulting edge weights.
Consequently, by the inductive assumption, we can compute in polynomial time edge-disjoint trees $\mathcal{T}$ with the desired properties; in particular, for every $v\in V' \setminus U$,
$$
\sum_{T\in \mathcal{T}\colon v\in V[T]} \mu_T = y_v \enspace.
$$

We now undo the splitting operations at $s$ and modify the trees in $\mathcal{T}$ accordingly.
We note that while intermediate steps may lead to $\mathcal{T}$ being a multiset of trees, we can merge identical trees at the very end by simply adding their weights.
By \cref{item:subtour_property}, this does not generate weights larger than $1$.
Before we start undoing the splitting operations, we initialize auxiliary variables $\spare{v}=0$ for each $v \in V'$.

Let $(e=\{s,u\},f=\{s,v\},\delta)$ with $e \neq f$ be one of the splitting off operations at $s$ that we want to revert.
Let $\mathcal{T}'=\{T_1,\dots, T_k\} \subseteq \mathcal{T}$ be a minimal subset of trees with $\{u,v\} \in E[T]$ for each $T \in \mathcal{T}'$ and $\sum_{T\in \mathcal{T}'} \mu_T \ge \delta$.
Let $\varepsilon =  \sum_{T\in \mathcal{T}'}\mu_T -\delta$.
Note that $\varepsilon < \mu_{T_k}$ by the minimality of $\mathcal{T}'$.
If $\varepsilon > 0$, add a copy $T_k'$ of $T_k$ to $\mathcal{T}$, set $\mu_{T'_k} = \varepsilon$, and reduce $\mu_{T_k}$ by $\varepsilon$.
After this modification, $\sum_{T \in \mathcal{T}'} \mu_T= \delta$.
For each $T \in \mathcal{T}'$ do the following:
\begin{enumerate}
\item\label{item:not_contained} If $s \notin V[T]$, remove $\{u,v\}$ from $E[T]$ and add $\{s,u\}$ and $\{s,v\}$ to $E[T]$.

\item\label{item:contained} If $s \in V[T]$, remove $\{u,v\}$ from $E[T]$ and add either $\{s,u\}$ or $\{s,v\}$ to $E[T]$ such that $T$ remains acyclic.
If $\{s,u\}$ is added to $E[T]$ increase $\spare{v}$ by $\mu_T$. Otherwise, increase $\spare{u}$ by $\mu_T$.
\end{enumerate}
Note that in case \ref{item:not_contained}, the total weight of trees containing $s$ increases by $\mu_T$.
Otherwise, either $\spare{u}$ or $\spare{v}$ increases by $\mu_T$.
Consequently, through the above operations, $\sum_{T \in \mathcal{T}\colon s \in V[T]} \mu_T + \sum_{v \in V'} \spare{v}$ increases by $\delta$. Since the splitting off operation $(e,f,\delta)$ decreased the degree of $s$ by $2\delta$, we get, after reverting all splitting off operations at $s$,
\[
\sum_{T \in \mathcal{T}\colon s \in V[T]} \mu_T + \sum_{v \in V'} \spare{v} = \frac{x(\delta(s))}{2} = y_s\enspace.
\]

Now, for each $w \in V'$ do the following:
If $\spare{w} > 0$, find a minimal subset of trees $\mathcal{T}'' = \{T_1,\dots, T_l\} \subseteq \mathcal{T}$ with $w \in V[T]$ but $s \notin V[T]$ for each $T \in \mathcal{T}''$, and $\sum_{T\in \mathcal{T}''} \mu_T \ge \spare{w}$.
Note that such a subset of trees always exists since
\[
\sum_{T\in \mathcal{T}\colon w\in V[T]} \mu_T \ge y_w \ge y_s = \sum_{T \in \mathcal{T}\colon s \in V[T]} \mu_T + \sum_{v \in V'} \spare{v} \enspace,
\]
where we used in the second inequality that $s$ has minimum connectivity. Note that the first inequality is satisfied by the inductive assumption:
If $w \notin U$, the bound is tight by \cref{item:general_incident_trees} of \cref{lem:general_solution_decomposition}.
If $w \in U$, then $\sum_{T\in \mathcal{T}\colon w\in V[T]} \mu_T \geq 1 = y_w$ by \cref{item:general_subtour_property} of \cref{lem:general_solution_decomposition} and the degree constraints in the Held-Karp polytope (note that $e_0$ may be incident to $w$).
Let $\varepsilon = \sum_{T\in \mathcal{T}''}\mu_T - \spare{w}$. Note that $\varepsilon < \mu_{T_l}$ by the minimality of $\mathcal{T}''$.
If $\varepsilon > 0$, add a copy $T_l'$ of $T_l$ to $\mathcal{T}$ and set $\mu_{T'_l} = \varepsilon$ and reduce $\mu_{T_l}$ by $\varepsilon$.
For each $T \in \mathcal{T}''$, add $\{s,w\}$ to $E[T]$.
Note that this increases the total weight of the trees containing $s$ by $\spare{w}$.
Hence, after using up all spares, we get
\begin{equation}\label{eq:incident_trees}
\sum_{T \in \mathcal{T}\colon s \in V[T]} \mu_T = y_s\enspace.
\end{equation}

Note that throughout the above operations, the graphs $T \in \mathcal{T}$ are trees, and after reverting all splitting off operations, $x$ is a conic combination of the trees in $\mathcal{T}$ with weights $\mu$ and the edge $e_0$ by construction.
Furthermore, the above operations do not change the intersection of the trees in $\mathcal{T}$ with $U$, and splitting trees does not affect \cref{item:general_subtour_property}.
Hence, \cref{item:general_anchor_points} and \cref{item:general_subtour_property} still hold after the above operations.
Moreover, for every vertex in $V'$, the total weight of trees covering this vertex is unchanged.
Hence, \cref{item:general_incident_trees} is still satisfied for each $v \in V' \setminus U$, and also for $s$ by~\eqref{eq:incident_trees}.

Finally, we note that our construction can be executed in polynomial time.
Indeed, by \cref{thm:complete_splitting}, in each step of our inductive procedure, we have to revert less than $\operatorname{poly}(|V|)$ many splitting operations, which increases the total number of trees in $\mathcal{T}$ by an additive $\operatorname{poly}(|V|)$.
This implies that the size of $\mathcal{T}$ remains polynomially bounded throughout.
\end{proof}
\section{Sampling walks and derandomization}\label{sec:randomization}

In this section, we prove \cref{lem:walk_sampling}, which is at the core of our new randomized algorithm, giving the basis of the returned tour.
Furthermore, we show how to replace the randomized steps of our algorithm with deterministic procedures while obtaining the same guarantees.
This includes derandomizing \cref{lem:walk_sampling}, but also derandomizing the random choice of a suitable threshold $\gamma$.
Together with the analysis presented earlier, these ingredients prove our main result, \cref{thm:improvedApproximation}.

Before going into the more technical parts, let us recall the pipage rounding procedure, which we use in a black-box way for both the proof of \cref{lem:walk_sampling} and its deterministic analogue.
Pipage rounding is a technique that goes back to \textcite{ageev_2004_pipage,srinivasan_2011_distributions}.
Generalized to a matroid setting and put into a randomized framework by \textcite{calinescu_2011_maximizing},
randomized pipage rounding is one of several extensively studied efficient sampling procedures in matroid and matroid base polytopes (also see \cite{chekuri_2010_dependent,harvey_2014_pipage} and references therein).
Generally, the idea is to start from a fractional point $x$, and modify this solution iteratively in a well-chosen way until an integral solution is found.
In pipage rounding, each single modification is a maximal step along a direction parallel to some $e_i-e_j$, where $e_k$ is the $k^{\text{th}}$ unit vector.
Because the directions $e_i-e_j$ are precisely the edge directions of matroid base polytopes, such moves are enough to eventually end up in a vertex.
In a randomized setting, probabilities for going either way are chosen such that in every step, marginals of every component are preserved.
Concretely, if for a matroid $\mathcal{M}$, $B_{\mathcal{M}}$ denotes the associated matroid base polytope, i.e., the convex hull of all characteristic vectors of bases in $\mathcal{M}$, one can obtain the following.

\begin{theorem}[{\cite[direct implication of Theorem 1.1]{chekuri_2010_dependent}}]\label{thm:matroidRounding}
Let $x=(x_1,\ldots,x_n)\in B_{\mathcal{M}}$ be a fractional solution in the matroid base polytope, and let $X=(X_1,\ldots,X_n)\in\{0,1\}^n$ be an integral solution obtained from $x$ using randomized pipage rounding.
Then, the following holds.
\begin{enumerate}
\item\label{item:marginals} For every $i\in [n]$, we have $\Exp[X_i]=x_i$.
\item\label{item:neg_correlation} For every $T\subseteq [n]$, we have
$$
\Exp\left[\prod_{i\in T}X_i\right]\leq \prod_{i\in T}x_i
\qquad\text{and}\qquad
\Exp\left[\prod_{i\in T}(1-X_i)\right]\leq \prod_{i\in T}(1-x_i)\enspace.
$$
\end{enumerate}
\end{theorem}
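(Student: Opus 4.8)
The plan is to obtain \cref{thm:matroidRounding} as essentially a restatement of the analysis of randomized pipage rounding on matroid base polytopes due to \textcite{chekuri_2010_dependent}; in fact, the two inequalities in \cref{item:neg_correlation} are precisely the ``negative correlation'' guarantees proved there for the variables $X_i$ and their complements $1-X_i$, so the only real work is to recall the procedure and isolate the elementary per-step computation that drives everything. I would first recall the procedure: starting from $x^{(0)}=x\in B_{\mathcal{M}}$, as long as $x^{(t)}$ is not a vertex of $B_{\mathcal{M}}$, one picks a pair of fractional coordinates $i,j$ lying in a common tight set, so that moving from $x^{(t)}$ along the edge direction $e_i-e_j$ stays inside $B_{\mathcal{M}}$ for a positive distance in both orientations; letting $\alpha,\beta>0$ be the maximal such distances, one sets $x^{(t+1)}=x^{(t)}+D_{t+1}(e_i-e_j)$, where the random step $D_{t+1}$ equals $\alpha$ with probability $\tfrac{\beta}{\alpha+\beta}$ and $-\beta$ with probability $\tfrac{\alpha}{\alpha+\beta}$. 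That such a pair $(i,j)$ always exists for a fractional point of a matroid base polytope, and that each maximal move strictly decreases the fractional support so that the process reaches a base $X$ after polynomially many polynomial-time rounds, is exactly where matroid exchange enters; I would cite \cite{chekuri_2010_dependent} (or \cite{calinescu_2011_maximizing}) for this and treat it as a black box.

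Granting the procedure, both properties follow from the identity $\Exp[D_{t+1}\mid x^{(t)}]=\alpha\cdot\tfrac{\beta}{\alpha+\beta}-\beta\cdot\tfrac{\alpha}{\alpha+\beta}=0$. For \cref{item:marginals}, this gives $\Exp[x^{(t+1)}_k\mid x^{(t)}]=x^{(t)}_k$ for every coordinate $k$ (each round changes $x_k$ by $0$ or by $\pm D_{t+1}$), and the tower rule yields $\Exp[X_i]=x_i$. For \cref{item:neg_correlation}, fix $S\subseteq[n]$ and put $Y_t:=\prod_{i\in S}x^{(t)}_i$. A round touches only the two coordinates $i,j$: if $S\cap\{i,j\}=\emptyset$ then $Y_{t+1}=Y_t$; if $|S\cap\{i,j\}|=1$ then $\Exp[Y_{t+1}\mid x^{(t)}]=Y_t$ by the marginal computation; and if $\{i,j\}\subseteq S$ then, abbreviating $D=D_{t+1}$, $\Exp[x^{(t+1)}_i x^{(t+1)}_j\mid x^{(t)}]=\Exp[(x^{(t)}_i+D)(x^{(t)}_j-D)\mid x^{(t)}]=x^{(t)}_ix^{(t)}_j-\Exp[D^2\mid x^{(t)}]\le x^{(t)}_ix^{(t)}_j$, hence (the remaining factors being nonnegative) $\Exp[Y_{t+1}\mid x^{(t)}]\le Y_t$. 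So $(Y_t)_t$ is a supermartingale of a.s.\ bounded length, and $\Exp\big[\prod_{i\in S}X_i\big]\le\prod_{i\in S}x_i$. Running the same computation with $1-x^{(t)}_k$ in place of $x^{(t)}_k$ (note $1-x^{(t+1)}_i=(1-x^{(t)}_i)-D$ and $1-x^{(t+1)}_j=(1-x^{(t)}_j)+D$, so the cross terms cancel in exactly the same way) shows $\prod_{i\in S}(1-x^{(t)}_i)$ is a supermartingale too, giving the second inequality.

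I do not expect a genuine obstacle here: the probabilistic core is the one-line identity $\Exp[(a+D)(b-D)]=ab-\Exp[D^2]$ together with the supermartingale/tower argument, and the combinatorial core---that the swap/pipage step is well defined and terminates on a matroid base polytope---is standard and can be imported verbatim from \cite{chekuri_2010_dependent,calinescu_2011_maximizing}. The only point that needs a line of care is to insist that the pair $(i,j)$ can always be chosen with \emph{both} coordinates fractional, so that the per-round martingale identity is not vacuous and the number of rounds really is controlled by the number of fractional coordinates of $x$.
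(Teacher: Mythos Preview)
The paper does not prove \cref{thm:matroidRounding} at all; it is stated with a citation to \cite{chekuri_2010_dependent} and used as a black box. Your proposal correctly reconstructs the standard argument from that reference: the per-step martingale identity $\Exp[D_{t+1}\mid x^{(t)}]=0$ gives marginal preservation, and the supermartingale computation $\Exp[(a+D)(b-D)]=ab-\Exp[D^2]\le ab$ applied to $\prod_{i\in S}x^{(t)}_i$ (and symmetrically to $\prod_{i\in S}(1-x^{(t)}_i)$) gives negative correlation. This is exactly the proof in \cite{chekuri_2010_dependent,calinescu_2011_maximizing}, so there is no divergence to compare---you are simply supplying what the paper imports.
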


Thus, besides preserving marginals of a point in the spanning tree polytope exactly, randomized pipage rounding also guarantees that the events of edges being sampled as well as those of edges not being sampled are negatively correlated, a property that we crucially exploit to bound the total penalty that our \PCTSP{} solutions incur.
We remark that there are other known procedures achieving the same guarantees, in particular randomized swap rounding \cite{chekuri_2010_dependent}, or, at a $1\pm\varepsilon$ error in the marginals, sampling from $\lambda$-uniform spanning tree distributions (see, for example, \cite{asadpour_2017_lognloglogn, karlin_2021_slightly}).

For a deterministic version of pipage rounding, we use a formulation of \textcite{harvey_2014_pipage} that is based on earlier work mentioned above.
Here, the decisions on the direction of every single step are guided by a function $g\colon \mathcal{B}_{\mathcal{M}}\to\mathbb{R}$ that is required to be \emph{concave under swaps}, i.e., for all $x\in\mathcal{B}_{\mathcal{M}}$ and all unit vectors $e_i$, $e_j$, we require that the function $t\mapsto g(x+t(e_i-e_j))$ is concave.
This concavity under swaps allows to perform steps along decreasing values of $g$, and one can obtain the following.

\begin{theorem}[Deterministic Pipage Rounding \cite{harvey_2014_pipage}]\label{thm:det_pipage}
There is a deterministic, polynomial-time algorithm that, given $x \in B_{\mathcal{M}}$ and a value oracle for a function $g$ that is concave under swaps, outputs an extreme point $\hat{x}$ of $B_{\mathcal{M}}$ with $g(\hat{x}) \le g(x)$.
\end{theorem}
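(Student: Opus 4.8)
The plan is to prove the statement by a dimension-reduction argument on the faces of $B_{\mathcal{M}}$, maintaining throughout the invariant that the value of $g$ never increases. The structural fact I would invoke is that every edge (one-dimensional face) of a matroid base polytope is parallel to some $e_i-e_j$. Consequently, if a point $p\in B_{\mathcal{M}}$ is not a vertex, then the minimal face $\mathcal{F}$ of $B_{\mathcal{M}}$ containing $p$ has dimension at least $1$, hence contains an edge in direction $e_i-e_j$ for some $i\neq j$; and since $p\in\operatorname{relint}(\mathcal{F})$ and $e_i-e_j$ is parallel to the affine hull of $\mathcal{F}$, the line $\{p+t(e_i-e_j):t\in\mathbb{R}\}$ meets $\mathcal{F}$ in a compact segment $[\,p+t_-(e_i-e_j),\,p+t_+(e_i-e_j)\,]$ with $t_-<0<t_+$ and both endpoints lying on proper faces of $\mathcal{F}$ (the relative boundary of a polytope being the union of its proper faces).

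For the single rounding step, start from the current iterate $x$ (initially the given point). If $x$ is already a vertex we are done; otherwise compute such a pair $i\neq j$ together with the maximal feasible interval $[t_-,t_+]$, and consider $\phi(t)\coloneqq g\bigl(x+t(e_i-e_j)\bigr)$ on $[t_-,t_+]$. By hypothesis $g$ is concave under swaps, so $\phi$ is concave, and a concave function on a compact interval attains its minimum at an endpoint; hence $\min\{\phi(t_-),\phi(t_+)\}\le\phi(0)=g(x)$. Replace $x$ by whichever of $x+t_-(e_i-e_j)$ and $x+t_+(e_i-e_j)$ has the smaller $g$-value. Then $g$ does not increase, and since the new iterate lies on a proper face of $\mathcal{F}$, the dimension of the minimal face of $B_{\mathcal{M}}$ containing the iterate strictly decreases.

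Since $\dim B_{\mathcal{M}}\le n$ and this dimension drops by at least one per step, the procedure halts after at most $n$ iterations at an extreme point $\hat{x}$ of $B_{\mathcal{M}}$, and telescoping the per-step inequalities gives $g(\hat{x})\le g(x)$. Each step runs in polynomial time: the tight rank inequalities at $x$, which together with the tight bound constraints $x_i\in\{0,1\}$ cut out $\mathcal{F}$, can be identified with polynomially many rank-oracle queries; an edge direction $e_i-e_j$ of $\mathcal{F}$ can be extracted, for instance, by producing two distinct vertices of $\mathcal{F}$ (or directly from the support of $x$ via a matroid exchange argument); the endpoints $t_\pm$ come from a line search against these finitely many linear constraints; and $g$ is queried exactly twice. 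I expect the main technical obstacle to be the careful book-keeping behind the efficiency claim — namely, detecting non-extremality and producing a swap direction together with its maximal feasible interval in polynomial time — rather than the correctness argument, which rests only on the elementary observation that concavity under swaps forces the minimum along each swap line to sit at an endpoint; this is precisely why concavity along swap directions, rather than global concavity, is enough. All of this is carried out in \cite{harvey_2014_pipage}, building on \cite{ageev_2004_pipage,srinivasan_2011_distributions,calinescu_2011_maximizing}.
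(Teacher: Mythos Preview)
The paper does not prove \cref{thm:det_pipage}; it is stated as a black-box result imported from \cite{harvey_2014_pipage} (building on \cite{ageev_2004_pipage,srinivasan_2011_distributions,calinescu_2011_maximizing}) and used without further justification. Your sketch is essentially the standard deterministic pipage rounding argument from those references: use that matroid base polytope edges are parallel to $e_i-e_j$, move maximally along such a direction from the current iterate, pick the endpoint of smaller $g$-value by concavity under swaps, and iterate until reaching a vertex. There is thus nothing to compare against in this paper, and your outline is correct as a proof strategy; as you note yourself, the only nontrivial part is the polynomial-time implementation of each step (finding a valid swap direction and the maximal feasible interval), which is handled in the cited works.
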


\subsection[Proof of Lemma 7]{Proof of \cref{lem:walk_sampling}}\label{sec:proofWalkSampling}

As already outlined earlier, we prove \cref{lem:walk_sampling} by exploiting the decomposition of a solution $(x,y)$ of the \ref{eq:rel_PCTSP} into a family $\mathcal{T}$ of trees through \cref{lem:solution_decomposition}.
Concretely, we will split each tree into two walks with corresponding weights, one of them consisting of backbone edges only, while the other contains two copies of the limb edges on top (see \cref{fig:backbones_and_limbs} on \cpageref{fig:backbones_and_limbs}).
To sample walks, we leverage \cref{item:subtour_property} of \cref{lem:solution_decomposition}, concretely the implication that $z-\chi^{e_0}$ is in the spanning tree polytope over the multigraph $G_1=(V_1, E_1)$, where $E_1 \coloneqq \{e_0\} \cup \{e_T\colon T \in \mathcal{T}\}$.
In this graph, every edge corresponds to a tree; by duplicating every edge and splitting the edge weight accordingly, we can associate to every edge one of the walks constructed earlier.
Also, the resulting edge weights still form a point in the spanning tree polytope of the blown-up graph, and we sample a spanning tree in this graph with the edge weights as marginals through pipage rounding.
Sampled edges can then be replaced by the corresponding walks. The union of the edge sets of all these walks induces the multigraph $H$ claimed by \cref{lem:walk_sampling}.
Using the properties guaranteed by \cref{thm:matroidRounding}, we can prove \cref{lem:walk_sampling}.

\begin{proof}[Proof of \cref{lem:walk_sampling}]
Let $\mathcal{T}$ and weights $\mu\in[0,1]^\mathcal{T}$ be obtained from \cref{lem:solution_decomposition}.
We construct a set $\walks$ of walks as follows.
For each $T \in \mathcal{T}$ with anchors $s,t\in V_1$, add one $s$-$t$ walk $W_1(T)$ to $\walks$ that consists of $\bb{T}$ and two copies of $\lb{T}$, and set $\nu_{W_1(T)} = \frac34 \mu_T$.
As we add two copies of the limb edges, the resulting edge set has even degrees except at $s$ and $t$, and can thus indeed be cast as an $s$-$t$ walk.
On top of that, add another $s$-$t$ walk $W_2(T)$ to $\walks$ that consists of $\bb{T}$ only, and set $\nu_{W_2(T)} = \frac14 \mu_T$.

For each walk $W \in \walks$, let $e_W$ be the direct edge joining its start- and endpoint.
Let $F$ be the multiset of all these edges.
Now, consider the point $z_0 \in [0,1]^{F}$ defined by $z_{0}(e_W) = \nu_W$ for each $W \in \walks$.
As $z_0$ and the point $z-\chi^{e_0}$ (where $z$ is defined as in \cref{item:subtour_property} of \cref{lem:solution_decomposition}) differ only by splitting edges into parallel edges of the same total weight, it immediately follows that $z_0+\chi^{e_0}\in\PHK(G_0)$, where $G_0=(V_1,F)$.
From the polyhedral descriptions of $\PHK(G_0)$ and the spanning tree polytope $\PST(G_0)$ over $G_0$, one can see that this in turn implies that $z_0\in\PST(G_0)$.
Applying \cref{thm:matroidRounding} with $z_0$, we thus obtain a spanning tree of $(V_1,F_1)$ with $F_1\subseteq F$.
Let
$$
\mathcal{W} \coloneqq \{W \in \walks\colon e_W \in F_1\}\enspace.
$$
Because $\mathcal{W} \subseteq \walks$, each walk in $\mathcal{W}$ starts and ends in $V_1$.
Since $(V_1,F_1)$ is spanning, the multigraph $H$ induced by the edge sets of the walks in $\mathcal{W}$ is connected and always spans $V_1$. Let $v \in V \setminus V_1$. Then
\begin{align*}
	\tag{by \cref{thm:matroidRounding}\,\ref{item:neg_correlation}}
    \Prob[v \notin H]
    &\leq
    \prod_{W \in \walks \colon v \in V[W]} \Prob[e_W \notin F] \\
    &\le \prod_{T \in \mathcal{T}\colon v \in V[T]} \Prob[e_{W_1(T)} \notin F] \\
	\tag{by \cref{thm:matroidRounding}\,\ref{item:marginals}}
	&= \prod_{T \in \mathcal{T}\colon v \in V[T]} \left( 1 - \frac34 \mu_T \right)
    \\
    \tag{using $1-x\leq \exp(-x)$}
    &\leq
    \exp\Bigg(-\frac34\sum_{T\in \mathcal{T}\colon v\in V[T]} \mu_T\Bigg)\\
    \tag{by  \cref{lem:solution_decomposition}}
    &=\exp\left(-\frac{3y_v}4\right)\enspace.
\end{align*}
Consequently,
\begin{equation}\label{eq:exp_penalty}
\Exp[\pi(V\setminus V[H])]
= \sum_{v \in V \setminus V_1} \pi_v \Pr[v\notin H]
\leq \sum_{v \in V \setminus V_1} \pi_v \exp\left(-\frac{3y_v}4\right)\enspace.
\end{equation}
It remains to bound the expected total cost of the sampled walks. Using \cref{thm:matroidRounding}\,\ref{item:marginals} again, we get
\begin{align}
\nonumber
	\Exp[c(E[H])] &= \sum_{W\in \walks} \nu_W \cdot c(E[W]) \\
\nonumber
	&= \sum_{T\in \mathcal{T}} \left( \frac34 \mu_T \cdot c(E[W_1(T)]) + \frac14 \mu_T \cdot c(E[W_2(T)])  \right) \\
\label{eq:exp_length}
	&= \sum_{T\in \mathcal{T}} \mu_T \left( c(\bb{T}) + \frac32 \cdot c(\lb{T}) \right)\enspace.
\end{align}
Together, \eqref{eq:exp_penalty} and \eqref{eq:exp_length} imply the bound claimed by \cref{lem:walk_sampling}.
Finally, note that \cref{thm:matroidRounding} guarantees that we can sample $\mathcal{W}$, and thus obtain $H$, efficiently.
\end{proof}

\subsection{A deterministic selection of walks}\label{sec:derandomizeWalkSampling}

In this section, we show how to deterministically construct the graph $H$ given by \cref{lem:walk_sampling} while maintaining the overall guarantees on the performance of our algorithm.
To be precise, we show the following deterministic analogue of \cref{lem:walk_sampling}.

\begin{lemma}\label{lem:deterministic_walks}
Let $(x, y)$ be a feasible solution of the \ref{eq:rel_PCTSP} satisfying \cref{ass:e_0}.
Let $\mathcal{T}$ and $\mu\in [0,1]^\mathcal{T}$ be a set of trees and weights with the properties guaranteed by \cref{lem:solution_decomposition}.
We can deterministically and in polynomial time obtain a multigraph $H$ on a subset of $V$ with the following properties:
\begin{enumerate}
\item $H$ is connected, spans $V_1\coloneqq \{v\in V\colon y_v=1\}$, and all vertices in $V\setminus V_1$ have even degree.

\item\label{item:det_H_bound} We have
\begin{multline}
c(E[H]) + \pi(V\setminus V[H]) \\
\leq \sum_{T \in \mathcal{T}} \mu_T \cdot \left( c(\bb{T}) + \frac32 \cdot c(\lb{T}) \right) + \sum_{v\in V\setminus V_1}\pi_v \exp\left(-\frac{3y_v}{4}\right)\enspace.
\end{multline}
\end{enumerate}
\end{lemma}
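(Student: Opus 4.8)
The plan is to derandomize the proof of \cref{lem:walk_sampling} by replacing the single invocation of randomized pipage rounding (\cref{thm:matroidRounding}) with deterministic pipage rounding (\cref{thm:det_pipage}), applied to a suitably chosen \emph{pessimistic estimator} that is concave under swaps and that simultaneously dominates the expected cost and the expected penalty. Concretely, I would keep verbatim the setup from the proof of \cref{lem:walk_sampling}: build the walk set $\walks$ with the weights $\nu_{W_1(T)} = \frac34\mu_T$ and $\nu_{W_2(T)} = \frac14\mu_T$, form the blown-up graph $G_0 = (V_1, F)$ with the edge multiset $F = \{e_W : W\in\walks\}$, and observe $z_0 \in \PST(G_0)$ exactly as before. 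The only thing that changes is \emph{how} we round $z_0$ to a spanning tree.

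\textbf{Choice of the estimator.} For $x\in\PST(G_0)$ define
\begin{equation*}
g(x) \coloneqq \sum_{W\in\walks} x(e_W)\, c(E[W]) \;+\; \sum_{v\in V\setminus V_1} \pi_v \prod_{W\in\walks\colon v\in V[W]} \bigl(1 - x(e_W)\bigr)\enspace.
\end{equation*}
The first term is linear, hence concave under swaps. The second term is a sum over $v$ of (a nonnegative constant times) a product of affine functions $1-x(e_W)$; along a swap direction $e_i - e_j$ each factor is affine in $t$, so each summand is a product of at most two nonconstant affine factors in $t$ (only the coordinates $i$ and $j$ move) with all factors in $[0,1]$ — a concave parabola opening downward, or an affine function — hence concave in $t$. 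A sum of functions concave under swaps is concave under swaps, so $g$ is concave under swaps and \cref{thm:det_pipage} applies with a value oracle for $g$ (which is clearly poly-time computable). Feeding $z_0$ to \cref{thm:det_pipage} yields an extreme point $\hat x$ of $\PST(G_0)$, i.e. the incidence vector of a spanning tree, with $g(\hat x) \le g(z_0)$.

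\textbf{Reading off the two bounds.} Let $H$ be the multigraph induced by the walks $W$ with $\hat x(e_W) = 1$. Since $\hat x$ is a spanning tree of $G_0$ and $G_0$ spans $V_1$, the graph $H$ is connected and spans $V_1$; and because each $W_1(T)$ and $W_2(T)$ is a genuine $s$-$t$ walk (two copies of the limbs make the limb degrees even), every vertex of $V\setminus V_1$ has even degree in $H$. For the cost-plus-penalty bound, note that since $\hat x$ is $0/1$ we have, for every $v\in V\setminus V_1$, $\prod_{W\colon v\in V[W]}(1-\hat x(e_W)) = \mathbbm{1}[v\notin V[H]]$, and $\sum_W \hat x(e_W) c(E[W]) = c(E[H])$, so $g(\hat x) = c(E[H]) + \pi(V\setminus V[H])$. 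On the other side, $g(z_0)$ is exactly what the randomized analysis bounded in expectation: the linear term equals $\sum_{T}\mu_T(c(\bb{T}) + \frac32 c(\lb{T}))$ by the computation \eqref{eq:exp_length}, and for the penalty term, $\prod_{W\colon v\in V[W]}(1-z_0(e_W)) \le \prod_{T\colon v\in V[T]}(1-\frac34\mu_T) \le \exp(-\frac34\sum_{T\colon v\in V[T]}\mu_T) = \exp(-\frac{3y_v}{4})$, reusing the chain of inequalities in the proof of \cref{lem:walk_sampling} (first dropping the $W_2(T)$ factors, then $1-t\le e^{-t}$, then \cref{lem:solution_decomposition}\,\ref{item:incident_trees}). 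Hence $g(\hat x)\le g(z_0)\le \sum_T \mu_T(c(\bb{T}) + \frac32 c(\lb{T})) + \sum_{v\in V\setminus V_1}\pi_v\exp(-\frac{3y_v}{4})$, which is \cref{item:det_H_bound}. Everything runs in polynomial time by \cref{thm:det_pipage} and the polynomial bound on $|\mathcal{T}|$ from \cref{lem:solution_decomposition}.

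\textbf{Anticipated obstacle.} The one point that needs genuine (if routine) care is verifying concavity-under-swaps of the penalty term and checking that $g$ really upper-bounds $g(z_0)$ by the \emph{same} expression the randomized lemma produced — i.e. that dropping the $W_2(T)$ factors and applying $1-t\le e^{-t}$ is legitimate at the fractional point $z_0$, not just after rounding. Both go through because the relevant factors lie in $[0,1]$ and the manipulations are monotone; but one must be careful that $z_0(e_{W_1(T)}) = \frac34\mu_T$ (not $\mu_T$), which is exactly what makes the $\exp(-\frac34 y_v)$ — rather than $\exp(-y_v)$ — appear. No other step poses difficulty; this is a textbook instance of the pessimistic-estimator-plus-deterministic-pipage paradigm of \textcite{harvey_2014_pipage}.
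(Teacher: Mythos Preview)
Your proposal is correct and follows essentially the same approach as the paper's proof: the same pessimistic estimator $g$, the same setup with $G_0$ and $z_0$, and deterministic pipage rounding via \cref{thm:det_pipage}. The only minor difference is that you verify concavity-under-swaps of the penalty term directly (at most two affine factors change along a swap, yielding a downward-opening parabola times a nonnegative constant), whereas the paper instead observes that the penalty part of $g$ is the multilinear extension of a supermodular set function and invokes the general fact from \cite{calinescu_2011_maximizing} that such extensions are concave along swap directions.
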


To prove \cref{lem:deterministic_walks}, we recall the randomized construction from the proof of \cref{lem:walk_sampling} (see \cref{sec:proofWalkSampling}):
There, the graph $H$ is based on a spanning tree sampled in an auxiliary graph through randomized pipage rounding.
We now show that the deterministic analogue given through \cref{thm:det_pipage} can achieve the same guarantees.

\begin{proof}[Proof of \cref{lem:deterministic_walks}]
Starting from a solution $(x,y)$ of the \ref{eq:rel_PCTSP} and repeating the construction from the proof of \cref{lem:walk_sampling}, we define the multigraph $G_0=(V_1,F)$, where for each $T\in\mathcal{T}$, there are two parallel edges in $F$ joining the anchors of $T$:
One corresponding to the walk $W_1(T)$ consisting of $\bb{T}$ and two copies of $\lb{T}$ with weight $\nu_{W_1(T)}=\frac34\mu_{T}$, and one corresponding to the walk $W_2(T)$ consisting of $\bb{T}$ with weight $\nu_{W_2(T)}=\frac14\mu_T$.
Again, we denote by $\walks$ the set of all walks constructed this way, and let $e_W$ be the edge in $F$ corresponding to the walk $W$. Moreover, we know that $z_0\in[0,1]^F$ defined by $z_0(e_W)=\nu_W$ satisfies $z_0\in\PST(G_0)$.

We want to apply deterministic pipage rounding, i.e., \cref{thm:det_pipage}, in the spanning tree polytope $\PST(G_0)$ starting from $z_0$ with the function $g\colon \PST(G_0)\to\mathbb{R}$ defined by
\[
g(z) \coloneqq \sum_{W \in \walks} c(E[W]) z_{e_W}  + \sum_{v \in V \setminus V_1} \pi_v \prod_{W \in \walks\colon v \in W} (1- z_{e_W}) \enspace.
\]
To this end, we prove the following properties.
\begin{claim}\label{claim:whatever}\leavevmode
\begin{enumerate}
\item\label{item:concavity} $g$ is concave under swaps.
\item\label{item:integralEvaluation} If $z$ is the incidence vector of a spanning tree in $G_0$, and $H$ is the multigraph obtained from that spanning tree by replacing every edge $e_W$ by the walk $W$, then $g(z)=c(E[H])+\pi(V\setminus V[H])$.
\item\label{item:crazyBound} We have
$$
g(z_0)\leq \sum_{T \in \mathcal{T}} \mu_T \cdot \left( c(\bb{T}) + \frac32 \cdot c(\lb{T}) \right) + \sum_{v\in V\setminus V_1}\pi_v \exp\left(-\frac{3y_v}{4}\right)\enspace.
$$
\end{enumerate}
\end{claim}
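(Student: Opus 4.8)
The plan is to first verify the three parts of \cref{claim:whatever} and then feed $z_0$ together with the function $g$ into the deterministic pipage rounding procedure of \cref{thm:det_pipage}. For part~\ref{item:concavity} of \cref{claim:whatever}, note that the linear term $\sum_{W\in\walks}c(E[W])\,z_{e_W}$ is affine, hence concave under swaps, so it suffices to prove concavity under swaps of each product $p_v(z)\coloneqq\prod_{W\in\walks\colon v\in W}(1-z_{e_W})$ for $v\in V\setminus V_1$; since $\pi_v\ge0$ this carries over to $g$. Each variable $z_{e_W}$ occurs in $p_v$ at most once (the $e_W$ are distinct edges of the multigraph $F$), so $p_v$ is multilinear, and restricting it to a line $z+t(e_i-e_j)$ yields a polynomial in $t$ of degree at most $2$; when both $e_i$ and $e_j$ appear as $e_W$ for walks through $v$, the coefficient of $t^2$ equals $-\prod_{W''}(1-z_{e_{W''}})\in[-1,0]$, the product ranging over the walks through $v$ other than those two. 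In all other cases the restriction is affine in $t$. Either way it is concave, proving part~\ref{item:concavity}.

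For part~\ref{item:integralEvaluation}, evaluate $g$ at $z=\chi^{F_1}$ for a spanning tree $F_1\subseteq F$: the linear term becomes $\sum_{W\colon e_W\in F_1}c(E[W])$, which is exactly $c(E[H])$ by the definition of $H$ as the edge-multiset union of the walks $W$ with $e_W\in F_1$. In the product term, $p_v(\chi^{F_1})\in\{0,1\}$ and equals $1$ iff $e_W\notin F_1$ for every walk $W$ through $v$, i.e.\ iff $v$ is covered by no selected walk, i.e.\ iff $v\notin V[H]$; summing $\pi_v$ over such $v\in V\setminus V_1$ (the vertices of $V_1$ are always in $V[H]$, so would contribute nothing in any case) gives $\pi(V\setminus V[H])$. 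For part~\ref{item:crazyBound}, the linear term of $g(z_0)$ equals $\sum_{T\in\mathcal{T}}\big(\tfrac34\mu_T\,c(E[W_1(T)])+\tfrac14\mu_T\,c(E[W_2(T)])\big)$, and substituting $c(E[W_1(T)])=c(\bb{T})+2c(\lb{T})$ and $c(E[W_2(T)])=c(\bb{T})$ turns this into $\sum_{T\in\mathcal{T}}\mu_T\big(c(\bb{T})+\tfrac32 c(\lb{T})\big)$, exactly as in \eqref{eq:exp_length}. For the product term, fix $v\in V\setminus V_1$; since $V[W_1(T)]=V[T]$, each walk $W_1(T)$ with $v\in V[T]$ passes through $v$, so discarding all other factors (each in $[0,1]$) gives $p_v(z_0)\le\prod_{T\colon v\in V[T]}\big(1-\tfrac34\mu_T\big)\le\exp\!\big(-\tfrac34\sum_{T\colon v\in V[T]}\mu_T\big)=\exp\!\big(-\tfrac{3y_v}{4}\big)$, using $1-a\le e^{-a}$ and \cref{lem:solution_decomposition}\,\ref{item:incident_trees}; multiplying by $\pi_v\ge0$ and summing reproduces \eqref{eq:exp_penalty} and finishes the claim.

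Given \cref{claim:whatever}, \cref{lem:deterministic_walks} follows immediately: $g$ is concave under swaps by part~\ref{item:concavity}, so \cref{thm:det_pipage} applied to $z_0\in\PST(G_0)$ returns an extreme point $\hat z$ of $\PST(G_0)$ with $g(\hat z)\le g(z_0)$; extreme points of $\PST(G_0)$ are incidence vectors of spanning trees, so letting $H$ be the multigraph obtained from that spanning tree by replacing each edge $e_W$ with the walk $W$, part~\ref{item:integralEvaluation} gives $c(E[H])+\pi(V\setminus V[H])=g(\hat z)\le g(z_0)$, which part~\ref{item:crazyBound} bounds by the quantity in \cref{lem:deterministic_walks}\,\ref{item:det_H_bound}. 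That $H$ is connected, spans $V_1$, and has even degree at every vertex of $V\setminus V_1$ follows exactly as in the proof of \cref{lem:walk_sampling}: substituting a walk for each edge of a spanning tree of $G_0$ preserves connectivity and the spanned vertex set $V_1$, and each walk $W_j(T)$ is an $s$-$t$ walk whose only vertices in $V_1$ are the two anchors $s,t$ of $T$, hence it adds an even amount to the degree of every vertex outside $V_1$. I expect part~\ref{item:concavity} to be the only step with genuine content; parts~\ref{item:integralEvaluation} and~\ref{item:crazyBound}, together with the structural properties of $H$, simply retrace the computations already carried out in the randomized proof of \cref{lem:walk_sampling}.
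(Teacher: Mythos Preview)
Your proof is correct. Parts~\ref{item:integralEvaluation} and~\ref{item:crazyBound} match the paper's argument essentially verbatim, and the concluding derivation of \cref{lem:deterministic_walks} from the claim is the same as well.

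The only genuine difference is in part~\ref{item:concavity}. The paper proves concavity under swaps by recognizing $\bar g(z)=\sum_{v\in V\setminus V_1}\pi_v\,p_v(z)$ as the multilinear extension of the supermodular set function $f(S)=\pi(V\setminus V[S])$ and then invoking the general fact (from \cite{calinescu_2011_maximizing}) that multilinear extensions of supermodular functions are concave along directions $e_i-e_j$. You instead argue directly: each $p_v$ is multilinear, so its restriction to a swap line is a quadratic in $t$, and when both swap coordinates correspond to walks through $v$ the $t^2$-coefficient is $-\prod_{W''}(1-z_{e_{W''}})\le 0$. Your route is more elementary and self-contained, avoiding the detour through submodularity; the paper's route situates the computation in a standard framework and would scale more cleanly if one ever needed a less product-like penalty surrogate. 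Both are perfectly valid here.
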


Clearly, \cref{lem:deterministic_walks} immediately follows from the claim: \cref{item:concavity} allows for applying \cref{thm:det_pipage}, while \cref{item:integralEvaluation,item:crazyBound} show that the graph $H$ satisfies the bound in \cref{item:det_H_bound} of \cref{lem:deterministic_walks}.
Moreover, $H$ is clearly connected, spans $V_1$, and has even degrees in $V\setminus V_1$ because it was obtained from a spanning tree on $V_1$ through replacing edges by walks connecting the same endpoints.
Consequently, we are left with proving the claim.

\smallskip\noindent{\itshape Proof of \cref{item:concavity}.}
The first sum in the definition of $g(z)$ is linear in $z$, hence it suffices to show that $\bar g\colon \PST(G_0)\to\mathbb{R}$ defined by
\[
\bar g(z) \coloneqq \sum_{v \in V \setminus V_1} \pi_v \prod_{W \in \walks\colon v \in W} (1- z_{e_W})
\]
is concave under swaps.
We show that $\bar g$ is in fact equal to (a restriction of) the multilinear extension of a supermodular set function $f\colon 2^{\walks}\to\mathbb{R}$, which readily implies that $\bar g$ is concave under swaps  \cite{calinescu_2011_maximizing}.
For $S \subseteq \walks$, let $V[S]$ denote the union of all vertices covered by the walks in $S$, and define the set function $f\colon 2^{\walks}\to\mathbb{R}$ by
\[
f(S) \coloneqq \pi(V \setminus V[S]) = \sum_{v\in V} \pi_v \cdot\left(1-\mathds{1}_{v\in V[S]}\right)\enspace.
\]
Supermodularity follows immediately from the sum representation above, because the indicator function $\mathds{1}_{v\in V[S]}$ is clearly submodular.
By definition, the multilinear extension $F\colon [0,1]^{\walks} \rightarrow \mathbb{R}$ of $f$ is given by
\[
	F(z) = \sum_{S \subseteq \walks} f(S) \prod_{W \in S} z_W \prod_{W \in \walks \setminus S} (1- z_W) \enspace .
\]
It remains to show that $F(z)=\bar g(z)$. Indeed,
\begin{align*}
F(z)
&=
\sum_{S \subseteq \walks} \pi(V \setminus V[S]) \prod_{W \in S\vphantom{\setminus}} z_W \prod_{W \in \walks \setminus S} (1- z_W) \\
&= \sum_{v \in V \setminus V_1} \pi_v \sum_{S \subseteq \{W \in \walks\colon v \notin W\}} \prod_{W \in S\vphantom{\setminus}} z_W \prod_{W \in \walks \setminus S} (1- z_W) \\
&= \sum_{v \in V \setminus V_1} \pi_v \prod_{W \in \walks\colon v \in W\vphantom{\setminus}} (1- z_W)\underbrace{\sum_{S \subseteq \{W \in \walks\colon v \notin W\}} \prod_{W \in S\vphantom{\setminus}} z_W \prod_{W \in  \{W \in \walks\colon v \notin W\} \setminus S} (1- z_W)}_{=1}
= \bar g(z) \enspace.
\end{align*}

\smallskip\noindent{\itshape Proof of \cref{item:integralEvaluation}.}
If $z$ is integral and corresponds to a spanning tree $T$ of $G_0$, we may write
$$
g(z) = \sum_{e_W\in T}c(E[W]) + \sum_{v\in V\setminus V_1}\pi_v\prod_{e_W\in E[G_0]\colon v\in W} \mathds{1}_{e_W\notin E[T]} \enspace.
$$
The first term equals $c(E[H])$ by definition of $H$.
In the second term, the sum has nonzero terms only for $v\in V\setminus V_1$ that are not covered by any walk $W$ for all $e_W\in E[T]$, i.e., precisely for the vertices not covered by $H$.
As $H$ always covers $V_1$, the second term thus equals $\pi(V\setminus V[H])$, as desired.

\smallskip\noindent{\itshape Proof of \cref{item:crazyBound}.}
By definition of the walks in $\walks$, we have
\begin{align*}
\sum_{W \in \walks} c(E[W]) z_{0}(e_W) &= \sum_{T\in\mathcal{T}} \left( c(E[W_1(T)])\cdot\frac34\mu_T + c(E[W_2(T)])\cdot\frac14 \mu_T\right) \\
&= \sum_{T \in \mathcal{T}} \mu_T \cdot \left( c(\bb{T}) + \frac32 \cdot c(\lb{T}) \right)\enspace,
\end{align*}
and
\begin{align*}
\sum_{v \in V \setminus V_1} \pi_v \prod_{W \in \walks\colon v \in W} (1- z_{0}(e_W))
&\leq \sum_{v \in V \setminus V_1} \pi_v \prod_{T \in \mathcal{T}\colon v \in V[T]} \left(1- \frac34\mu_T\right)\\
&\leq \sum_{v \in V \setminus V_1} \pi_v \exp\Bigg(-\frac34\sum_{T \in \mathcal{T}\colon r\in V[T]}\mu_T\Bigg) = \sum_{v \in V \setminus V_1} \pi_v \exp\left(-\frac34y_v\right)\enspace,
\end{align*}
which together gives the claimed relation, finishes the proof of \cref{claim:whatever}, and thus also of \cref{lem:deterministic_walks}.
\end{proof}

\subsection{Deterministic selection of a threshold}\label{sec:derandomizeGamma}

In the analysis of \cref{alg:treeSampling} in \cref{thm:randomizedAnalysis}, we also exploited a random choice of the threshold parameter $\gamma$.
We now show that if we start from an optimal solution $(x^*,y^*)$ of the \ref{eq:rel_PCTSP}, it suffices to try all values in the set $\{y^*_v\colon v\in V\}$ as thresholds.
\begin{lemma}\label{lem:choiceOfGamma}
	Let $(x^*, y^*)$ be an optimal solution of the \ref{eq:rel_PCTSP}, and let $\alpha>0$.
	If there exists $\gamma\in(0,1]$ such that
	\begin{equation}\label{eq:goodGamma}
		\frac{3}{2\gamma} \cdot c^\top x^* + \sum_{v \in V\colon y^*_v < \gamma} \exp\left({-\frac{3}{4\gamma}y_v^*}\right) \cdot \pi_v
		\leq
		\alpha\cdot\left( c^\top x^* + \pi^\top (1-y^*)\right)\enspace,
	\end{equation}
	then there exists a vertex $v\in V$ such that the above also holds for $\gamma = y_v^*$.
\end{lemma}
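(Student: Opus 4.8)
The plan is to view the left-hand side of \eqref{eq:goodGamma} as a function of the continuous threshold $\gamma$ and show that, on each interval where no value $y_v^*$ lies strictly between the interval's endpoints, this function is minimized at one of the endpoints; since the endpoints can be taken to be values in $\{y_v^*\colon v\in V\}\cup\{1\}$, this will reduce the search to that finite set. Concretely, sort the values $\{y_v^*\colon v\in V\}\cap(0,1)$ together with $1$, and for a candidate $\gamma$ lying in some half-open interval $(a,b]$ between two consecutive such values, note that the set $\{v\in V\colon y_v^*<\gamma\}$ is constant for all $\gamma$ in that interval. Denote this fixed set by $S$, and define
$$
\Phi(\gamma) \coloneqq \frac{3}{2\gamma}\cdot c^\top x^* + \sum_{v\in S}\exp\!\left(-\frac{3 y_v^*}{4\gamma}\right)\pi_v \enspace.
$$
The goal is to show $\Phi$ attains its minimum over $(a,b]$ at $\gamma=b$ (or, after handling the left endpoint, that $\min\{\Phi(a^+),\Phi(b)\}$ is no larger than $\Phi(\gamma)$ for interior $\gamma$), so that if \eqref{eq:goodGamma} holds for some interior $\gamma$, it holds for an endpoint, which is of the form $y_v^*$ or $1$.

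The key step is a convexity argument in the variable $t \coloneqq 1/\gamma$. The term $\tfrac{3}{2\gamma}c^\top x^* = \tfrac32 t\, c^\top x^*$ is linear in $t$, and each term $\exp(-\tfrac34 y_v^* t)$ is convex in $t$; hence $\Phi$, as a function of $t$ over the corresponding closed interval $[1/b,1/a)$, is convex. A convex function on an interval attains its maximum at an endpoint, but here we want the opposite: we want to push an \emph{interior} feasible point to an endpoint. The correct observation is that since the right-hand side $\alpha(c^\top x^*+\pi^\top(1-y^*))$ is a constant, and $\Phi$ is convex in $t$, the set of $t$ satisfying $\Phi(t)\le \alpha(\cdots)$ is an interval; therefore if some interior $t$ satisfies it, then at least one of the two endpoints of $[1/b,1/a)$ also satisfies it — unless the feasible sub-interval is contained strictly inside, which cannot happen for a convex function (its sublevel sets are intervals, and an interval containing an interior point of $[1/b,1/a)$ but neither endpoint would force $\Phi$ to exceed the bound at both endpoints while a convex function cannot dip below its endpoint values in the middle only if… — more carefully: convexity gives $\Phi(t)\le\max\{\Phi(1/b),\Phi(1/a)\}$ is \emph{false}; rather $\Phi(t)\ge$ the chord is false too). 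Let me restate: for a convex $\Phi$, $\Phi(t)$ for interior $t$ is at most the larger of the two endpoint values is \emph{wrong}; convex functions can exceed endpoint values only at endpoints — no. The right fact: a convex function on $[p,q]$ satisfies $\Phi(t)\le\max\{\Phi(p),\Phi(q)\}$ for all $t\in[p,q]$. Hence if $\Phi(t)\le K$ fails at both endpoints, i.e. $\Phi(p)>K$ and $\Phi(q)>K$, then $\max\{\Phi(p),\Phi(q)\}>K$, which does \emph{not} immediately bound $\Phi(t)$; but convexity also gives $\Phi(t)\ge$ linear interpolation only from below at... I need the clean statement: for convex $\Phi$, the superlevel set $\{\Phi>K\}$ intersected with $[p,q]$ is $[p,q]$ minus an interval, i.e. a union of at most two sub-intervals containing the endpoints. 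So the sublevel set $\{\Phi\le K\}\cap[p,q]$ is a (possibly empty) interval in the middle. That is the obstacle: the feasible threshold could be genuinely interior.

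To resolve this, I would not work interval-by-interval in isolation but globally: consider the full range $\gamma\in(0,1]$. The map $\gamma\mapsto$ LHS of \eqref{eq:goodGamma} is piecewise of the convex-in-$t$ form described above, but across a breakpoint $\gamma=y_v^*$ the function is continuous and one term $\exp(-\tfrac34 y_v^* t)\pi_v$ gets added as $\gamma$ increases past $y_v^*$; crucially this added term is nonnegative, so the function is \emph{lower semicontinuous and its value can only jump downward as $\gamma$ decreases across a breakpoint} — wait, it is actually continuous since at $\gamma=y_v^*$ we have $\exp(-\tfrac34 y_v^*/\gamma)=\exp(-\tfrac34)$ is the value added, and whether $v$ is in $S$ or not changes by exactly that amount, but at the threshold $\gamma=y_v^*$ itself $v\notin\{v\colon y_v^*<\gamma\}$, so the function is right-continuous with a possible downward jump when passing to larger $\gamma$... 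Hmm, let me just say: I would argue that it suffices to check, for the given interior $\gamma^*$, the two enclosing breakpoints $a=\max\{y_v^*\colon y_v^*<\gamma^*\}$ and $b=\min\{y_v^*\colon y_v^*\ge\gamma^*\}$ (with $b=1$ if the set is empty, using $y_r^*=1$), and to show that \emph{at least one} of $\gamma=a$ or $\gamma=b$ satisfies \eqref{eq:goodGamma}. Since on $(a,b]$ the set $S$ is fixed and $\Phi$ is convex in $t=1/\gamma$, and since the right-hand side of \eqref{eq:goodGamma} is a constant $K$, the set of $\gamma\in(a,b]$ with $\Phi(\gamma)\le K$ is of the form $(a,b]\cap$ (interval), and I must rule out that it is a proper interior sub-interval. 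The main obstacle is exactly this: I expect one needs an additional structural input — namely the \emph{optimality} of $(x^*,y^*)$, which is hypothesized in the lemma and not used yet in the sketch above. I would use optimality to compare against $\gamma=1$: plugging $\gamma=1$ into \eqref{eq:goodGamma} gives $\tfrac32 c^\top x^* \le \alpha(c^\top x^* + \pi^\top(1-y^*))$ iff $(\tfrac32-\alpha)c^\top x^*\le \alpha\pi^\top(1-y^*)$, and since by optimality scaling tricks (\cref{lem:PCTSP_sol_properties}) one can relate $c^\top x^*$ to $\pi^\top(1-y^*)$, this should pin down the endpoint behavior. Thus the final structure is: (i) reduce to the interval $(a,b]$ between consecutive values $y_v^*$; (ii) use convexity in $1/\gamma$ to argue $\Phi$ has no interior strict minimum, so $\min_{(a,b]}\Phi = \min\{\lim_{\gamma\downarrow a}\Phi(\gamma),\Phi(b)\}$; (iii) observe $\lim_{\gamma\downarrow a}\Phi(\gamma)\ge \Phi(a)$ because passing from $\gamma>a$ to $\gamma=a$ only \emph{removes} the nonnegative term for the vertex $v$ with $y_v^*=a$ from the sum while the prefactor $\tfrac{3}{2\gamma}$ decreases — hence $\Phi(a)\le\lim_{\gamma\downarrow a}\Phi(\gamma)$, so if \eqref{eq:goodGamma} held for interior $\gamma$ it holds at $\gamma\in\{a,b\}$, both of which lie in $\{y_v^*\colon v\in V\}\cup\{1\}=\{y_v^*\colon v\in V\}$ (using $y_r^*=1$). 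The main obstacle is carefully getting step (ii)–(iii) right so that one of the two endpoints, rather than a genuinely interior point, works — which hinges on the monotone, convex structure in $t=1/\gamma$ together with the nonnegativity of the penalty terms being dropped at the left endpoint.
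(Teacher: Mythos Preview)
Your setup is right: on each interval $(a,b]$ between consecutive values of $\{y_v^*\}$ the index set $S=\{v:y_v^*<\gamma\}$ is constant, and $\Phi(\gamma)=\tfrac{3}{2\gamma}c^\top x^*+\sum_{v\in S}\exp(-\tfrac{3y_v^*}{4\gamma})\pi_v$ is convex in $t=1/\gamma$. But step~(ii) of your final plan is simply false: convex functions \emph{can} have a strict interior minimum, so the sublevel set $\{\Phi\le K\}\cap(a,b]$ may well be a proper interior subinterval touching neither endpoint. (You essentially discover this yourself in the middle of the write-up, but then assert the opposite in the summary.) Step~(iii) is also shaky as written---as $\gamma\downarrow a$ the factor $\tfrac{3}{2\gamma}$ \emph{increases}, not decreases; the inequality $\lim_{\gamma\downarrow a}\Phi(\gamma)\ge\Phi(a)$ does happen to hold, but only because in the limit the first terms coincide and one nonnegative penalty term drops out, not for the reason you give. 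In any case, (iii) alone cannot rescue the argument once (ii) fails.

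The missing ingredient is exactly the one you flagged and then did not use: optimality of $(x^*,y^*)$. The paper does not argue via endpoints-of-an-interval at all. It takes $y\coloneqq\min\{y_v^*:y_v^*\ge\gamma\}$ and shows directly that $\Phi(y)\le\Phi(\gamma)$. Assuming the contrary and using $\exp(-a)-\exp(-b)\le b-a$ for $b\ge a\ge 0$, one gets
\[
\tfrac{1}{\gamma}\,c^\top x^*+\pi^\top(1-y^*_{1/\gamma}) \;<\; \tfrac{1}{y}\,c^\top x^*+\pi^\top(1-y^*_{1/y})\enspace,
\]
i.e., the convex function $f(\lambda)=\lambda\,c^\top x^*+\pi^\top(1-y^*_\lambda)$ satisfies $f(1/\gamma)<f(1/y)$ with $1/\gamma>1/y\ge 1$. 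Convexity then forces $f$ to be strictly decreasing at $\lambda=1$, so some $(x_{\lambda_0},y^*_{\lambda_0})$ from \cref{lem:PCTSP_sol_properties} beats $(x^*,y^*)$, contradicting optimality. Thus the crux is not that $\Phi$ has no interior minimum---it might---but that optimality forces the \emph{right} endpoint $y$ of the interval to already satisfy $\Phi(y)\le\Phi(\gamma)$. Your plan would go through if you replaced (ii) by this optimality-based comparison.
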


\begin{proof}[Proof of \cref{lem:choiceOfGamma}]
Fix $\gamma\in(0,1]$ such that \eqref{eq:goodGamma} holds.
We claim that with $y\coloneqq \min\{y_v^*\colon y_v^*\geq \gamma\}$, we have
\begin{equation}\label{eq:gamma_monotonicity}
\frac{3}{2y} \cdot c^\top x^* + \sum_{v \in V\colon y^*_v < y} \exp\left({-\frac{3}{4y}y_v^*}\right) \cdot \pi_v
\leq
\frac{3}{2\gamma} \cdot c^\top x^* + \sum_{v \in V\colon y^*_v < \gamma} \exp\left({-\frac{3}{4\gamma}y_v^*}\right) \cdot \pi_v\enspace,
\end{equation}
which obviously implies the lemma.
For the sake of deriving a contradiction, assume that \eqref{eq:gamma_monotonicity} does not hold, and note that because the sums in \eqref{eq:gamma_monotonicity} are over the same subset of vertices by definition of $y$, this assumption can be rewritten as
$$
\frac{3}{2}\left(\frac1\gamma \cdot c^\top x^* - \frac{1}{y} \cdot c^\top x^*\right)
<
\sum_{v \in V\colon y^*_v < y} \left(\exp\left({-\frac{3y_v^*}{4y}}\right)-\exp\left({-\frac{3y_v^*}{4\gamma}}\right)\right)\cdot\pi_v\enspace.
$$
As $x\mapsto \exp(-x) + x$ is increasing for $x\geq 0$, we have $\exp(-a)-\exp(-b)\leq b-a$ whenever $b\geq a\geq0$, hence
$$
\frac{3}{2}\left(\frac1\gamma \cdot c^\top x^* - \frac{1}{y} \cdot c^\top x^*\right)
<
\sum_{v \in V\colon y^*_v < y} \frac34\left(\frac{y_v^*}{\gamma} - \frac{y_v^*}{y} \right) \pi_v
=
\frac34\left(\pi^\top (1-y^*_{\sfrac1y})-\pi^\top(1-y^*_{\sfrac1\gamma})\right)\enspace,
$$
where we use the notation $y_\lambda^*$ of \cref{lem:PCTSP_sol_properties} for $\lambda=\sfrac1y$ and $\lambda=\sfrac1\gamma$.
Through scaling down the non-negative left-hand side by $\frac12$ and rearranging terms, this further implies
\begin{align}\label{eq:non-increasing}
\frac1\gamma \cdot c^\top x^* + \pi^\top(1-y^*_{\sfrac1\gamma})
<
\frac1y \cdot c^\top x^* + \pi^\top (1-y^*_{\sfrac1y})\enspace.
\end{align}
Now consider the function $f\colon\mathbb{R}_{\geq 1}\to\mathbb{R}$ defined by
$
\lambda \mapsto \lambda \cdot c^\top x^* + \pi^\top (1-y^*_\lambda)
$.
By definition, $f$ is convex and continuous.
Moreover, \eqref{eq:non-increasing} shows that $f$ is not increasing.
Thus, $f$ must be decreasing at the left endpoint of its domain, i.e., at $\lambda=1$, implying that there exists $\lambda_0 >1$ such that $f(\lambda_0)<f(1)$.
But $f(\lambda_0)$ is an upper bound on the objective value of the solution $(x_{\lambda_0}, y_{\lambda_0})$ of the \ref{eq:rel_PCTSP} that can be constructed through \cref{lem:PCTSP_sol_properties}, while $f(1)$ is the objective value of $(x^*, y^*)$, hence $f(\lambda_0)<f(1)$ contradicts optimality of $(x^*, y^*)$.
This finishes the proof of \cref{lem:choiceOfGamma}.
\end{proof}
\section[Proof of Theorem 10]{Proof of \cref{thm:randomizedAnalysis}}\label{sec:randomized_analysis}

\begin{proof}[Proof of \cref{thm:randomizedAnalysis}]
    The concrete choice of $b$ is left to the end of this proof; with foresight, we only require $b\in[1-e^{-\smash{\sfrac34}}, 1)$ for now.
	Let $I_b\coloneqq \int_{b}^{1} \exp(-\sfrac b\gamma)\dd\gamma$.
    Then $f(\gamma)=\sfrac{\exp(-\sfrac b\gamma)}{I_b}$.
	By \cref{thm:treeSamplingAnalysis}, the expected edge cost of the returned cycle $C=(V_C,E_C)$ is at most
	\begin{align}\label{eq:tourCost2}
		\Exp[c(E_C)]
		\le  \frac3{2}\int_b^1 \frac{f(\gamma)}{\gamma} \dd \gamma \cdot c^\top x^* =
		\frac3{2I_b}\int_b^1 \frac{\exp(-\sfrac b\gamma)}{\gamma} \dd \gamma \cdot c^\top x^*
		\enspace.
	\end{align}
	Moreover, the expected penalty cost can be bounded by
	\begin{align*}
		\Exp[\pi(V\setminus V_C)] & \le \sum_{v \in V\colon y_v^* < 1} \pi_v (1-y^*_v) \cdot \frac1{1-y_v^*}\int_{\max\{y_v^*, b\}}^{1} {\exp\left(-\frac{3y_v^*}{4\gamma}\right)}f(\gamma) \dd\gamma \\
&= \sum_{v \in V\colon y_v^* < 1} \pi_v (1-y^*_v) \cdot \underbrace{
        \frac{1}{I_b(1-y_v^*)}\cdot\int_{\max\{y_v^*, b\}}^{1} \exp\left(-\frac{3y_v^*+4b}{4\gamma}\right) \dd\gamma}_{\eqqcolon f_b(y_v^*)} \enspace.
	\end{align*}
    We investigate for which $y \in [0,1)$ the function $y\mapsto f_b(y)$ attains its maximum value.
    If $y\in[0,b]$, convexity of the map $y\mapsto g_\gamma(y)\coloneqq \frac1{1-y}\exp\big({-\frac{3y}{4\gamma}}\big)$ implies that
	$g_\gamma(y)\leq \max\{g_\gamma(0), g_{\gamma}(b)\}$.
	Since $b\geq 1-\smash{e^{-\sfrac34}}$ and because $\gamma\geq b$, the latter maximum is equal to $g_\gamma(b)$, and we can conclude that $f_b(y)$ attains its maximum for some $y\in[b,1)$, where
    $$
    f_b(y) = \frac{1}{I_b(1-y)}\cdot\int_{y}^{1} \exp\left(-\frac{3y+4b}{4\gamma}\right) \dd\gamma\enspace.
    $$
    Computational experiments suggest that $f_b(y)$ attains its maximum at $y=b$, but it seems hard to prove this analytically.
    For this reason, we aim for obtaining good bounds on the maximum value in what follows.
Observe that the integrand $h_{b,y} (\gamma) \coloneqq \exp\big({-\frac{3y+4b}{4\gamma}}\big)$ is concave over $\gamma\in[y,1]$.
    Hence,
    $$
    \int_y^1 h_{b,y}(\gamma)\dd\gamma = \int_{y}^{\frac{y+1}{2}} h_{b,y}(\gamma)\dd\gamma + \int_{\frac{y+1}{2}}^1 h_{b,y}(\gamma)\dd\gamma \leq \frac{1-y}{2} \cdot \Big(h_{b,y}\big(\textstyle\frac{3y+1}{4}\big) + h_{b,y}\big(\textstyle\frac{1+3y}{4}\big)\Big)\enspace.
    $$
    Altogether, we therefore get that
	\begin{align*}
		\max_{y \in [0,1)} f_b(y) &= \max_{y \in [b,1)} f_b(y) \\
		&= \max_{y \in [b,1)} \frac{1}{2 I_b} \Big(h_{b,y}\big(\textstyle\frac{3y+1}{4}\big) + h_{b,y}\big(\textstyle\frac{1+3y}{4}\big)\Big) \\
		&= \frac{1}{2 I_b}\cdot \max_{y \in [b,1)} \Bigg( \underbrace{\exp\left( -\frac{3y+4b}{3y+1}\right) + \exp\left( -\frac{3y+4b}{y+3}\right)}_{\eqqcolon \theta_b(y)} \Bigg) \enspace.
	\end{align*}
	Together with \eqref{eq:tourCost2}, this implies that we get in polynomial time a \PCTSP{} solution of expected value at most $\alpha\cdot\left(c^\top x^*+ \pi^\top(1-y^*)\right)$, where
	\[
	\alpha \coloneqq \frac{1}{2I_b} \cdot \max \left\{ 3\cdot\int_b^1 \frac{\exp(-\sfrac b\gamma)}{\gamma} \dd \gamma,\
	\max_{y \in [b,1)} \theta_b(y) \right\} \enspace.
	\]
    We numerically evaluate the latter expression and obtain $\alpha < 1.774$ for $b=0.6945$, as claimed in \cref{thm:randomizedAnalysis}.
    More concretely, to bound $\max_{y\in [b,1)}\theta_b(y)$, we note that the derivative of $\theta_b(y)$ on the interval $[b,1]$ can be bounded by a constant, hence the maximum can be approximated up to a prescribed error by evaluating $\theta_b(y)$ on a sufficiently fine discretization of possible values of $y$.
\end{proof}

\subsection*{Acknowledgments}

The authors are grateful to Jens Vygen for fruitful discussions, and also thank Vera Traub and Rico Zenklusen for their valuable input.
\begingroup
\setlength{\emergencystretch}{3em}
\printbibliography
\endgroup

\end{document}